\documentclass[11pt,a4paper]{amsart}
\usepackage[margin=25mm]{geometry}
\usepackage[numbers]{natbib}
\usepackage{hyperref}

\usepackage{amssymb,amsmath}
\usepackage{amsthm}
\usepackage{bbm, bm}
\usepackage{stmaryrd}
\usepackage{color}
\usepackage{graphicx}
\usepackage{caption}
\usepackage{float}
\usepackage{wrapfig}
\usepackage{subcaption}
\usepackage{amsaddr}
\usepackage{multicol}
\usepackage{enumerate}
\usepackage{lineno}

\usepackage{abstract}
\usepackage{orcidlink}

\input xy
\xyoption{all} 

\numberwithin{equation}{section}

\newtheorem{theorem}{Theorem}[section]
\newtheorem{corollary}{Corollary}[section]
\newtheorem{lemma}{Lemma}[section]
\newtheorem{proposition}{Proposition}[section]

\newtheorem{observation}{Observation}[section]
\theoremstyle{definition}
\newtheorem{definition}{Definition}[section]
\newtheorem{example}{Example}[section]

\newenvironment{warning}[1][Warning.]{\begin{trivlist}
\item[\hskip \labelsep {\bfseries #1}]}{\end{trivlist}}

\newenvironment{remark}[1][Remark.]{\begin{trivlist}
\item[\hskip \labelsep {\bfseries #1}]  }{ \end{trivlist}}

\newcommand{\Dleft}{[\hspace{-1.5pt}[}
\newcommand{\Dright}{]\hspace{-1.5pt}]}
\newcommand{\SN}[1]{\Dleft #1 \Dright}

\newcommand{\p}{\mbox{\boldmath$\rho$}}

\newcommand{\InHom}{\mbox{$\underline{\Hom}$}}
\newcommand{\InLin}{\mbox{$\underline{\Lin}$}}

\newcommand{\rmd}{\textnormal{d}}

\DeclareMathOperator{\Vect}{Vect}

\DeclareMathOperator{\Span}{Span}

\DeclareMathOperator{\Der}{Der}

\DeclareMathOperator{\Hom}{Hom}

\DeclareMathOperator{\Lin}{Lin}

\newcommand{\catname}[1]{\textnormal{\texttt{#1}}}

\font\black=cmbx10 \font\sblack=cmbx7 \font\ssblack=cmbx5 \font\blackital=cmmib10  \skewchar\blackital='177
\font\sblackital=cmmib7 \skewchar\sblackital='177 \font\ssblackital=cmmib5 \skewchar\ssblackital='177
\font\sanss=cmss10 \font\ssanss=cmss8 
\font\sssanss=cmss8 scaled 600 \font\blackboard=msbm10 \font\sblackboard=msbm7 \font\ssblackboard=msbm5
\font\caligr=eusm10 \font\scaligr=eusm7 \font\sscaligr=eusm5  \font\fraktur=eufm10
\font\sfraktur=eufm7 \font\ssfraktur=eufm5 
\font\bsymb=cmsy10 scaled\magstep2
\def\all#1{\setbox0=\hbox{\lower1.5pt\hbox{\bsymb
       \char"38}}\setbox1=\hbox{$_{#1}$} \box0\lower2pt\box1\;}
\def\exi#1{\setbox0=\hbox{\lower1.5pt\hbox{\bsymb \char"39}}
       \setbox1=\hbox{$_{#1}$} \box0\lower2pt\box1\;}

\def\tx#1{{\fam0\relax#1}}

\newfam\bifam
\textfont\bifam=\blackital \scriptfont\bifam=\sblackital \scriptscriptfont\bifam=\ssblackital

\newfam\blfam
\textfont\blfam=\black \scriptfont\blfam=\sblack \scriptscriptfont\blfam=\ssblack

\newfam\bbfam
\textfont\bbfam=\blackboard \scriptfont\bbfam=\sblackboard \scriptscriptfont\bbfam=\ssblackboard

\newfam\ssfam
\textfont\ssfam=\sanss \scriptfont\ssfam=\ssanss \scriptscriptfont\ssfam=\sssanss
\def\sss#1{{\fam\ssfam\relax#1}}

\newfam\clfam
\textfont\clfam=\caligr \scriptfont\clfam=\scaligr \scriptscriptfont\clfam=\sscaligr

\newfam\frfam
\textfont\frfam=\fraktur \scriptfont\frfam=\sfraktur \scriptscriptfont\frfam=\ssfraktur

\def\hpb#1{\setbox0=\hbox{${#1}$}
    \copy0 \kern-\wd0 \kern.2pt \box0}
\def\vpb#1{\setbox0=\hbox{${#1}$}
    \copy0 \kern-\wd0 \raise.08pt \box0}

\def\pmb#1{\setbox0\hbox{${#1}$} \copy0 \kern-\wd0 \kern.2pt \box0}
\def\pmbb#1{\setbox0\hbox{${#1}$} \copy0 \kern-\wd0
      \kern.2pt \copy0 \kern-\wd0 \kern.2pt \box0}
\def\pmbbb#1{\setbox0\hbox{${#1}$} \copy0 \kern-\wd0
      \kern.2pt \copy0 \kern-\wd0 \kern.2pt
    \copy0 \kern-\wd0 \kern.2pt \box0}
\def\pmxb#1{\setbox0\hbox{${#1}$} \copy0 \kern-\wd0
      \kern.2pt \copy0 \kern-\wd0 \kern.2pt
      \copy0 \kern-\wd0 \kern.2pt \copy0 \kern-\wd0 \kern.2pt \box0}
\def\pmxbb#1{\setbox0\hbox{${#1}$} \copy0 \kern-\wd0 \kern.2pt
      \copy0 \kern-\wd0 \kern.2pt
      \copy0 \kern-\wd0 \kern.2pt \copy0 \kern-\wd0 \kern.2pt
      \copy0 \kern-\wd0 \kern.2pt \box0}

\mathchardef\za="710B  
\mathchardef\zb="710C  
\mathchardef\zg="710D  
\mathchardef\zd="710E  
\mathchardef\zve="710F 
\mathchardef\zz="7110  
\mathchardef\zh="7111  
\mathchardef\zvy="7112 
\mathchardef\zi="7113  
\mathchardef\zk="7114  
\mathchardef\zl="7115  
\mathchardef\zm="7116  
\mathchardef\zn="7117  
\mathchardef\zx="7118  
\mathchardef\zp="7119  
\mathchardef\zr="711A  
\mathchardef\zs="711B  
\mathchardef\zt="711C  
\mathchardef\zu="711D  
\mathchardef\zvf="711E 
\mathchardef\zq="711F  
\mathchardef\zc="7120  
\mathchardef\zw="7121  
\mathchardef\ze="7122  
\mathchardef\zy="7123  
\mathchardef\zf="7124  
\mathchardef\zvr="7125 
\mathchardef\zvs="7126 
\mathchardef\zf="7127  
\mathchardef\zG="7000  
\mathchardef\zD="7001  
\mathchardef\zY="7002  
\mathchardef\zL="7003  
\mathchardef\zX="7004  
\mathchardef\zP="7005  
\mathchardef\zS="7006  
\mathchardef\zU="7007  
\mathchardef\zF="7008  
\mathchardef\zW="700A  
\mathchardef\zC="7009  

\newcommand{\be}{\begin{equation}}
\newcommand{\ee}{\end{equation}}

\newcommand{\bea}{\begin{eqnarray}}
\newcommand{\eea}{\end{eqnarray}}
\def\*{{\textstyle *}}
\newcommand{\R}{{\mathbb R}}

\newcommand{\Z}{{\mathbb Z}}

\newcommand{\s}{{\textstyle *}}

\def\Hom{\sss{Hom}}

\def\ul{\underline}

\def\Vect{\sss{Vect}}
\def\Lin{\sss{Lin}}

\def\sT{{\sss T}}

\def\xi{\tx{i}}

\def\s*{{\scriptstyle *}}

\def\cO{\mathcal{O}}

\def\ul{\underline}

\newcommand{\beas}{\begin{eqnarray*}}
\newcommand{\eeas}{\end{eqnarray*}}

\title{Derived Associative Algebras and Cyclic Cohomology of Q-manifolds} 
\author{Andrew James Bruce \,\orcidlink{0000-0001-8197-2263}
 }  
\address{Independent Researcher, Cardiff, United Kingdom }
   \email{andrewjamesbruce@googlemail.com}

   \date{\today}
 
\begin{document}
 \maketitle
\vspace{-20pt}
\begin{abstract}{\noindent Loday's derived product is revisited in the setting of Q-manifolds, i.e., supermanifolds equipped with an odd vector field that `squares to zero'. We interpret the Grassmann odd product as a standard product upon shifting the grading, which implies the existence of a derived (noncommutative) associative $\mathbb{Z}_2$-graded algebra associated with any Q-manifold. We apply Connes' cyclic cohomology to the derived associative algebra, giving a new cohomology on Q-manifolds distinct from the standard cohomology: we refer to this as the derived cyclic cohomology of a Q-manifold. The derived cyclic cocycles are interpreted as classically BRST-invariant functionals within the BV--BFV--BRST formalism or generalised Ruelle--Sullivan currents when applied to regular foliations via their foliation Lie algebroids.   }\\
\noindent {\Small \textbf{Keywords:}~Q-manifolds;~Derived Associative Algebras;~Derived Cyclic Cohomology;~BV--BRST Mechanics}\\
\noindent {\small \textbf{MSC 2020:} \emph{Primary:}~58A50
~\emph{Secondary:}~58C50;~16E45;~17B63;~19D55;~53C12}
\end{abstract}
\tableofcontents
\emph{
I'm very good at integral and differential calculus,\\
I know the scientific names of beings animalculous:\\
In short, in matters vegetable, animal, and mineral,\\
I am the very model of a modern Major-General.}
\smallskip

\begin{flushright}
W.S. Gilbert \& A.S. Sullivan \emph{The Pirates of Penzance}, 1879
\end{flushright}

\section{Introduction and Background}
\subsection{Introduction}
Q-manifolds, that is, supermanifolds equipped with an integrable odd vector field,  are found in many areas of mathematical physics and differential geometry. Specifically, Q-manifolds arise in the theory of Lie algebroids, Poisson geometry, Courant algebroids, $L_\infty$-algebroids, and related settings. Even  the de Rham complex and Cartan calculus have a very natural formulation in terms of a Q-manifold. From the perspective of this paper, Q-manifolds (or their infinite-dimensional generalisations) appear in physics via the BV--BFV--BRST\footnote{\textbf{B}atalin--\textbf{V}ilkovisky, \textbf{B}atalin--\textbf{F}radkin--\textbf{V}ilkovisky, and  \textbf{B}ecchi--\textbf{R}ouet--\textbf{S}tora--\textbf{T}yutin.} formalism, see \cite{Alexandrov:1997,Schwarz:1993}. The philosophy we adopt is that Q-manifolds provide a simplified environment to uncover mathematical aspects of the BV--BRST formalism in field theory. More correctly, we view Q-manifolds through the lens of BRST gauge mechanics by interpreting the homological vector field as the BRST differential; see \cite{Henneaux:1992}, for example.  By restricting attention to finite-dimensional supermanifolds, we avoid the technical issues with field theory.\par 
It must be stressed that BRST symmetry, and thus BRST-cohomology, is vital in the consistent quantisation of gauge theories, general relativity (as an effective theory),  and string theory. In particular, local BRST-cohomology and the descent equations are the foundation of the modern understanding of anomalies; see Bertlmann \cite{Bertlmann:2000} for a review. Q-manifolds are vital in the AKSZ construction of sigma models, see \cite{Alexandrov:1997}; and so fundamental in Kontsevich's work on deformation quantisation of Poisson manifolds, see \cite{Kontsevich:2003}. We also remark that Q-manifolds provide concrete, finite-dimensional (local) models for derived manifolds within derived differential geometry, see \cite{Carchedi:2023}.  In short, Q-manifolds are ubiquitous across geometric approaches to physics and modern differential geometry.  \par 
In this paper, we revisit Loday's derived product/multiplication on a differential graded algebra in the geometric setting of Q-manifolds. Loday's motivation (see \cite{Loday:2001}) was the construction of associative dialgebras in order to build Leibniz--Loday algebras, i.e., `non-skew-symmetric Lie algebras'.  The core idea of this paper is to take the construction of a derived multiplication as a stand-alone notion. The subtlety is that any such derived product $f \star g = \pm Q(f)g$, where $Q$ is a differential (homological vector field),  carries non-zero Grassmann parity. In short, we have an associative odd multiplication. That is, the Grassmann parity of the product of two functions is the sum of Grassmann parities of the functions, plus one. Another immediate complication is that the lack of a unit element for the derived product.   \par 
Odd multiplication is already an exotic notion and seems discordant with standard graded algebra.  We will show that this can be largely remedied by employing a shifted grading which renders the derived product a standard $\Z_2$-graded associative algebra. We stress that this shifted grading is formal in the sense that it is just a relabelling of even and odd functions. We are not constructing a new supermanifold using this shift; we have a bookkeeping device to track the sign factors consistently.  Thus, by employing a shifted grading, the global functions on a Q-manifold come equipped with a \emph{derived associative algebra}, which is a standard associative superalgebra with respect to the shifted grading.  We further stress that the derived associative algebra associated with a Q-manifold is not (super)commutative. Thus, the core ethos of this work is the following: \par 
\smallskip
\noindent \emph{What happens if we treat the derived product on a Q-manifold with the machinery of noncommutative geometry\footnote{See, for example \cite{Connes:1994,Gracia-Bondía:2001,Madore:1999,Manin:1991} for introductions to noncommutative geometry.}? } \par  
\smallskip
We remark that odd forms of multiplication can be found in the setting of Stasheff's $A_\infty$-algebras  (strongly homotopy associative algebras) via suspension/shift. That is, the even binary operation $m_2 : A \otimes A \rightarrow A$ can be shifted to an odd binary operation $b_2: A[1] \otimes A[1] \rightarrow A[1]$. This may be understood in terms of the operadic suspension,  i.e., the operad whose operations have been shifted in homological degree and the appropriate sign factors inserted.  For an introduction to $A_\infty$-algebras, the reader may consult Keller \cite{Keller:2001}. The bar-cobar construction linking dg-algebras and dg-coalgebras similarly requires a suspension/shift. For an introduction to the bar-cobar construction, the reader may consult \cite[Chapter 2]{Loday:2012}.\par    
 We apply Connes' (topological) cyclic cohomology, suitably `superised' following Kastler \cite{Kastler:1988}, to the derived associative algebra associated with any Q-manifold.  In doing so, we obtain a cohomology theory quite independent of the standard cohomology of a Q-manifold: this cohomology theory we refer to as the \emph{derived cyclic cohomology} of a Q-manifold. While the calculation of cyclic cohomology groups for general noncommutative algebras is notoriously difficult, we present some simple examples of derived cyclic zeroth-cohomology groups. The zeroth cohomology is interpreted within the finite-dimensional BV--BRST formalism as equivalence classes of classically BRST-invariant functionals  $\tau : C^\infty(M) \rightarrow \R$. Mathematically, these functionals are Q-invariant in the sense that $\tau \circ Q  =0$. Similarly, the higher cohomology groups are interpreted as equivalence classes of BRST-invariant derived cyclic multi-functionals. We apply this machinery to regular foliations via their foliation Lie algebroid and show that the derived cyclic $0$-cocycles include Ruelle--Sullivan currents (see \cite{Ruelle:1975}). Moreover, we construct derived cyclic $n$-cocycles associated with a regular foliation. \par
\begin{warning}
We emphasise that the descriptor `derived' used throughout this work indicates derived in the sense of Loday, i.e., the structures are built from a standard one and a differential.  It should not be confused with the related notion found within the complementary framework of derived algebraic geometry (see \cite{Gaitsgory:2017} for an introduction).
\end{warning}
\medskip

\noindent \textbf{Key Results:} Amongst other results, we establish the following.
\begin{itemize}
\item \textbf{Theorem \ref{thm:DerCycCoFunct}} states that the derived cyclic cohomology is a covariant functor
$$HC^n_\lambda :\catname{QMan} \rightarrow \catname{SVec}\,,$$
 ($n=0,1, 2, \cdots$) from the category of Q-manifolds to the category of super vector spaces.
\item \textbf{Proposition \ref{prop:CochainsInv}} states that all $b$-closed cochains $\phi^n$  are Q-invariant, or  in physics terminology, BRST-invariant, in the sense that  
$$\phi^n \circ (Q \otimes \cdots \otimes Q) =0\,.$$
\item  \textbf{Theorem \ref{thm:DerCycCoOddLine}} gives the derived cyclic cohomology of the odd line equipped with a non-singular homological vector field 
$$
HC^n_\lambda(\R^{0|1}, Q) = \begin{cases} 
\R^{0|1} = 0 \oplus \R \,, & \text{if~} n \text{~is even} \\ 
0 = 0\oplus 0 \,, & \text{if~} n \text{~is odd} 
\end{cases}\,.
$$
\item \textbf{Proposition \ref{prop:LieAlgH0}} states that for  (pure even) $n$-dimensional Lie algebras 
$$HC^0_\lambda(\Pi \mathfrak{g}, \rmd_{CE}) \cong \mathcal{Z}_\bullet(\mathfrak{g})= \bigoplus_{k=0}^n \mathcal{Z}_k(\mathfrak{g})\,,$$
where $\mathcal{Z}_k(\mathfrak{g})$ are the Lie algebra cycle groups.
\item \textbf{Theorem \ref{the:ObsDeform}} states that the infinitesimal deformation problem form  derived cyclic $0$-cocycles is controlled by the derived cyclic cohomology.  More carefully, if $X \in \Vect(M)$ is such that $\mathcal{L}_Q X =0$, then the obstruction to the infinitesimal deformation of $\tau$ is given by the derived cyclic class of $\phi(f_0, f_1):= \tau(\pm X(f_0 f_1))$, which we refer to as the deformation class of $\tau$ along $X$. 
\end{itemize} 

\medskip
\noindent \textbf{Organisation:} We continue this introductory section with Subsection \ref{subsec:Qman} in which the fundamental aspects of the theory of Q-manifolds are recalled. In Section \ref{sec:DerAlg} we define the derived product and explore its properties. In Section \ref{sec:DerCycCoho} we move on to the derived cyclic cohomology. We end with some concluding remarks and open questions in Section \ref{sec:ConRem}.
%
%
\subsection{Recollection of Q-manifolds} \label{subsec:Qman}
We will assume that the reader has a grasp of the basic theory of supermanifolds. A  \emph{supermanifold} $M := (|M|, \:  \cO_{M})$ of dimension $n|m$, we understand to be a supermanifold as defined by Berezin \& Leites  \cite{Berezin:1975,Leites:1980}. That is, a supermanifold is  a locally superringed space that is locally isomorphic to $\mathbb{R}^{n|m} := \big (\R^{n}, C^{\infty}_{\R^{n}}(-)\otimes \Lambda(\zx^{1}, \cdots \zx^{m}) \big)$. Here, $C^{\infty}_{\R^{n}}(-)$ is the sheaf of smooth real functions on $\R^n$ and $\Lambda(\zx^{1}, \cdots \zx^{m})$ is the Grassmann algebra (over $\R$) with $m$ generators.  The underlying topological space $|M|$ is, in fact, a smooth manifold. This manifold we refer to as the \emph{reduced manifold}. Morphisms of supermanifolds are morphisms as superringed spaces. That is, a morphism $\varphi : M \rightarrow N$ consists of a pair $ \varphi = (|\varphi|, \varphi^*  )$, where $|\varphi| : |M| \rightarrow |N|$ is a continuous map (in fact, smooth) and  $\varphi^*$  is a family of superring homomorphisms $\varphi^*_{|V|} : \cO_N(|V|) \rightarrow \cO_M\big( |\varphi|^{-1}(|V|)\big)$, for every open $|V| \subset |N|$, that respect the restriction maps. The category of supermanifolds we denote as $\catname{SMan}$. We will set $C^\infty(M) := \cO_M(|M|)$. An important and useful result is that 
$$\Hom_{\catname{SMan}}(M_1, 
M_2) \cong \Hom_{\catname{SAlg}}\big(C^\infty(M_2), C^\infty(M_1)\big)\,.$$
where $\catname{SAlg}$ is the category of unital, supercommutative, associative algebras. This allows one to understand morphisms of supermanifolds in terms of homomorphisms between the algebras of global sections. Via minor abuse of notation, we set $\varphi^* : C^\infty(M_2) \rightarrow C^\infty(M_1)$ for the algebra homomorphism associated with a supermanifold homomorphism.  For an introduction to supermanifolds, the reader may consult Carmeli et al. \cite{Carmeli:2011}.
\par 
A \emph{Q-manifold}, following Schwarz \cite{Schwarz:1993} circa late 1992, is a supermanifold equipped with an odd vector field $Q \in \Vect(M)$ that satisfies the integrability condition $ [Q,Q]=0$.  Such vector fields are referred to as \emph{homological vector fields} as $(C^\infty(M), Q)$ is a differential algebra. We remark that non-trivial homological vector fields only exist on supermanifolds with non-zero odd dimension, i.e., that cannot exist on pure even manifolds. It should be noted that the notion of a homological vector field on a supermanifold goes back to at least 1980 and is attributed to  Shander \cite{Shander:1980}. We also highlight the foundational contribution of Vaintrob \cite{Vaintrob:1996}. Note that the action of $Q$ extends from $C^\infty(M)$ to the tensor algebra via the Lie derivative $\mathcal{L}_Q$.\par 
\emph{Morphisms of Q-manifolds}, which we refer to as Q-morphisms, are morphisms of supermanifolds $\varphi : M_1 \rightarrow M_2$, such that 
$$Q_1 \circ \varphi^* -\varphi^* \circ Q_2 =0\,.$$
We say that $Q_1$ and $Q_2$ are $\varphi$-related. In this way we obtain the category of Q-manifolds, which we will denote as $\catname{QMan}$. Isomorphisms in the category of Q-manifolds are obviously defined, and we refer to them as Q-diffeomorphisms.
\begin{example}
Any supermanifold or manifold may be equipped with the zero homological vector field.  We will refer to such Q-manifolds as trivial Q-manifolds.
\end{example}
\begin{example}
Let $N$ be a  pure even manifold; then we can identify (generally inhomogeneous) differential forms as functions on the odd tangent bundle $\Pi \sT N$. That is, we have the natural isomorphism $\Omega^\bullet(N) \cong C^\infty(\Pi \sT N)$. The de Rham differential is identified with a homological vector field and so $(\Pi \sT N, \rmd)$ is a Q-manifold. 
 \end{example}
\begin{example}
By replacing the tangent bundle with a Lie algebroid $(A, [-,-], \rho)$, we obtain the Q-manifold $(\Pi A, \rmd_A)$, where $\rmd_A$ is the Chevalley--Eilenberg differential of the Lie algebroid.  See Vaintrob \cite{Vaintrob:1997} for details.
\end{example} 
\begin{example}
Let $(N, P)$ be a Poisson manifold. We can interpret the Poisson bivector as a function on the odd cotangent bundle $\Pi \sT^* N$, and the Schouten--Nijenhuis bracket is identified with the canonical odd Poisson bracket. The Poisson condition $\SN{P,P}_{\textrm{SN}} =0$, implies that $\delta := \SN{P,-}_{\textrm{SN}}$ is a homological vector field. The homological vector field $\delta$ is identified with the Lichnerowicz differential. For an introduction to Poisson geometry, the reader may consult \cite{Crainic:2021}. 
\end{example} 
We remark that the category $\catname{QMan}$ is a Cartesian monoidal category (see Mac Lane \cite[Section VII]{MacLane:1988}).  If $(M_1, Q_1)$ and $(M_2, Q_2)$ are Q-manifolds then $(M :=M_1 \times M_2,  Q :=Q_1 + Q_2)$ is again a Q-manifold, and the monoidal unit is $(\R^{0|0}, \mathbf{0})$.\par
A homological vector field on a supermanifold we can interpret as a smooth action of the additive supergroup $\R^{0|1}$, which we equip with the odd coordinate $\theta$. The action $\mathsf{a}: \R^{0|1}\times M \rightarrow M$  corresponds to the algebra map $\mathsf{a}^* : C^\infty(M) \rightarrow \Lambda(\theta)\otimes C^\infty(M)$, which we write as
\begin{equation}
\mathsf{a}^* f := f + \theta \, Q(f)\,,
\end{equation}
for any $f \in C^\infty(M)$.  One can check that the group actions axioms hold provided $Q^2 =0$.  We then interpret this action as the ``odd flow'' generated by $Q$. In more physical language, we think of BRST transformations, and $\theta$ plays the role of an ``odd gauge parameter''.  For details of Lie supergroups and actions, see Carmeli et al. \cite[Chapter 7 and Chapter 8]{Carmeli:2011} \par
The set of Q-closed functions (Q-cocycles) we denote by $\mathcal{Z}(M,Q) := \ker(Q)$, and the set of Q-exact functions  (Q-boundaries) as $\mathcal{B}(M,Q):=  \textrm{Im}(Q)$. The \emph{standard cohomology}\footnote{In the BRST formalism, the standard cohomology is identified (modulo ghost number) as the BRST-cohomology.} of $(M,Q)$ is defined as $H^\bullet_\textnormal{std}(M,Q) :=  \mathcal{Z}(M,Q)/ \mathcal{B}(M,Q)$. Note that we have a two-term complex labelled by the Grassmann parity.  For more information about the standard cohomology and characteristic classes, the reader may consult Lyakhovich et al. \cite{Lyakhovich:2010}. For an overview of homological algebra, the reader may consult  Weibel \cite{Weibel:1994}. 
\begin{example}
Let $N$ be a pure even manifold; then the standard cohomology of $(\Pi \sT N, \rmd)$ is identified with the ($\Z_2$-graded) de Rham cohomology of $N$. 
\end{example}
In practice, the structure sheaf of a Q-manifold is often $\Z_2 \times \mathbb{N}$ or $\Z_2 \times \Z$ graded; the additional grading, which we refer to as \emph{weight}\footnote{More generally, the structure sheaf is $\Z_2 \times \mathbb{N}^n$  or $\Z_2 \times \Z^n$ graded and we have a \emph{multi-weight}. For example, in the BV--BRST formalism there are ghost and antifield numbers, etc. to contend with.}. The homological vector field is usually of weight $+1$. In  this paper, weight will play no explicit role. Thus, we will suppress the weight and work with the $\Z_2$-grading only, which, in general, is independent of the weight. For example, within the BRST formalism applied to gauge mechanics, the weight may be identified with ghost number.\par 
We will drop explicit reference to super, and understand all objects to be $\Z_2$-graded.  For example, by associative algebra we will mean associative superalgebra, and by Lie algebra we will mean Lie superalgebra, etc.
\begin{warning}
We will, as common in supergeometry, use $\R^{n|m}$ to denote the supermanifold $\big (\R^{n}, C^{\infty}_{\R^{n}}(-)\otimes \Lambda(\zx^{1}, \cdots \zx^{m}) \big))$ as well as the $\Z_2$-graded vector space  $\R^n\oplus \R^m$. The context should make the meaning clear.
\end{warning} 
%
%
\section{The Derived Algebra of a Q-manifold}\label{sec:DerAlg}
\subsection{The Derived Associative Algebra}\label{subsec:ShiftAssAlg}
Following Loday \cite{Loday:2001} (also see Uchino \cite{Uchino:2009} for higher versions, and Bruce \cite[Section 6]{Bruce:2014} for earlier work on the derived product on a Q-manifold), we make the following definition, taking into account the Grassmann parity of functions on a Q-manifold.
\begin{definition}\label{def:ShiftedMult}
Let $(M, Q)$ be a Q-manifold. Then the \emph{derived product/multiplication} is the $\R$-bilinear map 
$$\star  : C^\infty(M) \times C^\infty(M) \longrightarrow C^\infty(M)\,,$$
defined as $f \star g := (-1)^{\widetilde{f}}\, Q(f)\,g$, for all $f,g \in C^\infty(M)$.
\end{definition}
\begin{remark}
Derived products should be viewed as the associative cousin of derived brackets; see Kosmann-Schwarzbach \cite{Kosmann-Schwarzbach:2004} and references therein.  For example, it is well known that on any Q-manifold $[X,Y]_Q : = (-1)^{\widetilde{X}}\, [ [Q, X],Y]$ defines an odd Loday--Leibniz bracket on $\Vect(M)$.
\end{remark}
\begin{proposition}
Let $(M,Q)$ be a Q-manifold; the derived multiplication (see Definition \ref{def:ShiftedMult}) is 
\begin{enumerate}[i)]
\item Grassmann odd, i.e., $\widetilde{f\star g} = \widetilde{f} + \widetilde{g} +1$; and
\item Associative, i.e.,  $(f \star g )\star h = f \star (g \star h)$.
\end{enumerate}
\end{proposition}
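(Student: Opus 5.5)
The plan is to prove both parts by direct computation from Definition \ref{def:ShiftedMult}. We use only three facts: $Q$ is an odd derivation, $Q^2 = \tfrac12[Q,Q]=0$, and $C^\infty(M)$ is supercommutative and associative. Throughout, $f,g,h$ are homogeneous, and the general case follows by $\R$-bilinearity.

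For part i), $Q$ is odd, so $\widetilde{Q(f)} = \widetilde{f}+1$. Multiplication in $C^\infty(M)$ adds parities, and the sign $(-1)^{\widetilde f}$ does not affect parity. Hence $\widetilde{f\star g} = \widetilde{f}+1+\widetilde{g}$.

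For part ii), we expand each side. For the left-hand side, part i) gives $\widetilde{f\star g} = \widetilde f + \widetilde g + 1$, so
$$(f\star g)\star h = (-1)^{\widetilde f+\widetilde g+1}\, Q\big((-1)^{\widetilde f} Q(f)\, g\big)\, h = (-1)^{\widetilde g+1}\, Q\big(Q(f)\,g\big)\, h\,.$$
The graded Leibniz rule gives $Q(Q(f)g) = Q^2(f)\,g + (-1)^{\widetilde f+1} Q(f)Q(g)$. The first term vanishes because $Q^2=0$, so
$$(f\star g)\star h = (-1)^{\widetilde g+1}(-1)^{\widetilde f+1}\, Q(f)Q(g)h = (-1)^{\widetilde f+\widetilde g}\, Q(f)Q(g)h\,.$$
For the right-hand side, we have directly
$$f\star(g\star h) = (-1)^{\widetilde f}\, Q(f)\,(-1)^{\widetilde g} Q(g)\,h = (-1)^{\widetilde f+\widetilde g}\, Q(f)Q(g)h\,.$$
The two sides agree by associativity of the ordinary product.

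The only delicate point is bookkeeping the signs. Specifically, one must use the parity $\widetilde{f\star g}=\widetilde f+\widetilde g+1$ from part i) in the outer sign, and the sign $(-1)^{\widetilde{Q(f)}}=(-1)^{\widetilde f+1}$ in the Leibniz rule. This is exactly why part i) is established first. The key structural input is $Q^2=0$, which is what kills the extra term; no further obstacle is expected.
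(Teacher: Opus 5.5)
Your proof is correct and follows the paper's own argument: expand both sides, use the parity from part i) for the outer sign, apply the graded Leibniz rule, and use $Q^2=0$ to kill the extra term, arriving at $(-1)^{\widetilde{f}+\widetilde{g}}\,Q(f)Q(g)h$ on both sides. The paper compresses the Leibniz step that you write out, but the computation and the signs are the same.
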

\begin{proof}\
\begin{enumerate}[i)]
\item This is clear as $\widetilde{Q} =1$.
\item This requires a short calculation.
$$
(f \star g) \star h  = (-1)^{\widetilde{f} + \widetilde{f} + 1 + \widetilde{g}} \, Q\big( Q(f)g \big)h = (-1)^{\widetilde{f} + \widetilde{g}}\, Q(f)Q(g)h = f \star(g \star h)\,,
$$
with $f,g,h\in C^\infty(M)$. Note we have used $Q^2 =0$. 
\end{enumerate}
\end{proof}
We will refer to $(C^\infty(M), + , \star)$ as the \emph{derived associative algebra} of $(M,Q)$, noting that under the shifted grading $\overline{f} := \widetilde{f} +1$ we have a standard $\Z_2$-graded associative algebra.  In particular, $\widetilde{f \star g} = \widetilde{f} + \widetilde{g} + 1$, which implies 
\begin{equation}
\overline{f \star g}= \overline{f} +1 + \overline{g} + 1  = \overline{f} + \overline{g}\,.
\end{equation} 
That is, with respect to the shifted grading, the derived product is even. For brevity, we will on occasion write  $(C^\infty(M) , \star)$ or $(C^\infty(M) , \star_Q)$ for the derived associative algebra.\par  
Recall that $C^\infty(M)$ with its usual product is a nuclear Fre\'{chet} algebra (see Carmeli et al. \cite[Chapter 4.5]{Carmeli:2011} for details). Specifically, the usual product is jointly continuous. As $Q$ is a derivation of sections of the structure sheaf, the map $f \mapsto Q(f)$ is continuous with respect to the standard  Fre\'{chet} topology on $C^\infty(M)$. Thus, the derived product $f\star g = \pm \, Q(f) \, g$ is jointly continuous, and so the derived associative algebra is a nuclear Fre\'{chet} algebra. For details of functional analysis, we defer to Trèves \cite{Treves:2006}.
\begin{remark}
In light of Loday's dialgebras, we may consider another derived associative product given by $f\star' g := f \, Q(g)$. However, as the underlying algebra is supercommutative, a quick calculation shows that $f \star' g = -(-1)^{\bar{f}\, \bar{g}}\, g \star f $. Thus, these two derived associative algebras are (shifted) anti-opposite algebras. 
\end{remark}
\begin{warning}
We are \emph{not} employing the parity reversion functor and considering $\Pi C^\infty(M)$ (as the underlying vector space). One may use the parity reversion functor to exchange even and odd forms of associative multiplication. However, attempting to do so here would destroy the geometric nature of the constructions, i.e., the algebraic structure is not guaranteed to be the algebra of functions on a supermanifold. Rather, we are shifting the labelling of the grading to allow direct application of standard ideas from $\Z_2$-graded algebra.
\end{warning}
It is important to note that the derived associative algebra is not (shifted or otherwise) graded commutative.  The noncommutativity is clear due to the `lopsided' definition of the derived product.
\begin{example}
Let $M$ be a (super)manifold considered as a trivial Q-manifold, i.e., the homological vector field is the zero vector field. In this case, the shifted product defines a zero or null associative algebra. That is,  $f \star g =0$ for all functions $f,g \in C^\infty(M)$.
\end{example}
\begin{example}\label{exa:HigherPoisson}
A finite-dimensional version of the classical BV formalism is modelled by higher Poisson structures (see \cite{Voronov:2005}).  A \emph{higher Poisson manifold} is a supermanifold $M$ equipped with an even multivector field $\mathcal{P}\in C^\infty(\Pi \sT^* M)$ such that $\SN{\mathcal{P}, \mathcal{P}} =0$, where $\SN{-,-}$ is the canonical odd Poisson bracket on $\Pi \sT^* M$: in the classical setting the odd Poisson bracket is identified with the Schouten--Nijenhuis bracket. The higher Poisson structure $\mathcal{P}$ is interpreted as a `classical extended action '; the odd symplectic self-commuting condition is interpreted as the `classical master equation', and the bracket itself is interpreted as the antibracket. The derived product is given by $\mathcal{X} \star \mathcal{Y} = (-1)^{\widetilde{\mathcal{X}}}\, \SN{\mathcal{P}, \mathcal{X}}\, \mathcal{Y}$, for all $\mathcal{X}, \mathcal{Y} \in C^\infty(\Pi \sT^* M)$.
\end{example}
\begin{example}
Let $N$ be a smooth manifold; then differential forms on $N$ are understood as functions on  $M := \Pi \sT N$ and the de Rham differential is understood as a homological vector field. Thus, $\Omega^\bullet(N) \cong C^\infty(\Pi \sT N)$. The shifted product is 
$$\alpha \star \beta = (-1)^{\widetilde{\alpha}}\, \rmd \alpha \wedge \beta \,. $$
Suppose $\alpha$ is a nowhere vanishing one-form.  Then via the Frobenius Theorem, we note that if
$$\alpha \star \alpha = - \rmd \alpha \wedge \alpha =0\,,$$ 
then $\ker(\alpha)$ defines a completely integrable distribution. Generalising this, suppose we have $r< \dim N$ (point-wise) linearly independent one-forms  $\{\alpha^1, \alpha^2 , \cdots , \alpha^r \}$. Let us define $\Omega := \alpha^1 \wedge \alpha^2 \wedge \cdots \wedge \alpha^r$. The  Frobenius  Theorem can then be stated as 
$$\alpha^i \star \Omega =0\,,$$
with $1\leq i \leq r$.
\end{example}
\begin{example}\label{exa:GauSys}
In the theory of (finite-dimensional) gauge systems, the BRST differential is recognised as a homological vector field on the supermanifold of configurations, ghosts and antifields, etc.  A little more carefully, we have a triple $\big(M, \{ -,-\}_\epsilon, \mathcal{Q}_{\epsilon+1}\big )$ where the Poisson bracket is of Grassmann parity $\epsilon \in \Z_2$, and the homological potential $\mathcal{Q}_{\epsilon+1} \in C^\infty(M)$ is of Grassmann parity $\epsilon+1$.  The BRST differential is defined as $Q = \{\mathcal{Q}_{\epsilon+1}, - \}_\epsilon$, which requires the classical master equation $\{\mathcal{Q}_{\epsilon+1}, \mathcal{Q}_{\epsilon+1} \}_\epsilon =0$ to hold. For details, the reader may consult \cite{Henneaux:1992} and/or \cite{Lyakhovich:2004}. The derived product is thus
$$f \star g = (-1)^{\widetilde{f}}\, \{\mathcal{Q}_{\epsilon+1}, f \}\, g = (-1)^{\epsilon+1}\, X_f(\mathcal{Q}_{\epsilon +1})\, g\,,$$
where we define the Hamiltonian vector field $X_f := \{f, - \}_\epsilon$. We remark that if $f$ and $\mathcal{Q}_{\epsilon +1}$ are in involution, then $f \star g =0$.
\end{example} 
\begin{proposition}\label{prop:QMorph}
Let $(M_1, Q_1)$ and $(M_2, Q_2)$ be Q-manifolds and $\varphi : M_1 \rightarrow M_2$ be a morphism of Q-manifolds. Then $\varphi^* : C^\infty(M_2) \rightarrow C^\infty(M_1)$ is a derived associative algebra homomorphism.
\end{proposition}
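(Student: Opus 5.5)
We verify directly that $\varphi^*$ is $\R$-linear, preserves the shifted grading, and intertwines the derived products. Linearity is immediate, since $\varphi^*$ is an algebra homomorphism of the underlying supercommutative algebras and hence in particular linear.

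For the grading, note that $\varphi^*$ is a homomorphism of superalgebras, so it preserves Grassmann parity: $\widetilde{\varphi^* f} = \widetilde{f}$ for homogeneous $f \in C^\infty(M_2)$. Consequently it also preserves the shifted grading, $\overline{\varphi^* f} = \widetilde{f}+1 = \overline{f}$. Thus $\varphi^*$ is an even map with respect to the grading under which $(C^\infty(M_i), \star)$ is a standard $\Z_2$-graded associative algebra.

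The central step is a one-line computation using two facts: $\varphi^*$ is multiplicative for the ordinary product, and $Q_1 \circ \varphi^* = \varphi^* \circ Q_2$ by the definition of a Q-morphism. For homogeneous $f, g \in C^\infty(M_2)$,
$$\varphi^*(f \star_{Q_2} g) = (-1)^{\widetilde{f}}\, \varphi^*\big(Q_2(f)\big)\, \varphi^*(g) = (-1)^{\widetilde{f}}\, Q_1\big(\varphi^* f\big)\, \varphi^* g = (-1)^{\widetilde{\varphi^* f}}\, Q_1\big(\varphi^* f\big)\, \varphi^* g = (\varphi^* f) \star_{Q_1} (\varphi^* g).$$
The third equality uses the parity preservation established above. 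Bilinearity then extends the identity to inhomogeneous functions.

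No real obstacle is expected. The only points requiring care are the sign, which is why parity preservation is established first, and the observation that unitality is not part of the claim, since the derived associative algebra has no unit.
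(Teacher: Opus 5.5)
Your proposal is correct and matches the paper's proof. Both establish linearity and then carry out the same one-line computation, using that $\varphi^*$ is multiplicative and that $Q_1 \circ \varphi^* = \varphi^* \circ Q_2$; your explicit remark that $\varphi^*$ preserves Grassmann (hence shifted) parity just spells out the sign step the paper leaves implicit.
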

\begin{proof}
Linearity is automatic by the definition of the pullback. We only need to check the behaviour of the pullback on the derived product of two functions. 
\begin{align*}
\varphi^*(f \star_2 g) &= \varphi^* \big( (-1)^{\widetilde{f}}\, Q_2(f)g\big) = (-1)^{\widetilde{f}}\, \varphi^*\big( Q_2(f)\big)\varphi^*g \\
&= (-1)^{\widetilde{f}}\, Q_1(\varphi^*f)\varphi^*g = \varphi^*f \star_1 \varphi^*g\,,
\end{align*}
where we have used $Q_1 \circ \varphi^* = \varphi^* \circ Q_2$.
\end{proof}
Proposition \ref{prop:QMorph} tells us that we have a functor $\cO_{\textrm{der}}: \catname{QMan}^{\textrm{op}} \rightarrow \catname{NUSAlg}$  (non-unital associative superalgebras) via the construction of the derived associative algebra. 
\begin{example}
A morphism of gauge systems (see Example \ref{exa:GauSys}) $\big(M, \{ -,-\}_\epsilon , \mathcal{Q}_{\epsilon+1} \big) \rightarrow  \big(M', \{ -,-\}'_\epsilon , \mathcal{Q}'_{\epsilon+1} \big)$ consists of a smooth map $\varphi : M \rightarrow M'$ that satisfies
$$\varphi^*\big(\{f,g \}'_\epsilon \big)= \{ \varphi^*f , \varphi^* g\}_\epsilon\,,\qquad \varphi^* \mathcal{Q}' = \mathcal{Q}\,,$$
for all $f,g \in C^\infty(M')$. Directly we have 
$$\varphi^*\big(Q'(f) \big) = \varphi^* \big(\{ \mathcal{Q}'_{\epsilon +1}, f\}'_\epsilon \big) = \{\varphi^* \mathcal{Q}'_{\epsilon +1}, \varphi^* f \}_\epsilon = \{\mathcal{Q} \varphi* f\}_\epsilon= Q( \varphi^*f )\,.$$
Thus, morphism of gauge systems induce morphisms between the respective derived associative algebras.  
\end{example}
\begin{example}\label{exa:OddAnchor} Given a Q-manifold $(M, Q)$, there is an `odd anchor' map $\mathsf{a}_Q: M \rightarrow \Pi \sT M $, given in local coordinates as $\mathsf{a}^*_Q\big(x^a, \rmd x^b\big) = \big(x^a, Q^b(x)\big)$, where the homological vector field is $Q= Q^a(x)\partial_a$. The odd tangent bundle is itself a Q-manifold; functions on $\Pi \sT M$ are defined as pseudo-differential forms on $M$ and the homological vector field is the de Rham differential $\rmd = \rmd x^a \partial_a$. Thus, we set $\Omega^\bullet(M):=  C^\infty(\Pi \sT M)$.  It can be shown, see \cite[Proposition 2.8]{Bruce:2017}, that $\mathsf{a}_Q: (M,Q) \rightarrow (\Pi \sT M, \rmd)$ is a Q-morphism, i.e., $Q \circ \mathsf{a}^*_Q  - \mathsf{a}^*_Q \circ \rmd =0$. Thus,
$$\mathsf{a}^*_Q : \big(\Omega^\bullet(M), \star_{\rmd} \big) \rightarrow \big( M , \star_Q\big)\,,$$
is a morphism of  derived associative algebras. 
\end{example}
\begin{remark}
Although $\cO : \catname{SMan}^{\textrm{op}}\rightarrow \catname{SAlg}$ is fully faithful, $\cO_{\textrm{der}}: \catname{QMan}^{\textrm{op}} \rightarrow \catname{NUSAlg}$ fails to be full, i.e., there may be more morphisms as algebras than Q-manifolds. This can easily be argued as the derived product does not have a unit that must be respected if the morphism is to come from a supermanifold morphism. 
\end{remark}
\begin{proposition}\label{prop:LAbs}
Let $(M,Q)$ be a Q-manifold. Then the set of left absorbers of the derived product is identified with the set of Q-closed functions $\mathcal{Z}(M)$.
\end{proposition}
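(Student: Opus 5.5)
We first fix what ``left absorber'' means here. Since the derived product has no unit, the natural notion is an element $f$ with $f \star g = 0$ for all $g \in C^\infty(M)$, so that $f$ kills everything it multiplies from the left. The plan is to prove the two inclusions between the set of such $f$ and $\mathcal{Z}(M) = \ker(Q)$.

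The first inclusion, $\mathcal{Z}(M)$ into the left absorbers, is immediate from Definition~\ref{def:ShiftedMult}. If $Q(f)=0$, then
$$f\star g = (-1)^{\widetilde f}\,Q(f)\,g = 0$$
for every $g$.

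For the converse, suppose $f \star g = 0$ for all $g$. We take $g = 1$, the constant function, which exists because $C^\infty(M)$ is unital with respect to the ordinary product even though $\star$ is not. Then $0 = f\star 1 = (-1)^{\widetilde f}\,Q(f)$, so $Q(f)=0$.

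If $f$ is inhomogeneous, we work componentwise. We write $f = f_0 + f_1$ and use bilinearity, so that $f\star 1 = Q(f_0) - Q(f_1)$. The two terms have opposite parities, so the parity decomposition forces $Q(f_0)=Q(f_1)=0$. Conversely, linearity shows that the left absorbers form a graded subspace. Hence the two sets coincide as super vector spaces.

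There is no real obstacle. The only point needing care is the choice of test function $g=1$, and the observation that the sign $(-1)^{\widetilde f}$ is a unit and therefore harmless. We would also remark that the same conclusion follows without using $1$, since $Q(f)g=0$ for all $g$ already forces $Q(f)=0$, but the choice $g=1$ is the most direct route.
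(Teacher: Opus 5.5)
Your proof is correct and follows the paper's argument, which likewise tests $f \star g = 0$ against the constant function $g=1$ to conclude $Q(f)=0$. Your treatment of the reverse inclusion (that Q-closed functions are left absorbers) and of inhomogeneous $f$ fills in details the paper leaves implicit.
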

\begin{proof}
We  require $f\star g = (-1)^{\widetilde{f}}\, Q(f)g =0$ for all $g\in C^\infty(M)$. Setting $g=1$, we obtain $Q(f)=0$. Thus, $f \in \mathcal{Z}(M)$, as required. 
\end{proof}
\begin{observation}
Let $(M,Q)$ be a Q-manifold such that $Q \neq \boldsymbol{0}$; then the zero function is a right absorber of the derived product.
\end{observation}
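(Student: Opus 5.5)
The statement asks us to show that the zero function $0$ satisfies $f \star 0 = 0$ for every $f \in C^\infty(M)$. We also want to explain in what sense $0$ is distinguished as a right absorber. The plan is a direct verification from Definition \ref{def:ShiftedMult} and bilinearity.

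\textbf{First step.} Fix an arbitrary $f$ and compute
$$f \star 0 = (-1)^{\widetilde{f}}\, Q(f)\cdot 0 = 0\,.$$
This uses only that multiplication in $C^\infty(M)$ is $\R$-bilinear. Equivalently, it follows from $\R$-bilinearity of $\star$ itself. So $0$ is a right absorber, and this holds for any $Q$.

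\textbf{Second step: the role of $Q \neq \boldsymbol{0}$.} When $Q=\boldsymbol{0}$, every function is a right absorber, since the product is null (see the example of trivial Q-manifolds). The content of the hypothesis is therefore that, once $Q \neq \boldsymbol{0}$, the right absorbers are no longer everything. To see this, suppose $g$ satisfies $f \star g = 0$ for all $f$, i.e.\ $Q(f)\,g = 0$ for all $f$. Since $Q\neq \boldsymbol{0}$, we would choose local coordinates and use $Q(x^a) = Q^a$, where some component $Q^a$ is not identically zero.

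\textbf{Main obstacle.} The delicate point is that $Q^a$ may be nilpotent, or may vanish on part of $|M|$. Hence $Q^a g = 0$ does not force $g=0$ in general. For example, if $Q^a$ is an odd coordinate $\xi$, then $\xi\, g=0$ only forces $g \in \xi\, C^\infty$. So the conclusion I would actually prove is the weaker one: $0$ is a right absorber, and under $Q\neq\boldsymbol{0}$ not every function is. For the latter I would exhibit $g=1$ together with some $f$ with $Q(f)\neq 0$, which exists by hypothesis; then $f\star 1 = \pm Q(f) \neq 0$. This explains why the hypothesis is stated, without claiming uniqueness of the zero absorber.
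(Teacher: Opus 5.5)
Your proof is correct and matches what the paper intends (it gives no written proof): $f\star 0 = (-1)^{\widetilde{f}}\,Q(f)\cdot 0 = 0$ by bilinearity, and the hypothesis $Q\neq\boldsymbol{0}$ only excludes the null-product case, in which every function absorbs. Your caveat that $0$ need not be the unique right absorber, because of nilpotents, is exactly the point of the paper's remark following the observation, with the example $Q=\theta\partial_x$ on $\R^{1|1}$.
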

\begin{remark}
Due to nilpotent elements, we cannot conclude that the zero function is a unique absorber. For example, consider $\R^{1|1}$ which we equip with coordinates $(x, \theta)$, and the homological vector field $Q = \theta \partial_x$. Then for any $f = f_0 + \theta \, f_1$ and  $f = f_0 + \theta \, f_1$ we have $f\star g = - \theta \,f'_1(g_0 + \theta \, g_1)$. As $\theta^2 =0$, right absorbers are of the form $0 + \theta \, g_1 $, where $g_1$ is an arbitrary (smooth) function of $x$. 
\end{remark}
Note that there is no unit element for the derived product, as $\star$ shifts the Grassmann parity of the functions. Thus, we have no notion of units for the derived product, i.e., there are no invertible functions with respect to the derived product. The adjoining of a unit is discussed in Subsection \ref{subsec:AdjUniEle}.
\begin{remark}
We may have functions (at least locally) such that $f \star g =1$. For example, on $\R^{p|q}$ with coordinates $(x^a, \theta^\alpha)$ equipped with the homological vector field $Q := \partial_{\theta^1}$, the functions $f = -\theta^1$ and $g = 1$ satisfy $f \star g =1$. However, the unit function is not the unit/identity element of the derived associative product. 
\end{remark}
Given the odd nature of the derived product, we make the following definition. 
\begin{definition}\label{def:ShiftCom}
Let $(M, Q)$ be a Q-manifold. The \emph{derived commutator} of $f,g \in C^\infty(M)$ is defined as 
$$[f,g]_\star := f \star g - (-1)^{(\widetilde{f}+1)(\widetilde{g} +1)}\, g \star f = (-1)^{\widetilde{f}}\, Q(fg)\,.$$
Two functions $f$ and $g$ are said to \emph{derived commute} if $[f,g]_\star=0$.
\end{definition}
\begin{observation}
Let $(M,Q)$ be a Q-manifold, then functions $f$ and $g \in C^\infty(M)$ derived commute, i.e., $[f,g]_\star =0$, if and only if $fg \in \mathcal{Z}(M)$. 
\end{observation}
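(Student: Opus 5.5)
The plan is to derive the Observation directly from the closed form of the derived commutator in Definition \ref{def:ShiftCom}. I will first verify that closed form, since the Observation rests on it. I assume $f$ and $g$ are homogeneous; the general case then follows by bilinearity.

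\textbf{Step 1: recompute the commutator.} By Definition \ref{def:ShiftedMult},
$$g \star f = (-1)^{\widetilde{g}}\, Q(g)\, f = (-1)^{\widetilde{g} + \widetilde{f}(\widetilde{g}+1)}\, f\, Q(g),$$
using supercommutativity and $\widetilde{Q(g)} = \widetilde{g}+1$. Multiplying by $(-1)^{(\widetilde{f}+1)(\widetilde{g}+1)}$, the total exponent becomes
$$\widetilde{g} + \widetilde{f}\widetilde{g} + \widetilde{f} + \widetilde{f}\widetilde{g} + \widetilde{f} + \widetilde{g} + 1 \equiv 1 \pmod 2.$$
Hence
$$[f,g]_\star = (-1)^{\widetilde{f}}\, Q(f)\, g + f\, Q(g).$$
On the other hand, the graded Leibniz rule gives $(-1)^{\widetilde{f}} Q(fg) = (-1)^{\widetilde{f}} Q(f)\, g + f\, Q(g)$, because $(-1)^{\widetilde{f}}(-1)^{\widetilde{f}} = 1$. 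The two expressions agree, which confirms $[f,g]_\star = (-1)^{\widetilde{f}}\, Q(fg)$.

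\textbf{Step 2: conclude.} The sign $(-1)^{\widetilde{f}}$ is never zero. Therefore $[f,g]_\star = 0$ if and only if $Q(fg) = 0$, that is, if and only if $fg \in \mathcal{Z}(M)$. Both directions are immediate from this equivalence.

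\textbf{Inhomogeneous case.} If $f$ and $g$ are not homogeneous, I decompose them into homogeneous components and extend the commutator bilinearly. The identity from Step 1 then reads $[f,g]_\star = Q\big((Pf)\,g\big)$, where $P$ is the parity involution acting by $(-1)^{\widetilde{f}}$ on homogeneous components. Since $P$ is invertible but does not commute with multiplication by $g$, the clean equivalence with $Q(fg)=0$ holds in the homogeneous setting. I will therefore state the result for homogeneous $f$ and $g$, which is the paper's standing convention.

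The only real obstacle is the sign bookkeeping in Step 1; Step 2 is trivial once that is settled.
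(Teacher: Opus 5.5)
Your proof is correct and is essentially the paper's: the Observation is read off directly from the closed form $[f,g]_\star = (-1)^{\widetilde{f}}\,Q(fg)$ in Definition~\ref{def:ShiftCom}, and your Step~1 simply verifies that identity, with correct sign bookkeeping. Restricting to homogeneous $f,g$ is the right call, so drop your opening claim that the general case follows by bilinearity: on $\R^{0|1}$ with $Q=\partial_\zx$, the pair $f=g=1+\zx$ has $[f,g]_\star = [1,\zx]_\star + [\zx,1]_\star = 1-1 = 0$, while $Q(fg) = 2 \neq 0$.
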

\begin{proposition}
Let $(M, Q)$ be a Q-manifold such that  $Q \neq \mathbf{0}$.  Then  $\mathcal{Z}(M,Q) := \ker(Q)$ and $\mathcal{B}(M,Q) := \mathrm{Im}(Q)$ are two-sided ideals of the derived associative algebra of $(M,Q)$.
\end{proposition}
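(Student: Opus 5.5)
The plan is to verify directly from Definition \ref{def:ShiftedMult} that both subsets are two-sided ideals. First I check that each is a subspace. $\mathcal{Z}(M,Q)=\ker(Q)$ and $\mathcal{B}(M,Q)=\mathrm{Im}(Q)$ are $\R$-linear subspaces, since $Q$ is $\R$-linear. Because $Q$ is homogeneous (odd), both subspaces are graded: the homogeneous components of a $Q$-closed function are $Q$-closed, and $Q(h)$ splits into the images of the components of $h$. Being graded for the parity $\widetilde{\cdot}$ is the same as being graded for the shifted parity $\overline{\cdot}$. Hence both are graded subspaces of the derived associative algebra, and it suffices to test closure under left and right derived multiplication by homogeneous $g \in C^\infty(M)$.

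For the left ideal property, let $f$ lie in either subspace and compute $g \star f = (-1)^{\widetilde{g}} Q(g) f$.
\begin{itemize}
\item If $f \in \mathcal{Z}(M,Q)$, then the Leibniz rule and $Q^2=0$ give $Q(Q(g)f) = -(-1)^{\widetilde{g}} Q(g)Q(f) = 0$, so $g\star f \in \mathcal{Z}$.
\item If $f = Q(h) \in \mathcal{B}(M,Q)$, then $Q(g)Q(h) = \pm Q\big(g\,Q(h)\big)$, again using the Leibniz rule and $Q^2=0$, so $g \star f \in \mathcal{B}$.
\end{itemize}

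For the right ideal property, compute $f \star g = (-1)^{\widetilde{f}} Q(f) g$.
\begin{itemize}
\item If $f \in \mathcal{Z}(M,Q)$, then $Q(f)=0$, so $f\star g = 0 \in \mathcal{Z}$. This is consistent with Proposition \ref{prop:LAbs}: $Q$-closed functions are left absorbers.
\item If $f \in \mathcal{B}(M,Q)$, then $f = Q(h)$, so $Q(f) = Q^2(h) = 0$ and $f \star g = 0 \in \mathcal{B}$.
\end{itemize}
So right multiplication is trivial on both subspaces. The same computation also gives $\mathcal{B} \subset \mathcal{Z}$.

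There is no real obstacle. The only care needed is tracking the signs from the graded Leibniz rule in the left ideal case, and noting that the sign never affects membership. The hypothesis $Q\neq \mathbf{0}$ is not logically needed for the closure statements. When $Q=\mathbf{0}$ we have $\mathcal{Z}=C^\infty(M)$ and $\mathcal{B}=0$, which are trivially ideals. I expect the hypothesis serves only to exclude this degenerate null-algebra case, so I will remark on this rather than use it.
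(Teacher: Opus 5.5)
Your proof is correct and follows essentially the same direct computation as the paper: $Q$-closed (hence $Q$-exact) functions are left absorbers, so $f\star g=0$, and the product $g \star f = (-1)^{\widetilde{g}}Q(g)f$ is handled by the Leibniz rule and $Q^2=0$. The paper streamlines the second case slightly by noting that $g\star f = \pm Q(gf)\in\mathcal{B}(M,Q)\subset\mathcal{Z}(M,Q)$ for any $f\in\mathcal{Z}(M,Q)$, which covers both subspaces at once; your remark that the hypothesis $Q\neq\mathbf{0}$ is not needed is also right.
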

\begin{proof}
Assume that $f \in \mathcal{Z}(M,Q)$ and that $g \in C^\infty(M)$.  Then $f \star g =0 \in \mathcal{Z}(M,Q)$ as Q-closed functions are left absorbers (see Proposition \ref{prop:LAbs}). Similarly, $g \star f = \pm Q(g)f = \pm Q(gf) \in \mathcal{Z}(M,Q)$ as $f$ is Q-closed. Thus, $\mathcal{Z}(M,Q)$ is a two-sided ideal of the derived associative algebra.    The same reasoning establishes that $\mathcal{B}(M,Q) \subset\mathcal{Z}(M,Q)$ is also a two-sided ideal. 
\end{proof}
\begin{remark}
We may have non-trivial idempotents, i.e., functions such that $f\star f =f$. Note we require  $\widetilde{f} = 1$ (or equivalently $\overline{f} =0$). For example, on $\R^{0|1}$ with a single coordinate $\theta$ and $Q = \partial_\theta$, $f= - \theta$ is an idempotent.
\end{remark}
As the commutator bracket in an associative algebra forms a Poisson algebra, the following result is naturally expected.
\begin{proposition}\label{prop:ShiftCommProps}
Let $(M,Q)$ be a Q-manifold. Then the derived commutator satisfies 
\begin{enumerate}[i)]
\item $[f, [g, h]_\star]_\star = [[f,g]_\star ,h]_\star + (-1)^{(\widetilde{f}+1)(\widetilde{g}+1)}\, [g, [f,h]_\star]_\star$;
\item $[f, g \star h]_\star = [f,g]_\star\star h + (-1)^{(\widetilde{f}+1)(\widetilde{g}+1)}\,g \star [f,h]_\star$, 
\end{enumerate}
for all $f,g$ and $h \in C^\infty(M)$.
\end{proposition}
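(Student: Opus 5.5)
The plan is to avoid expanding the commutators and instead use the closed form from Definition \ref{def:ShiftCom}, $[f,g]_\star = (-1)^{\widetilde{f}}\,Q(fg)$, together with $Q^2=0$ and the fact that $Q$ is an odd derivation of the supercommutative product. The conceptual reason both identities hold is that $(C^\infty(M),\star)$ is associative (for the shifted grading $\overline{f}=\widetilde{f}+1$). In any associative superalgebra the graded commutator is a graded derivation of the product, and it satisfies the graded Jacobi identity, with sign $(-1)^{\overline{f}\,\overline{g}}$. This sign is exactly $(-1)^{(\widetilde{f}+1)(\widetilde{g}+1)}$. So one route is simply to quote this standard fact for associative superalgebras, noting that $\star$ is even with respect to $\overline{\,\cdot\,}$. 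I would nonetheless give a direct verification, since it is short and makes the signs explicit.

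For i), the key observation is that every derived commutator is $Q$-exact. Hence $[g,h]_\star = \pm Q(gh)$, and
\[
[f,[g,h]_\star]_\star = \pm Q\big(f\,Q(gh)\big) = \pm Q(f)\,Q(gh),
\]
using $Q^2=0$. Similarly, $[[f,g]_\star,h]_\star = \pm Q\big(Q(fg)h\big) = \pm Q(fg)$ applied through the Leibniz rule gives $\pm Q(fg)\,Q(h)$. Likewise, $[g,[f,h]_\star]_\star = \pm Q(g)\,Q(fh)$. I would then expand each of $Q(gh)$, $Q(fg)$ and $Q(fh)$ by the Leibniz rule, which produces terms of the types $Q(f)Q(g)h$, $Q(f)\,g\,Q(h)$ and $f\,Q(g)Q(h)$. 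Using supercommutativity to reorder the factors, I would check that the coefficients match on both sides. The parities needed are: $[g,h]_\star$ has parity $\widetilde{g}+\widetilde{h}+1$, and $Q(f)$ and $Q(g)$ are of parity $\widetilde{f}+1$ and $\widetilde{g}+1$ respectively.

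For ii), I would compute the left-hand side as
\[
[f,g\star h]_\star = (-1)^{\widetilde{f}}(-1)^{\widetilde{g}}\,Q\big(f\,Q(g)\,h\big).
\]
Expanding with the Leibniz rule, and using $Q^2=0$ to drop the term containing $Q(Q(g))$, gives a combination of $Q(f)Q(g)h$ and $f\,Q(g)Q(h)$. On the right-hand side, $[f,g]_\star\star h = \pm Q(Q(fg))h\cdot$\,(sign), which vanishes by $Q^2=0$. The remaining term is $g\star[f,h]_\star = \pm Q(g)\,Q(fh)$, which expands to $\pm Q(g)Q(f)h \pm Q(g)\,f\,Q(h)$. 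After commuting $Q(g)$ past $Q(f)$ and past $f$, this should reproduce the left-hand side.

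The main obstacle is purely sign bookkeeping: keeping track of the factors $(-1)^{\widetilde{f}}$ coming from each $\star$ and each commutator, along with the Koszul signs from the odd $Q$ passing through functions and from reordering odd factors. To control this I would first verify both identities for homogeneous $f,g,h$ with all parities written explicitly, perhaps checking the case where all three functions are even as a sanity test. If the direct computation becomes unwieldy, I would fall back on the abstract argument: the graded commutator in any associative superalgebra satisfies i) and ii). This argument uses only associativity, established in the earlier proposition, together with the identification $\overline{f}=\widetilde{f}+1$.
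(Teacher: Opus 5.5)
Your proposal is correct and matches the paper's approach. The paper's proof is exactly your abstract argument: $(C^\infty(M),\star)$ is an associative superalgebra for the shifted grading $\overline{f}=\widetilde{f}+1$, so the graded commutator satisfies the Jacobi identity and the derivation rule with sign $(-1)^{\overline{f}\,\overline{g}}$. Your direct check via $[f,g]_\star=(-1)^{\widetilde{f}}Q(fg)$ and $Q^2=0$ is the alternative the paper notes in the remark after the proof, and its signs do close up: in i) the two $f\,Q(g)Q(h)$ terms on the right-hand side cancel, and in ii) the term $[f,g]_\star\star h$ vanishes identically.
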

\begin{proof}
We observe that the derived associative algebra is a standard associative algebra with respect to the shifted grading defined by $\overline{f} = \widetilde{f} +1$.  It is a standard result that the commutator bracket on an associative algebra is a Poisson bracket. Thus, the above properties hold. 
\end{proof}
\begin{remark}
Proposition \ref{prop:ShiftCommProps} can also be directly proved from the definition of the derived commutator and $Q^2=0$. 
\end{remark}
We will refer to $(C^\infty(M), +, \star, [-,-]_\star)$ as the \emph{derived Poisson algebra} of $(M,Q)$, noting that we have a standard Poisson algebra using the shifted grading.  It is important to note that $[f,g]_\star \in \mathcal{Z}(M)$, and so is a left absorber via Proposition \ref{prop:LAbs}, i.e., $[f,g]_\star \star h =0$ for all $f,g$ and $h\in C^\infty(M)$. \par 
Using the shifted grading, we have a standard Lie algebra; that is, the Jacobi identity is simply
\begin{equation}
[f, [g, h]_\star]_\star = [[f,g]_\star ,h]_\star + (-1)^{\overline{f}\,\overline{g}}\, [g, [f,h]_\star]_\star\,.
\end{equation}
%
%
\subsection{Infinitesimal Symmetries and Derived Derivations}\label{subsec:DerivedDerivations}
The definition of an infinitesimal symmetry of a Q-manifold is the following.
\begin{definition}\label{def:SymQ}
An \emph{infinitesimal symmetry} of a Q-manifold $(M,Q)$ is a vector field $X \in \Vect(M)$ such that $\mathcal{L}_X Q = [X,Q] =0$.
\end{definition}
The $\R$-vector space of infinitesimal symmetries forms a Lie algebra with respect to the Lie bracket of vector fields. Geometrically, the infinitesimal change of $Q$ in the `direction'  $X$ is zero whenever we have an infinitesimal symmetry. We will denote the Lie algebra of infinitesimal symmetries as $\big( \mathrm{Sym}(M,Q), [-,-]\big)$.
\begin{example}
Let $f \in \mathcal{Z}(M,Q)$, i.e., $Q(f)=0$. Then $X := f\, Q$ is an infinitesimal symmetry. To see this, we observe that
$$[X,Q] = [f\,Q, Q] = (-1)^{\widetilde{f}}\, [Q, f\, Q] = (-1)^{\widetilde{f}}\, Q(f)\, Q + f\, [Q,Q]=0\,.$$
\end{example}
\begin{example}
Let $N$ be a pure even manifold. Then $(\Pi  \sT N, \rmd )$ is a Q-manifold. Let $u \in \Vect(N)$ be an arbitrary vector field.  Then the Lie derivative  $\mathcal{L}_u $ is a vector field on $\Pi \sT N$. As $[\mathcal{L}_u ,\rmd] =0$, $\mathcal{L}_u$ is an infinitesimal symmetry. Moreover, $[\mathcal{L}_u, \mathcal{L}_v] = \mathcal{L}_{[u,v]}$.
\end{example}
\begin{remark}
Infinitesimal symmetries preserve the odd (generally singular) distribution $\mathcal{D}_Q:= \Span \{ Q \}$.
\end{remark}
The notion of a derivation of the derived associative algebra is clear.  
\begin{definition}
Let $(M, Q)$ be a Q-manifold. A \emph{derived derivation} is an $\R$-linear map $D: C^\infty(M) \rightarrow C^\infty(M)$ that satisfies the shifted Leibniz rule
$$D(f \star g) = D(f)\star g + (-1)^{\overline{D}\, \overline{f}}\, f \star D(g)\,,$$
for all $f,g \in C^\infty(M)$. The $\R$-vector space of derived derivations we denote as $\Der_\star(M)$.
\end{definition} 
Note that as the derived associative algebra of $(M,Q)$ is inherently noncommutative (assuming $Q\neq \boldsymbol{0}$), derived derivations do not form a (left or right) module over the algebra. We define the Lie bracket of derived derivations as expected: 
\begin{equation}
[D_1, D_2]_\star :=  D_1 \circ D_2 - (-1)^{\overline{D}_1 \, \overline{D}_2}\, D_2 \circ D_1\,.
\end{equation}
Note that we have the standard Lie bracket of derivations with respect to the shifted grading.  Thus, we can immediately infer that derived derivations form a Lie algebra. From Proposition \ref{prop:ShiftCommProps}, we have  \emph{inner derived derivations} defined as 
\begin{equation}
D_f :=  [f, -]_\star\,,
\end{equation}
for any $f \in C^\infty(M)$. Note that we have $\overline{D_f} = \widetilde{f}+1 = \overline{f}$. A derived derivation is said to be an \emph{outer derived derivation} if it is not an inner derived derivation. We will denote the space of inner derived derivations as $\textrm{Inn}_\star(M)$. Formally, the outer derived derivations are defined as as $\textrm{Out}_\star(M):= \Der_\star(M)/ \textrm{Inn}_\star(M) $. 
\begin{example}
On any Q-manifold we have $D_1(f) := [1,f]_\star = Q(f)$, which is an inner derived derivation.
\end{example} \par 
Observe that for vector fields $X\in \Vect(M)$ we have $\overline{X} = \widetilde{X}$. To see this,  recall that $\widetilde{X(f)} = \widetilde{X} + \widetilde{f}$. This implies that $\overline{X(f)} = \widetilde{X}+ \widetilde{f} +1$. On the other hand, $\overline{X(f)} = \overline{X} + \overline{f} = \overline{X} + \widetilde{f} +1$. Thus,  $\overline{X} = \widetilde{X}$.
\begin{proposition}\label{prop:VecFieldsDers}
Let $(M,Q)$ be a Q-manifold. The Lie algebra of infinitesimal symmetries $\big( \mathrm{Sym}(M,Q), [-,-]\big)$ naturally embeds into the Lie algebra of derived derivations $\big ( \Der_\star(M), [-,-]_\star \big )$.
\end{proposition}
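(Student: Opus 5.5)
The plan is to show that the inclusion $X \mapsto X$, sending an infinitesimal symmetry to itself viewed as an $\R$-linear map $C^\infty(M)\to C^\infty(M)$, lands in $\Der_\star(M)$, is injective, and intertwines the brackets. Injectivity is immediate, since a vector field is determined by its action on functions. The parity bookkeeping has also been settled already: the observation preceding the statement gives $\overline{X} = \widetilde{X}$.

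\textbf{Step 1 (shifted Leibniz rule).} Take $X \in \mathrm{Sym}(M,Q)$, so that $[X,Q] = XQ - (-1)^{\widetilde{X}} QX = 0$, i.e.\ $X\circ Q = (-1)^{\widetilde{X}}\, Q\circ X$. We compute
\begin{align*}
X(f\star g) &= (-1)^{\widetilde{f}}\, X\big(Q(f)\,g\big) = (-1)^{\widetilde{f}}\, X(Q(f))\, g + (-1)^{\widetilde{f}}(-1)^{\widetilde{X}(\widetilde{f}+1)}\, Q(f)\, X(g)\\
&= (-1)^{\widetilde{f}+\widetilde{X}}\, Q(X(f))\,g + (-1)^{\widetilde{X}\,\overline{f}}\, (-1)^{\widetilde{f}}\,Q(f)\,X(g).
\end{align*}
The first term equals $X(f)\star g$, because $\widetilde{X(f)} = \widetilde{X}+\widetilde{f}$. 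The second term equals $(-1)^{\overline{X}\,\overline{f}}\, f\star X(g)$. This is exactly the shifted Leibniz rule with $\overline{X} = \widetilde{X}$.

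\textbf{Step 2 (brackets agree).} For $X, Y \in \mathrm{Sym}(M,Q)$ the commutator $[X,Y]_\star = X\circ Y - (-1)^{\overline{X}\,\overline{Y}}\, Y\circ X$ coincides with the supercommutator $[X,Y]$ of vector fields, again because $\overline{X}=\widetilde{X}$. Moreover $[X,Y]$ is again a symmetry by the Jacobi identity, $[[X,Y],Q] = [X,[Y,Q]] \pm [Y,[X,Q]] = 0$. Hence the inclusion is an injective Lie algebra homomorphism, which is the claimed embedding.

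The only real obstacle is the sign bookkeeping in Step 1. In particular, one must use the commutation relation $XQ = (-1)^{\widetilde{X}}QX$ correctly to absorb the extra $(-1)^{\widetilde{X}}$ into $\widetilde{X(f)}$. I would check this carefully on homogeneous elements; the computation above indicates the signs match.
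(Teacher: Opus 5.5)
Your proposal is correct and follows essentially the same route as the paper: you use $X\circ Q = (-1)^{\widetilde{X}}\, Q\circ X$ to verify the shifted Leibniz rule directly on $f\star g$, and then observe that, since $\overline{X}=\widetilde{X}$, the bracket $[-,-]_\star$ restricts to the usual supercommutator of vector fields. Your signs check out, and your extra remarks on injectivity and on closure of $\mathrm{Sym}(M,Q)$ under the bracket are harmless additions that the paper leaves implicit.
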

\begin{proof}
First $[X, Q] =0$ implies that $X \circ Q = (-1)^{\widetilde{X}}\, Q \circ X$. Then directly we have 
\begin{align*}
X (f \star g) &= X \big((-1)^{\widetilde{f}}\, Q(f)g \big)
              = (-1)^{\widetilde{f}}\, X Q(f)g +(-1)^{\widetilde{f} + \widetilde{X} \,                \widetilde{f} + \widetilde{X}}\, Q(f)X(g)\\
              &= (-1)^{\widetilde{f} + \widetilde{X}}\,  QX(f)g +(-1)^{\widetilde{f} + \widetilde{X} \,                \widetilde{f} + \widetilde{X}}\, Q(f)X(g)
               =X(f) \star g + (-1)^{\widetilde{X}(\widetilde{f}+1)}\, f \star X(g) \\
               &=X(f) \star g + (-1)^{\overline{X}\, \overline{f}}\, f \star X(g)\,.
\end{align*}
Thus, there is an inclusion of $\R$-vector spaces $\mathrm{Sym}(M,Q)\hookrightarrow \Der_\star(M)$. We then observe that, by employing the shifted grading, the Lie bracket $[-,-]_\star$ restricted to the image of the inclusion is precisely the standard Lie bracket of vector fields. Thus, the inclusion is at the level of Lie algebras. 
\end{proof}
\begin{example}
Let $N$ be a smooth manifold; then differential forms on $N$ are understood as functions on  $M := \Pi \sT N$, and the de Rham differential is understood as a homological vector field. Thus, $\Omega^\bullet(N) \cong C^\infty(\Pi \sT N)$. The shifted product is $\alpha \star \beta = (-1)^{\widetilde{\alpha}}\, \rmd \alpha \wedge \beta$. We know (see \cite[Chapter II, Section 8]{Kolář:1993}) that all derivations of $\Omega^\bullet(N)$ are of the from $\mathcal{L}_K = [\rmd , i_K]$, where $K \in \Omega^\bullet(N, \sT N)$ is a vector-valued differential form.  Thus, Fr\"oclicher--Nijenhuis operators of the form $\mathcal{L}_K$ are naturally embedded in the Lie algebra of derived derivations. 
\end{example}
%
%
\subsection{Derived Left Modules}\label{subsec:DerLeftMod}
Let $\mathcal{M}$ be a left $C^\infty(M)$-module (with the standard product of functions).  We will write the action as juxtaposition. We will not insist that the module is locally free, though that is the natural case in supergeometry.  
\begin{definition}
Let $(M,Q)$ be a Q-manifold with $Q \neq \boldsymbol{0}$, and let $\mathcal{M}$ be a left $C^\infty(M)$-module. The \emph{derived external composition} is the map $C^\infty(M)\times \mathcal{M} \rightarrow \mathcal{M}$ given by
$$(f, u) \mapsto f\star u := (-1)^{\widetilde{f}} \, Q(f) u\,.$$
\end{definition}
Using the shifted grading, we set $\overline{u} = \widetilde{u}$, and effectively only shift the grading of the functions.  In doing so, $\overline{f \star u} = \overline{f} + \overline{u}$. 
\begin{proposition}
Let $(M,Q)$ be a Q-manifold with $Q \neq \boldsymbol{0}$, and let $\mathcal{M}$ be a left $C^\infty(M)$-module. The derived external composition is an action of the derived associative algebra on $\mathcal{M}$. 
\end{proposition}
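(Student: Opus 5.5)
To prove the proposition we must show three things: the derived external composition is $\R$-bilinear, it respects the shifted grading, and it is associative with respect to the derived product, namely $(f\star g)\star u = f\star(g\star u)$ for all $f,g \in C^\infty(M)$ and $u\in\mathcal{M}$. Bilinearity is immediate: $Q$ is $\R$-linear and the module action is bilinear. Compatibility with the grading, $\overline{f\star u} = \overline{f}+\overline{u}$, was already noted above from $\widetilde{Q}=1$ and the convention $\overline{u}=\widetilde{u}$. So the real content is the associativity check. Since the derived product has no unit, there is no unitality condition to verify.

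For associativity I will mirror the computation in the proof that $\star$ is associative. The left-hand side is
$$(f\star g)\star u = (-1)^{\widetilde{f}+\widetilde{g}+1}\,Q\big((-1)^{\widetilde{f}}Q(f)g\big)\,u .$$
I will expand $Q(Q(f)g)$ using the Leibniz rule for the odd derivation $Q$. This gives $Q^2(f)\,g + (-1)^{\widetilde{f}+1}Q(f)Q(g)$. The first term vanishes because $Q^2=\tfrac12[Q,Q]=0$. What remains is $(-1)^{\widetilde{f}+\widetilde{g}}\,Q(f)Q(g)\,u$.

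The right-hand side is
$$f\star(g\star u) = (-1)^{\widetilde{f}}\,Q(f)\big((-1)^{\widetilde{g}}Q(g)u\big) = (-1)^{\widetilde{f}+\widetilde{g}}\,\big(Q(f)Q(g)\big)u .$$
The last equality uses that $\mathcal{M}$ is a left module over $C^\infty(M)$ with its ordinary product, so $a(bu)=(ab)u$. Comparing the two sides finishes the proof.

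The only delicate point is the sign bookkeeping. I need to check that the parity of $f\star g$ is $\widetilde{f}+\widetilde{g}+1$, which produces the prefactor $(-1)^{\widetilde{f}+\widetilde{g}+1}$, and that the Koszul sign from $Q$ passing $Q(f)$ is $(-1)^{\widetilde{f}+1}$. The three exponents then add to $\widetilde{f}+\widetilde{g}$ modulo $2$. The hypothesis $Q\neq\boldsymbol{0}$ plays no role in the algebra; it only excludes the degenerate null action.
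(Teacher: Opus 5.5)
Your proposal is correct and is essentially the paper's proof: the paper also verifies $(f\star g)\star u = (-1)^{\widetilde{g}+1}\, Q\big(Q(f)g\big)u = (-1)^{\widetilde{f}+\widetilde{g}}\, Q(f)Q(g)u = f\star(g\star u)$, using the Leibniz rule, $Q^2=0$ and associativity of the ordinary module action. Your sign bookkeeping checks out, and your remarks on bilinearity and the shifted grading are harmless additions.
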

\begin{proof}
A direct calculation establishes the proposition.  Explicitly,
$$ (f \star g)\star u  = (-1)^{\widetilde{g} +1}\, Q\big( Q(f) g\big)u = (-1)^{\widetilde{g} +\widetilde{f}}\, Q(f) Q(g) u = f \star (g \star u)\,. $$ 
\end{proof}
Note that as the derived product has no unit element, it is meaningless to speak of (locally) free modules with respect to the derived external composition. 
\begin{observation}
Let $(M,Q)$ be a Q-manifold with $Q \neq \boldsymbol{0}$, and let $\mathcal{M}$ be a left $C^\infty(M)$-module. Then  $\mathcal{Z}(M,Q) := \ker(Q)$ are the annihilators of the derived external composition. 
\end{observation}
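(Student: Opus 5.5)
The statement says that $\mathcal{Z}(M,Q)=\ker(Q)$ is exactly the annihilator of the derived external composition. Here the annihilator means the set of $f\in C^\infty(M)$ with $f\star u=0$ for all $u\in\mathcal{M}$. The plan is to prove the two inclusions separately, following the proof of Proposition \ref{prop:LAbs}.

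\textbf{Closed functions annihilate.} Take $f\in\mathcal{Z}(M,Q)$. Then for every $u\in\mathcal{M}$ we have
$$f\star u=(-1)^{\widetilde{f}}\,Q(f)\,u=0,$$
since $Q(f)=0$. So $\mathcal{Z}(M,Q)$ is contained in the annihilator.

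\textbf{Annihilators are closed.} Suppose $f\star u=0$ for all $u$. This means $Q(f)\,u=0$ for all $u\in\mathcal{M}$, so $Q(f)$ lies in the annihilator ideal $\mathrm{Ann}(\mathcal{M})\subset C^\infty(M)$. In Proposition \ref{prop:LAbs} we could simply set $g=1$. Here that move is only available when $\mathcal{M}$ is faithful. This holds, for example, when $\mathcal{M}$ is free or locally free of nonzero rank, or when $\mathcal{M}=C^\infty(M)$. In those cases we conclude $Q(f)=0$.

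\textbf{Main obstacle.} The difficulty is this converse direction for a general module. A non-faithful module, such as $\mathcal{M}=0$ or a quotient by an ideal, has a strictly larger annihilator, namely the preimage $Q^{-1}\big(\mathrm{Ann}(\mathcal{M})\big)$. I would therefore state explicitly the faithfulness assumption, which is the natural locally free case mentioned just before the definition. I would then reduce to the $g=1$ argument, using a local frame when $\mathcal{M}$ is locally free: $Q(f)\,e_i=0$ for each frame element forces $Q(f)=0$ locally, and hence globally by the sheaf property.
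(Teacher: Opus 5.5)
Your proposal is correct, and it is more careful than the paper. The paper records this as an Observation with no proof, evidently by analogy with Proposition~\ref{prop:LAbs}, and you have found exactly where that analogy breaks. The inclusion $\ker(Q)\subseteq\{f : f\star u=0\ \text{for all } u\in\mathcal{M}\}$ is immediate. The reverse inclusion, however, has no substitute for the test element $g=1$, so the statement as written is false for general modules.

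Your example $\mathcal{M}=0$ is a genuine counterexample: every $f$ annihilates, and some $f$ has $Q(f)\neq 0$ because $Q\neq\boldsymbol{0}$. A less degenerate one is $M=\R^{1|1}$ with $Q=\theta\partial_x$ and $\mathcal{M}=C^\infty(M)/(\theta)$. There $x\star u=\theta u=0$ in $\mathcal{M}$, although $Q(x)=\theta\neq 0$. Once faithfulness is added as a hypothesis, your argument is complete.

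Two small refinements.

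First, your claim that a non-faithful module always has a strictly larger annihilator is too strong. The derived annihilator is $Q^{-1}\big(\mathrm{Ann}(\mathcal{M})\big)$, and this coincides with $\ker(Q)$ exactly when $\mathrm{Ann}(\mathcal{M})\cap\mathcal{B}(M,Q)=0$. For instance, on $(\R^{0|1},\partial_\theta)$ the module $\mathcal{M}=C^\infty(\R^{0|1})/(\theta)\cong\R$ is not faithful. Yet $\mathrm{Ann}(\mathcal{M})=\R\theta$ meets $\mathcal{B}(\R^{0|1},\partial_\theta)=\R$ only in $0$, so the Observation holds there. Faithfulness is therefore sufficient but not necessary; the intersection condition is the sharp hypothesis.

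Second, in the locally free case your hypothesis concerns only global elements $u$, so you cannot apply it directly to a local frame $\{e_i\}$ on a chart $U$. First multiply by an even bump function $\chi$ supported in $U$ with $\chi\equiv 1$ on some $V\subset U$, extending $\chi e_i$ by zero to a global element. Then $Q(f)\,\chi e_i=0$ restricts to $Q(f)|_V\, e_i|_V=0$, and freeness of the frame gives $Q(f)|_V=0$. Covering $|M|$ by such $V$ then yields $Q(f)=0$.
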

\begin{example}
Let $(M,Q)$ be a Q-manifold and consider $\Vect(M)$. The derived external composition is $f\star X := (-1)^{\widetilde{f}}\, Q(f)X$. Now suppose we have an affine connection $\nabla$; then
$$\nabla_Q \big(f \star X \big) = (-1)^{\widetilde{f}}\, \nabla_Q \big(Q(f)X \big)= (-1)^{\overline{f}}\, f \star \nabla_Q X\,.$$
That is, $\nabla_Q$ is a derived linear operator on vector fields. 
\end{example}
%
%
\subsection{Double Q-manifolds and Compatible Derived Associative Algebras}\label{subsec:DoubleQ}
The definition of a double Q-manifold is a supermanifold with a pair of compatible homological vector fields.   More formally, we have the following.
\begin{definition}
A \emph{double Q-manifold} is a triple $(M, Q_1, Q_2)$, where each $(M, Q_i)$ ($i = 1,2$) are Q-manifolds and $[Q_1, Q_2] =0$.
 \end{definition}
Within physics, we may interpret, modulo (multi-)weight, the pair of homological vector fields within the BRST-anti-BRST formalism as the BRST-differential and the anti-BRST-differential (see, for example, \cite{Bonora:2021}).\par 
On any double Q-manifold we have a pair of derived associative algebras defined in the obvious way, i.e., 
\begin{equation}
f\star_i g := (-1)^{\widetilde{f}}\, Q_i(f)g\,,
\end{equation}
with $i = 1,2$. \par
Recall that a pair of associative algebras on the same vector space is said to be a pair of \emph{compatible associative algebras} if any linear combination of their binary products defines another associative product. 
\begin{proposition}\label{prop:ComParAlgs}
Let $(M, Q_1, Q_2)$ be a double Q-manifold. The pair of associated derived associative algebras is a pair of compatible (derived) associative algebras. 
\end{proposition}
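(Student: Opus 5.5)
The plan is to reduce the statement to the associativity of a single derived product. For real constants $\lambda_1, \lambda_2$, the bilinear map
$$f \star_{\lambda} g := \lambda_1\, f\star_1 g + \lambda_2\, f \star_2 g = (-1)^{\widetilde{f}}\, Q_\lambda(f)\, g\,,\qquad Q_\lambda := \lambda_1 Q_1 + \lambda_2 Q_2\,,$$
is exactly the derived product of Definition \ref{def:ShiftedMult}, now built from the odd vector field $Q_\lambda$. So if I can show that $(M, Q_\lambda)$ is a Q-manifold, associativity of $\star_\lambda$ follows at once from the earlier proposition establishing associativity of the derived product.

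The first step is to note that $Q_\lambda$ is odd, since it is a real linear combination of odd vector fields. The second step is to compute its self-bracket using bilinearity and the graded symmetry of the bracket on odd vector fields:
$$[Q_\lambda, Q_\lambda] = \lambda_1^2\,[Q_1,Q_1] + 2\lambda_1\lambda_2\,[Q_1,Q_2] + \lambda_2^2\,[Q_2,Q_2]\,.$$
Each term vanishes: the first and last because each $Q_i$ is homological, and the middle one by the defining compatibility condition $[Q_1,Q_2]=0$ of a double Q-manifold.

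Hence $Q_\lambda$ is homological for every choice of $\lambda_1, \lambda_2$, and $\star_\lambda$ is associative. It is also Grassmann odd, so it is an even product under the shifted grading $\overline{f} = \widetilde{f}+1$, which is the same grading for all $\lambda$. This is precisely the required compatibility.

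There is no real obstacle here. The only point needing care is the cross term: for odd vector fields the bracket is the anticommutator $[Q_1,Q_2] = Q_1Q_2 + Q_2Q_1$. This is symmetric, so the two cross contributions add to $2\lambda_1\lambda_2[Q_1,Q_2]$ rather than cancelling.

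As a sanity check, the cross term can also be verified directly. Expanding $(f\star_1 g)\star_2 h$ and $f\star_1(g\star_2 h)$ gives mixed terms $Q_2Q_1(f)$ together with a cross term from the Leibniz rule. The required identity
$$(f\star_1 g)\star_2 h + (f\star_2 g)\star_1 h = f\star_1(g\star_2 h) + f \star_2 (g\star_1 h)$$
then reduces to $(Q_1Q_2+Q_2Q_1)(f)=0$. I would rely on the conceptual argument and mention this only as a remark.
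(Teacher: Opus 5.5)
Your proposal is correct and matches the paper's proof: the paper likewise observes that $a(f\star_1 g)+b(f\star_2 g)$ is the derived product of $Q = aQ_1 + bQ_2$, which is homological because $[Q_1,Q_2]=0$, so associativity follows from the earlier associativity result. Your explicit expansion of $[Q_\lambda,Q_\lambda]$ and your direct check of the compatibility identity, which the paper states just after the proof, only add detail that the paper leaves implicit.
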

\begin{proof}
We define $f \star g := a\,(f  \star_1 g ) + b\,(f \star_2g)$, with $a$ and $b \in \R$ arbitrary. Clearly, this is just the derived multiplication defined by
$$Q := a \, Q_1 + b\, Q_2\,.$$
The condition $[Q_1, Q_2] =0$ ensures that $Q$ is a homological vector field, and so $\star$ defines a derived associative multiplication. Thus, the pair of derived associative algebras is a pair of compatible associative algebras. 
\end{proof}
One can directly deduce from the definitions the standard compatibility condition 
$$(f \star_1 g)\star_2 h + (f \star_2 g)\star_1 h = f \star_1 (g \star_2 h) + f \star_2 (g \star_1 h)\,.$$
\begin{example}
Any Q-manifold $(M, Q)$ may be considered as a double Q-manifold by taking the second homological vector field to be the zero vector field. 
\end{example}
\begin{example}
Let $(M, Q)$ be a Q-manifold. Pseudo-differential forms on $M$ are understood as functions on $\Pi \sT M$, which itself is canonically a Q-manifold with the homological vector field being identified with the de Rham differential $\rmd$.  The action of $Q$ may be lifted to act on pseudo-differential forms via the Lie derivative $\mathcal{L}_Q := [\rmd , i_Q]$, which is recognised as a homological vector field.  Cartan's formula informs us that $[\rmd , \mathcal{L}_Q]=0$, and so $(\Pi \sT M, \rmd , \mathcal{L}_Q)$ is a double Q-manifold.  Thus, the pair of derived associative algebras associated with the de Rham and the Lie derivative with respect to $Q$ are compatible associative algebras. 
\end{example}
\begin{example}
Following Voronov \cite{Voronov:2012}, a double vector bundle $D$ is a double Lie algebroid, in the sense of Mackenzie \cite{Mackenzie:1992}, if the total parity reversed double vector bundle $\Pi^2 D$ is a double Q-manifold, where the homological vector fields $Q_{(0,1)}$ and $Q_{(1,0)}$ are of bi-weight  $(0,1)$ and $(1,0)$, respectively. Neglecting the bi-weight, the pair of derived associative algebras associated with the homological vector fields are compatible associative algebras.
\end{example}
\begin{remark}
The results of this subsection directly generalise to $n$-tuple Q-manifolds, i,e., supermanifolds equipped with $n$ homological vector fields that pairwise commute. For example, $n$-fold Lie algebroids can be formulated this way; see Voronov \cite{Voronov:2012}.
\end{remark}
%
%
\subsection{Derived Associative Algebras of Q-groups}
The definition we employ for a Q-group (see Mehta \cite{Mehta:2009}) is the following.
\begin{definition}
A \emph{Q-group} $(G, Q)$ is a group object in the category of Q-manifolds.
\end{definition}
For group objects in a category, see Mac Lane \cite[Chapter III.6]{MacLane:1988} and for Lie supergroups in particular, see \cite[Chapter 7]{Carmeli:2011}.  To spell a Q-group out,  we first equip $G \times G$ with the homological vector field  $Q^{(2)} := Q \otimes \mathsf{id} + \mathsf{id} \otimes Q$, and $\R^{0|0}$ with the zero homological vector field $\mathbf{0}$. The structure maps of the Lie supergroup
$$\mu : G \times G \rightarrow G\,, \qquad i : G \rightarrow G\,, \qquad e : \R^{0|0} \rightarrow G\,,$$
referred to as the \emph{multiplication}, \emph{inverse}, and \emph{unit} maps, respectively, must be  morphisms of Q-manifolds. We thus have the compatibility conditions
\begin{subequations}
\begin{align}\label{eqn:ComConQGrp1}
& Q^{(2)}\circ \mu^* = \mu^* \circ Q\,,\\\label{eqn:ComConQGrp2}
& Q \circ i^* = i^* \circ Q \, ,\\ \label{eqn:ComConQGrp3} 
& e^* \circ Q =\mathbf{0}\,.
\end{align}
\end{subequations}
Just as in the classical case for multiplicative vector fields on a Lie group, it can be shown that \eqref{eqn:ComConQGrp1} implies both \eqref{eqn:ComConQGrp2} and  \eqref{eqn:ComConQGrp3} (see Jubin et al. \cite[Proposition 2.20]{Jubin:2020}). Thus, we will refer to $Q$ as a \emph{multiplicative homological vector field}.  
\begin{example}
Any Lie supergroup $G$ may be equipped with the zero homological vector field $\mathbf{0}$; then evidently $(G,\mathbf{0})$ is a Q-group.
\end{example}
\begin{proposition}\label{prop:QGrpAlg}
Let $(G,Q)$ be a Q-group. Then the following are morphisms of derived associative algebras: 
\begin{enumerate}[i)]
\item $\mu^* : \big( C^\infty(G), \star_Q\big)\rightarrow \big( C^\infty(G \times G), \star_{Q^{(2)}}\big)$\,;
\item $i^* :  \big( C^\infty(G), \star_Q\big) \rightarrow  \big( C^\infty(G), \star_Q\big)$\,;
\item $ e^*: \big( C^\infty(G), \star_Q\big) \rightarrow \big( \R, \star_{\boldsymbol{0}}\big)$\,.
\end{enumerate}
\end{proposition}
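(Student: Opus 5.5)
The plan is to reduce everything to Proposition \ref{prop:QMorph}, which states that the pullback along any Q-morphism is a homomorphism of derived associative algebras. By definition, $(G,Q)$ is a group object in $\catname{QMan}$, so the structure maps $\mu$, $i$ and $e$ are morphisms of Q-manifolds. The relevant Q-structures are $Q^{(2)}$ on $G\times G$, $Q$ on $G$, and $\mathbf{0}$ on $\R^{0|0}$. Each item then follows once the correct intertwining relation is identified.

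For item i), apply Proposition \ref{prop:QMorph} to $\mu:(G\times G,Q^{(2)})\to(G,Q)$, using \eqref{eqn:ComConQGrp1}. As a sanity check, I would also verify it directly: $\mu^*(f\star_Q g)=(-1)^{\widetilde f}\mu^*(Q f)\,\mu^* g=(-1)^{\widetilde f}Q^{(2)}(\mu^*f)\,\mu^*g=\mu^*f\star_{Q^{(2)}}\mu^*g$. This uses only that $\mu^*$ is an algebra morphism preserving parity, so $\widetilde{\mu^*f}=\widetilde f$.

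For item ii), do the same with $i$ and \eqref{eqn:ComConQGrp2}. For item iii), note that $C^\infty(\R^{0|0})=\R$ and that the derived product with respect to $\mathbf{0}$ is the null product. Here $e^*(f\star_Q g)=\pm e^*(Qf)\,e^*(g)=0$ by \eqref{eqn:ComConQGrp3}, which equals $e^*f\star_{\mathbf{0}}e^*g=0$.

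The only point requiring care is making sure \eqref{eqn:ComConQGrp2} and \eqref{eqn:ComConQGrp3} are actually available. Either we take them as part of the definition of a group object in $\catname{QMan}$, or we invoke the cited fact that \eqref{eqn:ComConQGrp1} implies them. Beyond this, there is no real obstacle.
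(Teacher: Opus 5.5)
Your proposal is correct and matches the paper's argument: the paper's proof is the one-line observation that $\mu$, $i$ and $e$ are Q-morphisms (the structure maps of a group object in $\catname{QMan}$), so Proposition \ref{prop:QMorph} applies directly. Your explicit checks, including the observation that $\star_{\boldsymbol{0}}$ is the null product on $\R$ so item iii) reduces to $e^*\circ Q=0$, simply unpack that same argument.
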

\begin{proof}
 This follows directly from the fact that $Q$ is a multiplicative homological vector field and Proposition \ref{prop:QMorph}, which states that morphisms of Q-manifolds induce morphisms between their respective derived associative algebras.
 \end{proof}  
\begin{example}
Consider supertranslation group $\R^{1|1}$ which we equip with coordinates $(x, \theta)$. The multiplication is given by
$$\mu^* x  = x_1 +x_2 + \theta_1 \theta_2\,, \qquad \mu^* \theta = \theta_1 + \theta_2\,.$$
The reader can quickly deduce the identity and inverse maps.  The multiplicative homological vector field we claim is $Q = \theta \partial_x$. Note that the only non-zero derived product of the coordinates is  $x \star x  = \theta x$. \par 
Checking multiplicity explicitly, 
\begin{align*}
&  \mu^*\big(Q(x) \big) = \mu^*\theta = \theta_1 + \theta_2\, && Q^{(2)}(\mu^*x ) = (\theta_1 \partial_{x_1} + \theta_2 \partial_{x_2})(x_1 +x_2 + \theta_1 \theta_2) = \theta_1 + \theta_2\,, \\
& \mu^* \big( Q(\theta)\big) = \mu^* 0 =0\,, && Q^{(2)}( \mu^* \theta) = (\theta_1 \partial_{x_1} + \theta_2 \partial_{x_2})(\theta_1 + \theta_2 ) =0\,.
\end{align*}
Thus $Q$ is indeed a multiplicative homological vector field.  It is easy to see that $i^*x =-x$, $i^* \theta =-\theta$, and $e^* x=0$ and $e^* \theta=0$.  \par 
To check that $\mu^*$ is a morphism of derived associative algebras, we first calculate the following.
$$\mu^*(x \star x) = \mu^*(\theta x) = \mu^*\theta \, \mu^* x  = (\theta_1 + \theta_2)(x_1 +x_2 + \theta_1 \theta_2) = \theta_1 x_1 + \theta_1 x_2 + \theta_2 x_1 + \theta_2 x_2\,.$$
Next observe that $Q^{(2)}(\mu^*x) = \theta_1 + \theta_2$, where hence,
$$\mu^* x \star \mu^*x =  (\theta_1 + \theta_2)(x_1 + x_2 + \theta_1 \theta_2) = \theta_1 x_1 + \theta_1 x_2 + \theta_2 x_1 + \theta_2 x_2\,.$$
Thus, as expected, $\mu^*(x \star x) = \mu^* x \star \mu^*x$. 
\end{example}
\begin{warning}
It is \ul{not} the case that $\big( C^\infty(G \times G),\star_{Q^(2)} \big)\cong \big( C^\infty(M), \star_Q \big)\, \widehat{\otimes}\, \big( C^\infty(G), \star_Q\big)$ as  associative (shifted) superalgebras. While the category $\catname{Qman}$ is a Cartesian monoidal category, this does not imply that the derived product is compatible with the (completed) tensor product. We will further comment on this in Section \ref{sec:ConRem}. 
\end{warning}

%
%
\subsection{Adjoining a Unit Element}\label{subsec:AdjUniEle}
The derived associative algebra of a Q-manifold is non-unital. By employing the shifted grading, the standard method of adjoining a unit, also known as the Dorroh extension, can be applied (see \cite{Dorroh:1932}).  The \emph{(minimal) unitalisation of the derived associative algebra} is defined as follows.  First, the (nuclear Fr\'{e}chet) super vector space is 
\begin{equation}
C^\infty(M)^+ := C^\infty(M)\oplus \R \ni (f, a) \,.
\end{equation}
Using the shifted grading, we define $\overline{a} =0$ for all $a \in \R$.  This then implies we have 
\begin{align*}
& (f, a) \in C^\infty_0(M)^+ ~~ ,\textnormal{if} ~ \overline{f} = 0\,,\\
&( f, 0) \in C^\infty_1(M)^+ ~~ ,\textnormal{if} ~ \overline{f} = 1\,,
\end{align*}
The derived product is modified as
\begin{equation}
(f, a) \star (g, b) :=  (f \star g + a\, g + b\, f\,, ab)\,.
\end{equation}
Evidently, the unit element is given by $e := (0, 1)$.  The resulting algebra is a unital associative noncommutative superalgebra.  We stress that the unitalisation of the derived associative algebra is not a derived associative algebra of a Q-manifold. For instance, we cannot have a unit element for a derived product. Morphisms of Q-manifolds extend to the unitalisation as $\varphi^+(f, a) := (\varphi^*f, a)$.\par
In short, we have the composition of functors  that take a Q-manifold and produce a unital nuclear Fr\'{e}chet superalgebra 
$$(M,Q) \rightarrow (C^\infty(M), \star) \rightarrow (C^\infty(M)^+, \star)\,.$$
We will further comment on the unitalisation in Section \ref{sec:ConRem} during the concluding remarks. 
%
%
\section{Derived Cyclic Cohomology}\label{sec:DerCycCoho}
\subsection{The Modular Class and Shifted Graded Traces}\label{subsec:ModClass}
Recall that the modular class of a Q-manifold is the obstruction to the existence of a Q-invariant Berezin volume, see \cite{Bruce:2017,Bruce:2020} for a review.  A Q-manifold is said to be \emph{unimodular} if the modular class vanishes, which by definition means that we can find Berezin volumes $\p$ such that $\mathcal{L}_Q \p =0$. Note that although the modular class requires a choice of Berezin volume to define, it is independent of this choice and has a cohomological understanding. For applications of modular classes of Q-manifolds in the theory of Lie algebroids, the reader may consult Grabowski \cite{Grabowski:2012,Grabowski:2014}.  We remark that within the BV-formalism,  the vanishing of the modular class implies that there is no `one-loop anomaly' in the corresponding semiclassical quantum theory, see \cite[Remark 4.7]{Bruce:2017} and/or \cite[Section 6.1]{Lyakhovich:2004}, for example.
\begin{proposition}\label{prop:BerIntTrace}
Let $(M, Q)$ be a unimodular Q-manifold that is superoriented\footnote{ By a superoriented supermanifold we mean a supermanifold whose reduced manifold is orientable and an orientation has been chosen, and which possesses an atlas  where the standard Jacobian of the even coordinates and the Berezinian of the coordinate transformations are strictly positive. For details of orientations of supermanifolds see Shander \cite{Shander:1988}.} and compact. Then the Berezin integral with respect to a Q-invariant Berezin volume acts as a shifted graded trace on the derived associative algebra.
\end{proposition}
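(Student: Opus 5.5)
We must show that $\tau(f) := \int_M \p\, f$, with $\p$ a Q-invariant Berezin volume, satisfies
\begin{equation*}
\tau(f\star g) = (-1)^{\overline{f}\,\overline{g}}\,\tau(g\star f)
\end{equation*}
for all $f,g\in C^\infty(M)$, i.e.\ $\tau([f,g]_\star)=0$. (If the parity of $\tau$ matters, one can also record that $\tau$ vanishes on the appropriate shifted-homogeneous components; this follows from the parity of $\p$.) The plan is to reduce this to a Stokes-type statement for $Q$.

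By Definition \ref{def:ShiftCom}, $[f,g]_\star = (-1)^{\widetilde{f}}\, Q(fg)$. So it suffices to prove
\begin{equation*}
\int_M \p\, Q(h) = 0 \qquad \text{for all } h\in C^\infty(M).
\end{equation*}
This is where the hypotheses enter. The divergence of a vector field $X$ with respect to $\p$ is defined by $\mathcal{L}_X(\p\, h) = \p\,\big(X(h) + (-1)^{\widetilde{X}\widetilde{h}}\operatorname{div}_{\p}(X)\, h\big)$, up to the sign convention used in \cite{Bruce:2017}. The Lie derivative of a Berezin density is a total divergence, so $\int_M \mathcal{L}_X(\p h)=0$. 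This is the supermanifold Stokes/divergence theorem: compactness removes boundary terms, and superorientability makes the Berezin integral well defined. Q-invariance, $\mathcal{L}_Q\p=0$, means $\operatorname{div}_{\p} Q = 0$. Hence $\int_M \p\, Q(h) = \int_M \mathcal{L}_Q(\p h) = 0$.

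Applying this with $h=fg$ gives $\tau(f\star g) - (-1)^{\overline f\,\overline g}\tau(g\star f) = (-1)^{\widetilde f}\tau(Q(fg))=0$, which is the shifted graded trace property. Linearity and continuity of $\tau$ are immediate.

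The main obstacle is justifying the integration-by-parts step $\int_M \mathcal{L}_Q(\p h)=0$ for an \emph{odd} vector field. I would do this locally with a partition of unity. In coordinates, write $\p = D[x,\xi]\,\rho(x,\xi)$; then $\mathcal{L}_X(\p h)$ is $D[x,\xi]$ times $\sum_a \pm\partial_a(X^a\rho h)$. Berezin integration of $\partial_{\xi}$-derivatives vanishes because the top Grassmann component is killed. Integrals of $\partial_{x}$-derivatives vanish by ordinary Stokes for compactly supported functions. Keeping the signs consistent with the paper's conventions is routine but needs care.
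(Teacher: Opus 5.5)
Your proposal is correct and follows essentially the same route as the paper. Both arguments reduce the claim to $\int_M \p\, Q(h)=0$, using $\int_M \mathcal{L}_Q(\p h)=0$ on a compact supermanifold together with $\mathcal{L}_Q\p=0$, and then substitute $h=fg$ via $[f,g]_\star=(-1)^{\widetilde{f}}Q(fg)$. Your partition-of-unity justification of the Stokes step for an odd vector field is extra detail that the paper leaves implicit.
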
 
\begin{proof}
Starting with $\int_M \mathcal{L}_Q(\rho\, f) = \int_M (\mathcal{L}_Q \p)\, f + \int_M \p \, Q(f)$, we observe that if $(M, Q)$ is unimodular and we pick the Q-invariant Berezin volume, this reduces to 
$$\int_M \p \, Q(f) =0\,,$$
where we have used the fact that $M$ is compact.   We then consider $f \mapsto (-1)^{\widetilde{f}}\, fg$, which produces 
$$\int_M \p \, (-1)^{\widetilde{f}}\, Q(fg) = \int_M \p \, [f,g]_\star=0\,.$$
This is equivalent to $\int_M \p \, f \star g = \int_M \p \, (-1)^{\overline{f} \, \overline{g}}\, g \star f$, and hence the integral acts as a shifted graded trace.
\end{proof}
\begin{example}
Let $N$ be a compact and oriented manifold; then $(\Pi \sT N, \rmd)$ is a Q-manifold.  The canonical Berezin volume (locally this is just the coordinate Berezin volume) is invariant under the de Rham differential. Thus, $\Pi \sT N$ is unimodular.  Then $\tau(\alpha) := \int_{\Pi \sT N} \p \, \alpha = \int_N \alpha^{\textrm{top}}$ is a shifted trace over the derived associative algebra given by $\alpha \star \beta = (-1)^{\widetilde{\alpha}}\, \rmd \alpha \wedge \beta$.
\end{example}
As we have established that the Berezin integral can act as a shifted trace, the question of a derived version of Connes' cyclic cohomology arises.
%
%
\subsection{Derived Cyclic Cohomology of Q-manifolds}\label{subsec:ShiftSycCoh} 
In this subsection, we apply cyclic cohomology to Q-manifolds via the derived associative algebra. Recall that cyclic cohomology was developed by Connes \cite{Connes:1985} and acts as a generalisation of de Rham cohomology to associative algebras. For a very accessible introduction, the reader may consult Connes \cite[Chapter 3]{Connes:1994}, Khalkhali \cite{Khalkhali:2010}, or  Weibel \cite[Chapter 9]{Weibel:1994}. Kastler \cite[Chapter 3]{Kastler:1988} developed Hochschild and cyclic cohomology for $\Z_2$-graded associative algebras.  Thus, by employing the shifted grading to the derived product, we may freely employ Kastler's constructions and results with minimal amendment.  \par 
Let  $\catname{NFSVec}$ be the category of nuclear Fr\'{e}chet super vector spaces (see Carmeli et al. \cite[C.2.]{Carmeli:2011}). By definition, morphisms in this category are continuous linear maps. We define $n$-\emph{cochains} as elements of 
$$C^n(M,Q):= \InHom_{\catname{NFSVec}}\big( C^\infty(M)^{\widehat{\otimes}^{n+1}}, \R\big)\,.$$
Note that $C^\infty(M) := \cO_M(|M|)$ is a nuclear Fr\'{e}chet algebra (see \cite[Chapter 4.5]{Carmeli:2011}). Thus, the projective topology is used to construct the completed tensor product in the definition of $n$-cochains. Via Grothendieck's fundamental theorem of nuclear spaces, other `reasonable' topologies can also be used, such as the injective topology; however, all such completed tensor products coincide (see \cite[Chapter 50]{Treves:2006}). 
\begin{warning}
In differential geometry, the space of distributions, understood as dual objects to functions, is defined using functions with compact support. On a compact (super)manifold there is, of course, no difference between $C^\infty(M)$ and $C^\infty_c(M)$. Note that, for non-compact Q-manifolds, we may consider the derived associative algebra of compactly supported functions $(C^\infty_c(M), \star)$. The complication in doing so is that the underlying vector space is an LF-space, i.e., a strict inductive limit of Fr\'{e}chet spaces, rather than a Fr\'{e}chet space itself (see \cite[Chapter 13]{Treves:2006}). While it remains a nuclear space, working with completed projective tensor products of LF-spaces is significantly more complicated than the Fr\'{e}chet case. Thus, to avoid complications with defining $n$-cochains ($n>0$), we will not insist on compactly supported functions in the definition, and in practice will work with compact supermanifolds.     
\end{warning}
It is irrelevant if one considers the standard or shifted grading with the $n$-cochains, as long as one is consistent. Conventionally, we will consider the standard Grassmann parity.   The shifted grading can then be applied when required in the formulas to keep track of the sign factors.  For pure elements we will write $\phi^n(f_0, f_1, \cdots, f_n) := \phi^n(f_0 \otimes f_1 \otimes \cdots \otimes f_n)\in C^n(M,Q)$. In general, a cochain will be inhomogeneous in grading. We view $n$-cochains as generalised currents via direct analogy with de Rham currents\footnote{See Current, \href{ http://encyclopediaofmath.org/index.php?title=Current&oldid=56194}{\emph{Encyclopedia of Mathematics}}, http://encyclopediaofmath.org/index.php?title=Current\&oldid=56194, (accessed 16/07/2026)}. \par 
The \emph{shifted cyclic permutation operator} $\lambda   \, : C^n(M,Q) \longrightarrow C^n(M,Q)$ is defined as  
$$(\lambda \phi^n)(f_0, f_1, \cdots , f_n) := (-1)^{ n+ \overline{f}_n\, (\overline{f}_0 + \cdots +\overline{f}_{n-1})}\, \phi^n(f_n, f_0 , \cdots , f_{n-1})\,. $$
The space of \emph{derived cyclic n-cochains} $C^n_\lambda(M,Q) \subset C^n(M,Q)$ consists of $n$-cochains that are invariant under $\lambda$, i.e., $\lambda \phi^n = \phi^n$. \par 
The \emph{Hochschild coboundary operator} (see \cite{Kastler:1988}) $ b :  C^n(M, Q) \longrightarrow C^{n+1}(M,Q)$ is defined as
\begin{align}\label{eqn:HochCobOp}
  (b \phi^n)(f_0, f_1 , \cdots , f_{n+1}) & := \sum_{i=0}^n (-1)^i \, \phi^n(f_0, \cdots f_i \star f_{i+1}, \cdots , f_{n+1})\\ \nonumber 
 &+ (-1)^{ n+1 + \overline{f}_{n+1}\, (\overline{f}_0 + \cdots + \overline{f}_n)}\, \phi^n(f_{n+1}\star f_0, \cdots , f_n)\,. 
\end{align} 
The result $b^2 =0$ may be directly established. Moreover, it can be shown that the space of derived cyclic cochains is invariant under the Hochschild coboundary operator, i.e., $b \big( C^n_\lambda(M,Q)\big)\subset C^{n+1}_\lambda(M,Q)$ (see \cite{Kastler:1988}). Note that the Hochschild coboundary operator is itself continuous as it is defined in terms of a continuous linear map and the operations within the derived associative algebra are continuous. 
\begin{observation}
If all $f_i \in C^\infty(M)$ ($i =0, \cdots, n+1$) are Q-exact, i.e., are all in $\mathcal{B}(M,Q)$, then $b \phi^n(f_0, \cdots , f_{n+1}) =0$ for any and all $\phi^n \in C^n(M,Q)$.
 \end{observation}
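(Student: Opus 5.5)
The plan is to show that every term in the formula \eqref{eqn:HochCobOp} for $(b\phi^n)(f_0,\dots,f_{n+1})$ vanishes separately, so the cochain $\phi^n$ is never evaluated on anything but zero. The reason is that every term feeds $\phi^n$ an argument of the form $f_i \star f_{i+1}$ (or $f_{n+1}\star f_0$ in the last, cyclically wrapped term), and a derived product whose left factor is Q-exact is zero.

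The key step is this. Suppose $f_i \in \mathcal{B}(M,Q)$, so $f_i = Q(h_i)$ for some $h_i \in C^\infty(M)$. Then
$$Q(f_i) = Q^2(h_i) = \tfrac12 [Q,Q](h_i) = 0,$$
so $f_i \in \mathcal{Z}(M,Q)$; this is just the inclusion $\mathcal{B}(M,Q)\subset\mathcal{Z}(M,Q)$ already used in the proof that these are ideals. By Proposition \ref{prop:LAbs}, Q-closed functions are left absorbers for $\star$. Hence
$$f_i \star f_{i+1} = (-1)^{\widetilde{f_i}}\, Q(f_i)\, f_{i+1} = 0$$
for each $i=0,\dots,n$, and likewise $f_{n+1}\star f_0 = 0$. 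Note that only the left factor of each product needs to be exact. Every $f_j$ occurs as a left factor in exactly one term (the indices $0,\dots,n$ in the sum and $n+1$ in the final term), so the hypothesis that all $f_j$ are Q-exact is what covers every term.

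To conclude, each summand is $\phi^n$ evaluated on an elementary tensor with one slot equal to $0$. By multilinearity, that is, by linearity of $\phi^n$ on $C^\infty(M)^{\widehat{\otimes}\, n+1}$ together with $\cdots\otimes 0\otimes\cdots = 0$, each summand vanishes. The sign prefactors, including the shifted-grading sign in the last term, are irrelevant. Continuity and nuclearity play no role, since only elementary tensors are involved.

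There is no real obstacle. The only point needing care is the bookkeeping: checking that the wrapped term $\phi^n(f_{n+1}\star f_0,\dots)$ is also covered, which it is because $f_{n+1}$ is exact. I would remark that the hypothesis could be weakened to $f_0,\dots,f_{n+1}\in\mathcal{Z}(M,Q)$, and even to requiring only that each function appearing as a left factor be Q-closed.
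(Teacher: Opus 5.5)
Your proof is correct and is essentially the argument the paper has in mind for this unproved observation. Q-exact functions are Q-closed, hence left absorbers by Proposition~\ref{prop:LAbs}, so every term of $b\phi^n$ evaluates $\phi^n$ on a tensor with a zero slot. One small point: your ``even weaker'' hypothesis (only left factors Q-closed) is the same as requiring all $f_j\in\mathcal{Z}(M,Q)$, since, as you yourself note, every $f_j$ occurs as a left factor in some term.
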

The \emph{derived cyclic complex}  of $(M,Q)$ is the subcomplex of the derived Hochshild complex  given by 
\begin{equation}\label{eqn:ShiCycCom}
C^0_\lambda(M,Q) \stackrel{b}{\longrightarrow} C^1_\lambda(M,Q) \stackrel{b}{\longrightarrow} C^2_\lambda(M,Q) \stackrel{b}{\longrightarrow} \cdots \, .
\end{equation}
\begin{definition}
The cohomology of the derived cyclic complex \eqref{eqn:ShiCycCom} is referred to as the \emph{derived cyclic cohomology} of the Q-manifold $(M,Q)$, and will be denoted $HC^n_\lambda(M,Q)$, $n=0,1,2, \cdots$. A \emph{derived cyclic n-cocycle} is a $b$-closed derived cyclic $n$-cochain.  The resulting classes in the derived cyclic cohomology we refer to as \emph{derived cyclic classes}, and we denote them as $[\phi^n]$.
\end{definition}
\begin{example}
Consider a trivial Q-manifold $(M,\boldsymbol{0})$. The derived product is a zero product, i.e., $f \star g =0$ for all $f,g \in C^\infty(M)$. In turn,  the  Hochschild coboundary operator $b$ is a zero map, and so all derived cochains are $b$-closed.  Thus, $HC^n_\lambda(M, \boldsymbol{0}) \cong C^n_\lambda(M, \boldsymbol{0})$.  
\end{example}
Suppose $\varphi : (M_1, Q_1) \rightarrow (M_2, Q_2)$ is a morphism of Q-manifolds. Recall that Proposition \ref{prop:QMorph} established that $\varphi^* : C^\infty(M_2) \rightarrow C^\infty(M_1)$ is a morphism of the derived associative algebras. The pushforward of a derived $n$-cochain  $\varphi_* : C^n(M_1, Q_1) \rightarrow C^n(M_2, Q_2)$ is defined as 
\begin{equation}
\big( \varphi_* \phi^n\big)(f_0, \cdots , f_n) := \phi^n(\varphi^*f_0, \cdots , \varphi^* f_n)\,,
\end{equation}
where $f_i \in C^\infty(M_2)$. As $\varphi^*$ does not change the parity (shifted or otherwise), we have $\overline{\varphi^* f} = \overline{f}$; thus the pushforward restricts to derived cyclic $n$-cochains. That is
$$\varphi_* \big( C^n_\lambda(M_1, Q_1)\big)\subseteq  C^n_\lambda(M_2, Q_2) \,.$$
As $\varphi^*$ is a morphism of the derived associative algebras, applying the Hochschild coboundary operator $b$ to $\varphi^*\phi^n$, one can quickly observe that as standard $ b \circ \varphi_* = \varphi_*\circ b$. Thus, the pushforward $\varphi_*$ is a cochain map between the derived cyclic complexes. This implies that the pushforward descends to the cohomology:
\begin{align}
& HC^n_\lambda(M_1, Q_1) \longrightarrow HC^n_\lambda(M_2, Q_2)\\ \nonumber
& [\phi^n] \longmapsto [\varphi_* \phi^n]\,.
\end{align}
Thus, we have established the following. 
\begin{theorem}\label{thm:DerCycCoFunct}
The  assignments $\catname{QMan}\in (M,Q)\mapsto HC_\lambda^n(M,Q)\in \catname{SVec},$ and $\Hom_{\catname{QMan}}\big((M_1, Q_1), (M_2, Q_2)\big)\ni \varphi \mapsto \varphi_* : HC_\lambda^n(M_1,Q_1)\rightarrow HC_\lambda^n(M_2,Q_2)$ define a covariant functor $HC^n_\lambda :\catname{QMan} \rightarrow \catname{SVec}$ ($n=0,1, 2, \cdots$).
\end{theorem}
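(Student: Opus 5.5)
The plan is to verify the functor axioms directly, building on the discussion preceding the statement. First I will fix the object assignment. For each $n$, $HC^n_\lambda(M,Q)$ is the quotient of the $b$-closed derived cyclic $n$-cochains by the $b$-images of derived cyclic $(n-1)$-cochains. I will check that it is a super vector space. The parity of a cochain is taken as its Grassmann parity, so $C^n(M,Q)$ splits as $C^n_0\oplus C^n_1$. I need $\lambda$ and $b$ to preserve this splitting. This holds because every sign in their definitions is a scalar depending only on the parities of the arguments. Also, $\star$ has shifted parity $0$, so it does not change the total shifted degree of the arguments. Hence $\phi^n$ and $b\phi^n$ are nonzero on arguments of the same total parity, and the cyclic complex \eqref{eqn:ShiCycCom} is a complex of super vector spaces with even differential. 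Its cohomology is therefore an object of $\catname{SVec}$.

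Next I will treat morphisms. For a Q-morphism $\varphi$, the pushforward $\varphi_*$ is defined by precomposition with $\varphi^*\otimes\cdots\otimes\varphi^*$. First, $\varphi^*$ is continuous for the Fr\'echet topologies, being a pullback of smooth functions. By the universal property of the completed projective tensor product, $\varphi_*\phi^n$ is again a continuous cochain. Second, $\varphi^*$ preserves parity, so the signs in $\lambda$ agree and $\lambda\circ\varphi_*=\varphi_*\circ\lambda$; thus cyclic cochains go to cyclic cochains. Third, I will check that $\varphi^*$ is a homomorphism of derived algebras (Proposition \ref{prop:QMorph}). Substituting $\varphi^*f_i$ into \eqref{eqn:HochCobOp} and using $\varphi^*(f_i\star_2 f_{i+1})=\varphi^*f_i\star_1\varphi^*f_{i+1}$ term by term, together with parity preservation for the final sign, then gives $b\circ\varphi_*=\varphi_*\circ b$. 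From this, $\varphi_*$ maps cocycles to cocycles and coboundaries to coboundaries, so it descends to an even linear map on $HC^n_\lambda$.

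Finally, I will verify functoriality itself. For the identity Q-morphism, $\mathsf{id}^*=\mathsf{id}$, so $\mathsf{id}_*=\mathsf{id}$. For composable Q-morphisms $\varphi:M_1\to M_2$ and $\psi:M_2\to M_3$, we have $(\psi\circ\varphi)^*=\varphi^*\circ\psi^*$ by contravariance of the pullback. Then
$$((\psi\circ\varphi)_*\phi)(f_0,\dots,f_n)=\phi(\varphi^*\psi^*f_0,\dots)=(\psi_*(\varphi_*\phi))(f_0,\dots,f_n).$$
Hence $(\psi\circ\varphi)_*=\psi_*\circ\varphi_*$ already at cochain level, and therefore on cohomology. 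This contravariance of the pullback is exactly why the functor is covariant.

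The main obstacle is the well-definedness of the object assignment, in two places. The first is the topological point that $\varphi_*$ lands in continuous cochains. The second is confirming that $b$ preserves derived cyclic cochains, so that the subcomplex is genuinely a complex. For the latter I will not reprove anything: I will invoke Kastler's result in the shifted grading, which applies verbatim because $(C^\infty(M),\star)$ is an ordinary associative superalgebra for $\overline{\,\cdot\,}$. The remaining checks are routine sign bookkeeping.
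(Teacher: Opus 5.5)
Your proposal is correct and follows the paper's own argument: pushforward by precomposition with $\varphi^*$, parity preservation for cyclicity, and Proposition~\ref{prop:QMorph} for $b\circ\varphi_*=\varphi_*\circ b$. You also supply the identity/composition checks and the continuity of $\varphi_*\phi^n$, which the paper leaves implicit.

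One small slip: with cochains graded by Grassmann parity (your stated convention and the paper's, cf.\ $HC^0_\lambda(\R^{0|1},\partial_\zx)_1\cong\R$), $b$ is odd rather than even, because $\widetilde{f\star g}=\widetilde{f}+\widetilde{g}+1$. It is even only for the grading by total shifted parity of the arguments. This is harmless: $b$ is still homogeneous, so its kernel and image are graded subspaces, $HC^n_\lambda(M,Q)$ is a super vector space, and $\varphi_*$ is even.
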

\begin{remark}
Note that we usually have the cohomology of a space being a contravariant functor. However, the derived cyclic cohomology is a cohomology theory of the algebra of functions on a Q-manifold, and thus a covariant functor when considered as a cohomology theory of Q-manifolds.
\end{remark}
\begin{corollary}\label{cor:Qdiff}
If the Q-manifolds $(M_1, Q_1)$ and $(M_2, Q_2)$ are Q-diffeomorphic, i.e., there is an isomorphism in $\catname{QMan}$ between them, then  $HC^n_\lambda(M_1, Q_1) \cong HC^n_\lambda(M_2, Q_2)$, $n=0,1,2, \cdots$.  Conversely, if any of the derived cyclic cohomology groups are not isomorphic, then the Q-manifolds cannot be Q-diffeomorphic.
\end{corollary}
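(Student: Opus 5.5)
The corollary follows from Theorem \ref{thm:DerCycCoFunct}: functors send isomorphisms to isomorphisms. Let $\varphi : (M_1,Q_1) \rightarrow (M_2,Q_2)$ be a Q-diffeomorphism, with inverse Q-morphism $\psi : (M_2,Q_2)\rightarrow (M_1,Q_1)$, so that $\psi\circ\varphi = \mathsf{id}_{M_1}$ and $\varphi\circ\psi = \mathsf{id}_{M_2}$ in $\catname{QMan}$. Applying the functor $HC^n_\lambda$ gives maps
$$\varphi_* : HC^n_\lambda(M_1,Q_1)\rightarrow HC^n_\lambda(M_2,Q_2)\,, \qquad \psi_* : HC^n_\lambda(M_2,Q_2)\rightarrow HC^n_\lambda(M_1,Q_1)\,,$$
and functoriality, namely $(\psi\circ\varphi)_* = \psi_*\circ\varphi_*$ and $(\mathsf{id})_* = \mathsf{id}$, shows they are mutually inverse. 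Hence $HC^n_\lambda(M_1,Q_1)\cong HC^n_\lambda(M_2,Q_2)$ for every $n$.

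The functoriality used here should be checked at cochain level. For composition, note that $(\psi\circ\varphi)^* = \varphi^*\circ\psi^*$ on functions. Then for $f_i \in C^\infty(M_1)$,
$$\big((\psi\circ\varphi)_*\phi^n\big)(f_0,\cdots,f_n) = \phi^n(\varphi^*\psi^* f_0,\cdots,\varphi^*\psi^*f_n) = \big(\psi_*(\varphi_*\phi^n)\big)(f_0,\cdots,f_n)\,.$$
For identities, $\mathsf{id}^* = \mathsf{id}$ on functions, so $\mathsf{id}_*$ is the identity on cochains. Both equalities hold on cochains, so they persist on cohomology classes. The maps also preserve parity, since $\overline{\varphi^*f}=\overline{f}$, so the isomorphism is one of super vector spaces.

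The converse statement is the contrapositive. If some $HC^n_\lambda(M_1,Q_1)\not\cong HC^n_\lambda(M_2,Q_2)$, then no isomorphism can exist in $\catname{QMan}$. No step here is a real obstacle. The only point requiring care is that inverse morphisms exist in $\catname{QMan}$. This holds because the inverse of a Q-diffeomorphism is itself Q-related: from $Q_1\circ\varphi^* = \varphi^*\circ Q_2$, composing on both sides with $(\varphi^*)^{-1}=\psi^*$ gives $Q_2\circ\psi^* = \psi^*\circ Q_1$.
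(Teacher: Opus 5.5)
Your proposal is correct and takes the same route as the paper, which presents the corollary as an immediate consequence of Theorem \ref{thm:DerCycCoFunct}: functors send isomorphisms to isomorphisms, and the converse is the contrapositive. Your cochain-level checks of $(\psi\circ\varphi)_* = \psi_*\circ\varphi_*$ and $\mathsf{id}_* = \mathsf{id}$, of parity preservation, and that the inverse of a Q-diffeomorphism is again Q-related, just make explicit what the paper leaves implicit.
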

Theorem \ref{thm:DerCycCoFunct} (together with Corollary \ref{cor:Qdiff}) establishes that derived cyclic cohomology is a genuine algebraic invariant of a Q-manifold. 
\begin{proposition}\label{prop:CanMorphCohom}
Let $(M,Q)$ be a Q-manifold and let $(\Pi \sT M, \rmd)$ be the odd tangent bundle equipped with the de Rham differential. Then there is a canonical morphism 
$$ \Psi : HC^n_\lambda(M, Q) \longrightarrow HC^n_\lambda(\Pi \sT M , \rmd)\,$$
for all $n \in \mathbb{N}$.
\end{proposition}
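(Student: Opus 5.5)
The plan is to realise $\Psi$ as the pushforward along the odd anchor of Example \ref{exa:OddAnchor}, and then let functoriality do the rest. By that example, $\mathsf{a}_Q : (M,Q) \rightarrow (\Pi \sT M, \rmd)$ is a Q-morphism, since $Q \circ \mathsf{a}^*_Q - \mathsf{a}^*_Q \circ \rmd = 0$ (\cite[Proposition 2.8]{Bruce:2017}). Because it is built only from $Q$ and the coordinate description of $\Pi \sT M$, this map is canonical: no choices enter.

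We therefore define
$$\Psi := (\mathsf{a}_Q)_* : HC^n_\lambda(M,Q) \longrightarrow HC^n_\lambda(\Pi \sT M, \rmd)\,, \qquad [\phi^n] \longmapsto [(\mathsf{a}_Q)_*\phi^n]\,,$$
where on cochains
$$\big((\mathsf{a}_Q)_*\phi^n\big)(\alpha_0, \cdots, \alpha_n) = \phi^n(\mathsf{a}_Q^*\alpha_0, \cdots, \mathsf{a}_Q^*\alpha_n)\,,$$
for pseudo-differential forms $\alpha_i \in \Omega^\bullet(M)$.

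The key steps follow the argument preceding Theorem \ref{thm:DerCycCoFunct}, in this order.

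\begin{enumerate}[i)]
\item $\mathsf{a}_Q^*$ is a homomorphism of derived associative algebras $(\Omega^\bullet(M), \star_\rmd) \rightarrow (C^\infty(M), \star_Q)$. This is Proposition \ref{prop:QMorph}, as already noted in Example \ref{exa:OddAnchor}.
\item $\mathsf{a}_Q^*$ is parity preserving, so $\overline{\mathsf{a}_Q^*\alpha} = \overline{\alpha}$. Hence the sign in the shifted cyclic operator $\lambda$ is unchanged, and the pushforward maps $C^n_\lambda(M,Q)$ into $C^n_\lambda(\Pi \sT M,\rmd)$.
\item $b \circ (\mathsf{a}_Q)_* = (\mathsf{a}_Q)_* \circ b$. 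Substitute $\mathsf{a}_Q^*(\alpha_i \star_\rmd \alpha_{i+1}) = \mathsf{a}_Q^*\alpha_i \star_Q \mathsf{a}_Q^*\alpha_{i+1}$ termwise into \eqref{eqn:HochCobOp}; the last term works the same way, with the sign factor preserved by step ii).
\end{enumerate}

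Steps i)–iii) make $(\mathsf{a}_Q)_*$ a cochain map, so it descends to cohomology in every degree $n$. Equivalently, $\Psi$ is simply $HC^n_\lambda$ applied to $\mathsf{a}_Q$ via Theorem \ref{thm:DerCycCoFunct}.

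The main obstacle is continuity. For $(\mathsf{a}_Q)_*\phi^n$ to be a genuine cochain, i.e.\ an element of $\InHom_{\catname{NFSVec}}$ on the completed tensor power, we need $\mathsf{a}_Q^*$ to be continuous in the Fréchet topologies and its tensor powers to extend to the completed projective tensor products. I would argue this as follows: $\mathsf{a}_Q^*$ is a morphism of superalgebras of global sections coming from a morphism of supermanifolds, so it is continuous (\cite[Chapter 4.5]{Carmeli:2011}). Tensor products of continuous linear maps extend continuously to completed projective tensor products (\cite{Treves:2006}).

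A secondary point to address is that $\Pi \sT M$ is generally noncompact in the odd directions only, and its reduced manifold is $|M|$. So if $M$ is compact, the setting stays within the paper's compactness conventions, and no issues with compact supports arise.
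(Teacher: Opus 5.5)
Your proposal is correct and follows the paper's proof: $\Psi$ is the pushforward $(\mathsf{a}_Q)_*$ along the odd anchor, which is a Q-morphism by Example \ref{exa:OddAnchor}, so Theorem \ref{thm:DerCycCoFunct} does the work, and your steps i)--iii) and the continuity check spell out what the paper leaves implicit. One correction to your closing remark, which the argument does not need: the fibre coordinates $\rmd\zx^\alpha$ attached to the odd coordinates $\zx^\alpha$ of $M$ are even, so the reduced manifold of $\Pi\sT M$ is not $|M|$ but the total space of a vector bundle over $|M|$ whose rank is the odd dimension of $M$ (positive whenever $Q\neq\boldsymbol{0}$), hence it is noncompact even for compact $M$; this is harmless, since the cochains are defined on all of $C^\infty(\Pi\sT M)$, which is nuclear Fr\'{e}chet without any compactness assumption.
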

\begin{proof} 
The  `odd anchor' $\mathsf{a}_Q : \Omega^\bullet(M) \rightarrow C^\infty(M)$, which is canonical and exists for all Q-manifolds, is a morphism between the respective derived associative algebras (see Example \ref{exa:OddAnchor}).  Then, via Theorem \ref{thm:DerCycCoFunct}, we know such a morphism induces a morphism between the derived cyclic cochains that descends to the derived cyclic cohomology.  We then set 
\begin{align*}
 \Psi : & HC^n_\lambda(M, Q) \longrightarrow HC^n_\lambda(\Pi \sT M, \rmd)\\
& [\phi^n] \longmapsto [(\mathsf{a}_Q)_* \phi^n]\,.
\end{align*}
\end{proof}
\begin{definition}
Let $(M, Q)$ be a Q-manifold and let $ \Psi : HC^n_\lambda(M, Q) \longrightarrow HC^n_\lambda(\Pi \sT M , \rmd)$ be the canonical morphism defined in Proposition \ref{prop:CanMorphCohom}. We refer to  $[\phi] \in \textsf{Im}(\Psi) \subset HC^n_\lambda(\Pi \sT M , \rmd)$ as  an \emph{anchored derived cyclic class} or, for brevity, an \emph{anchored class} when no confusion can arise.  
\end{definition}
\begin{proposition}\label{prop:CochainsInv}
Let $(M, Q)$ be a Q-manifold, then all $b$-closed cochains $\phi^n \in C^n(M,Q)$  satisfy $\phi^n \circ (Q \otimes \cdots \otimes Q) =0$.
\end{proposition}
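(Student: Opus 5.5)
We need to show: if $b\phi^n=0$ then $\phi^n(Q f_0,\dots,Q f_n)=0$. The plan is to evaluate the cocycle condition $(b\phi^n)(g_0,\dots,g_{n+1})=0$ on a cleverly chosen $(n+2)$-tuple so that all but one term dies, and the surviving term is $\pm\phi^n(Qf_0,\dots,Qf_n)$.

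The key facts are the following. Q-closed functions are left absorbers (Proposition \ref{prop:LAbs}), so $Q(f)\star h=0$. Also, $1\star h = Q(1)h = 0$. And $h\star 1 = (-1)^{\widetilde h}Q(h)$. The choice I would make is
$$(g_0,g_1,\dots,g_{n+1}) = (f_0, f_1,\dots,f_n, 1).$$
Then the last internal term is $\phi^n(f_0,\dots,f_n\star 1)=\pm\phi^n(f_0,\dots,f_{n-1},Qf_n)$, while the others contain $f_i\star f_{i+1}$ and do not obviously vanish, and the wrap-around term contains $1\star f_0=0$. This is not yet enough, so I will instead use Q-exact entries. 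Take $g_i=Q(f_i)$ for $i\le n$ and $g_{n+1}=1$. Every product $g_i\star g_{i+1}$ with $i<n$ vanishes because $g_i$ is Q-closed, hence a left absorber. The wrap-around term contains $1\star g_0 = Q(1)g_0=0$. The only surviving term is $(-1)^n\phi^n(Qf_0,\dots,Qf_{n-1},Qf_n\star 1)$, and $Qf_n\star 1=\pm Q(Q f_n)=0$. So this choice kills everything, which confirms the Observation but yields nothing.

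Hence the correct choice is $g_i=Q(f_i)$ for $i<n$, $g_n=f_n$, and $g_{n+1}=1$. Checking each term:
\begin{itemize}
\item for $i<n-1$ (and $i=n-1$), $g_i\star g_{i+1}$ has $g_i$ Q-closed, so these terms vanish;
\item the $i=n$ term is $\phi^n(Qf_0,\dots,Qf_{n-1},f_n\star 1)$, where $f_n\star 1=(-1)^{\widetilde f_n}Q(f_n)$;
\item the wrap-around term is $\phi^n(1\star Qf_0,\dots)=0$.
\end{itemize}
Therefore
$$0=(b\phi^n)(Qf_0,\dots,Qf_{n-1},f_n,1)=(-1)^{n+\widetilde f_n}\,\phi^n(Qf_0,\dots,Qf_n),$$
which gives the claim on pure tensors. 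For $n=0$ this reduces to evaluating on $(f_0,1)$, with wrap-around $1\star f_0=0$, giving $\phi^0(Qf_0)=0$.

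To finish, I extend from homogeneous pure tensors to all of $C^\infty(M)^{\widehat\otimes(n+1)}$ by linearity, by continuity of $\phi^n$ and $Q$, and by density of the algebraic tensor product. The only real care needed is the bookkeeping of signs and the shifted grading, which is harmless here since only one term survives; the main step is the choice of the test tuple.
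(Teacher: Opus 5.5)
Your argument is correct and is essentially the paper's proof: the paper also evaluates the cocycle condition on $(Q(f_0),\dots,Q(f_{n-1}),f_n,1)$, using $1\star f_0=0$ to kill the wrap-around term and the left-absorber property of $Q$-closed functions to kill the internal terms, which leaves $(-1)^{n+\widetilde{f}_n}\phi^n(Q(f_0),\dots,Q(f_n))=0$. Your closing step, extending from homogeneous pure tensors to the completed tensor product by linearity, continuity and density, makes explicit a point the paper leaves implicit.
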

\begin{proof}
Let $f_i \in C^\infty(M)$ for $0\leq i \leq n$ be arbitrary (homogeneous) functions. Directly from the  Hochschild coboundary operator \eqref{eqn:HochCobOp}, setting $f_{n+1} =1$ (the constant unit function) and using $1 \star f_0 =0$ we obtain for $b$-closed cochains
\begin{align*}
 \sum_{i=0}^{n}(-1)^i\, \phi^n (f_0, \cdots , f_i \star f_{i+1}, \cdots ,1)  &= \sum_{i=0}^{n-1}(-1)^i\,\phi^n (f_0, \cdots , f_i \star f_{i+1}, \cdots ,1)\\ & + (-1)^n \,  \phi^n(f_0, f_1 , \cdots ,f_n \star 1)\\
&= \sum_{i=0}^{n-1}(-1)^i\,\phi^n (f_0, \cdots , f_i \star f_{i+1}, \cdots ,1)\\ & + (-1)^{n +\widetilde{f}_n} \,  \phi^n(f_0, f_1 , \cdots ,Q(f_n))  =0\,.
\end{align*}
Then sending $f_i \mapsto Q(f_i)$ for $0 \leq i <n$, noting that $Q(f_i) \star Q(f_{i+1}) =0$, we obtain 
\begin{equation}
\phi^n(Q(f_0),Q(f_1), \cdots , Q(f_n)) =0\,.
\end{equation} 
As the functions are arbitrary, the above can be written as $\phi^n \circ (Q \otimes \cdots  \otimes Q) =0$. 
\end{proof}
Any representative of a derived cyclic class $[\phi^n] \in HC^n_\lambda(M,Q)$ is of the form $\phi^n + b \psi^{n-1}$, where $\psi^{n-1} \in C^{n-1}_\lambda(M,Q)$. The $b$-exact term is, of course, cohomologically trivial.  Thus, elements of  $HC^n_\lambda(M,Q)$ for $n\in\mathbb{N}$ are interpreted as equivalence classes of classically BRST-invariant shifted cyclic multi-functionals.  In particular, once we have found a derived cyclic $n$-cocycle, its derived cyclic class $[\phi^n]$ measures  if $\phi^n$ is cohomologically trivial, i.e., if $\phi^n = b \psi^{n-1}$. As $b \phi^n=0$, Proposition \ref{prop:CochainsInv} tells us that  $\phi^n = b \psi^{n-1}$ is BRST-invariant, however $\psi^{n-1}$ itself need not be BRST-invariant. \par  
Note that the cyclic property has not been used to deduce the invariance, and so the derived Hochschild cohomology groups $HC^n(M,Q)$ have a similar interpretation, however, now without the shifted cyclic property.  \par 
A $0$-cochain, $\tau : C^\infty(M) \rightarrow \R$, is a shifted graded trace if and only if $\tau([f,g]_\star)=0$ for all $f$ and $g\in C^\infty(M)$. Moreover, directly we observe that  $(b \tau)(f_0, f_1) = \tau(f_0 \star f_1) - (-1)^{ \overline{f}_1 \, \overline{f}_0 }\, \tau(f_1 \star f_0)$, and so shifted graded traces define derived cyclic $0$-cocycles (there are no $-1$-cochains and so non-$b$-exactness is guaranteed).
\begin{lemma}\label{lem:0Cocycles}
Let $(M, Q)$ be a Q-manifold. A $0$-cochain $\tau$ is a derived cyclic $0$-cocycle if and only if $\tau \circ Q =0$.
\end{lemma}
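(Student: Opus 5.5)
First observe that every $0$-cochain is automatically derived cyclic: for $n=0$ the shifted cyclic operator reads $(\lambda\tau)(f_0) = (-1)^{0+\overline{f}_0\cdot 0}\,\tau(f_0)=\tau(f_0)$, so $C^0_\lambda(M,Q)=C^0(M,Q)$. Hence the lemma reduces to showing that $b\tau=0$ if and only if $\tau\circ Q=0$. From \eqref{eqn:HochCobOp} with $n=0$ we have
$$(b\tau)(f_0,f_1) = \tau(f_0\star f_1) - (-1)^{\overline{f}_0\,\overline{f}_1}\,\tau(f_1\star f_0) = \tau\big([f_0,f_1]_\star\big) = (-1)^{\widetilde{f}_0}\,\tau\big(Q(f_0f_1)\big)\,,$$
where the last equality is the closed form of the derived commutator from Definition \ref{def:ShiftCom}. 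I will verify once that the sign in the Hochschild term, $(-1)^{1+\overline{f}_1\overline{f}_0}$, matches the sign $-(-1)^{(\widetilde{f}_0+1)(\widetilde{f}_1+1)}$ in the commutator. This is immediate, since $\overline{f}=\widetilde{f}+1$.

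For the direction ($\Leftarrow$), if $\tau\circ Q=0$ then every term $\tau(Q(f_0f_1))$ vanishes, so $b\tau=0$.

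For the direction ($\Rightarrow$), suppose $b\tau=0$. Set $f_1=1$, the constant unit function, which is permitted because $C^\infty(M)$ is unital for the ordinary product. Then $0=(b\tau)(f_0,1)=(-1)^{\widetilde{f}_0}\tau(Q(f_0))$ for every homogeneous $f_0$. Extending by linearity over homogeneous components gives $\tau\circ Q=0$. Alternatively, this direction is exactly the case $n=0$ of Proposition \ref{prop:CochainsInv}, whose proof uses the same substitution $f_{n+1}=1$.

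The only point needing care is the sign bookkeeping between the shifted and standard gradings in the $n=0$ coboundary. Beyond that the argument is a direct substitution, and no obstacle is expected. I will also remark that there are no $(-1)$-cochains, so cocycles coincide with cohomology classes in degree zero.
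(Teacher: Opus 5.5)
Your proposal is correct and follows essentially the same route as the paper: you identify $(b\tau)(f_0,f_1)$ with $\tau([f_0,f_1]_\star)=(-1)^{\widetilde{f}_0}\,\tau\big(Q(f_0f_1)\big)$, which gives ($\Leftarrow$) immediately, and you obtain ($\Rightarrow$) by substituting the constant function $1$ in the second slot. Your explicit checks that $C^0_\lambda(M,Q)=C^0(M,Q)$ and that the Hochschild sign matches the commutator sign are points the paper leaves implicit.
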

\begin{proof}
If $\tau \circ Q =0$, then $\tau(Q(h)) = 0$ for all $h \in C^\infty(M)$. Setting $h = (-1)^{\widetilde{f}}\, fg$ we obtain $\tau((-1)^{\widetilde{f}}\, Q(fg)) = \tau([f,g]_\star)=0$. Thus, $\tau$ is a shifted graded trace and so is a derived cyclic $0$-cocycle. Conversely, if $\tau([f,g]_\star) = 0$ for all $f$ and $g \in C^\infty(M)$, setting $g = 1$ (the constant unit function) we obtain $\tau([f, 1]_\star) = \tau ((-1)^{\widetilde{f}}\, Q(f))=0$. Thus, as $f$ arbitrary, we have $\tau \circ Q =0$. 
\end{proof}
\begin{proposition}\label{prop:IntShiCo}
Let $(M,Q)$ be a unimodular, superoriented, compact Q-manifold, and let $\p$ be a Q-invariant Berezin volume. Then the Berezin  integral $\tau(f) := \int_M \p \, f$ is a derived cyclic $0$-cocycle. 
\end{proposition}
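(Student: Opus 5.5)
The plan is to reduce the statement to Lemma \ref{lem:0Cocycles}, which characterises derived cyclic $0$-cocycles as exactly those $0$-cochains $\tau$ with $\tau \circ Q = 0$. It then suffices to check three things: that $\tau(f) := \int_M \p\, f$ is a legitimate $0$-cochain, that it is automatically derived cyclic, and that it annihilates $Q$-exact functions.

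\textbf{Well-definedness.} Since $M$ is compact and superoriented, the Berezin integral $f \mapsto \int_M \p\, f$ is a well-defined $\R$-linear map $C^\infty(M) \to \R$. It is continuous for the Fr\'echet topology, because in local charts (with a partition of unity) it is integration of the top $\theta$-component against a smooth density over a compact set. Hence $\tau \in C^0(M,Q)$.

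\textbf{Cyclicity.} For $n = 0$ the shifted cyclic operator $\lambda$ acts as the identity: the sign is $(-1)^{0 + \overline{f}_0 \cdot 0} = 1$ and there is nothing to permute. So $C^0_\lambda(M,Q) = C^0(M,Q)$, and no extra condition is needed.

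\textbf{Annihilating $Q$-exact functions.} We reproduce the first half of the proof of Proposition \ref{prop:BerIntTrace}. For any $h \in C^\infty(M)$,
\[
\int_M \mathcal{L}_Q(\p\, h) = \int_M (\mathcal{L}_Q \p)\, h \pm \int_M \p\, Q(h).
\]
The left-hand side vanishes because the Berezin integral of a Lie derivative (a total divergence) is zero on a compact superoriented supermanifold. The first term on the right vanishes because $\p$ is $Q$-invariant. Hence $\tau(Q(h)) = 0$ for all $h$, i.e.\ $\tau \circ Q = 0$, and Lemma \ref{lem:0Cocycles} gives the claim.

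Alternatively, one can simply note that Proposition \ref{prop:BerIntTrace} already shows $\tau$ is a shifted graded trace. The remark preceding Lemma \ref{lem:0Cocycles}, namely $(b\tau)(f_0,f_1) = \tau([f_0,f_1]_\star)$, then gives $b\tau = 0$ directly.

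The only step needing care is the vanishing of $\int_M \mathcal{L}_Q(\p\, h)$. For an odd vector field this is the super-Stokes/divergence theorem. Locally, $\mathcal{L}_Q(\p\, h)$ is a sum of coordinate derivatives of compactly supported components: derivatives $\partial_{x}$ integrate to zero by the ordinary Stokes theorem, and derivatives $\partial_\theta$ contribute no top Grassmann component. This is the same fact the paper already uses in Proposition \ref{prop:BerIntTrace}, so I would cite it rather than reprove it.
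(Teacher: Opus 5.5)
Your proposal is correct and follows the paper's route. The paper likewise derives $\int_M \p\, Q(f)=0$ from the $Q$-invariance of $\p$ and compactness, as in Proposition \ref{prop:BerIntTrace}, then applies Lemma \ref{lem:0Cocycles}; it also remarks on your alternative route via the shifted-trace property. Your added checks (continuity of $\tau$, and $C^0_\lambda = C^0$) are harmless details the paper leaves implicit.
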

\begin{proof}
Given the conditions on $(M,Q)$, we have $\int_M  \p\, Q(f) =0$, and thus, via Lemma \ref{lem:0Cocycles}, we conclude that the Berezin integral (with respect to a Q-invariant Berezin volume) is a derived cyclic $0$-cocycle.
\end{proof}
\begin{remark}
 Proposition \ref{prop:IntShiCo} can also be seen as a direct corollary of Proposition \ref{prop:BerIntTrace}.
\end{remark}
To set some notation, we define $\mathrm{res}_m : C^\infty(M) \rightarrow \cO_{M,m}$ as the algebra map sending a function to its germ at $m \in |M|$. Similarly, we define $\epsilon_m : \cO_{M,m} \rightarrow  C^\infty_m$, which is heuristically `throwing away the odd coordinates', and $\mathrm{ev}_m : C^\infty_m \rightarrow \R$ is the standard evaluation map at $m$. \par 
The homological vector field $Q \in \Vect(M)$ induces a derivation at the levels of stalks, which we denote as $Q_m : \cO_{M,m} \rightarrow \cO_{M,m}$. We then define the associated (odd) tangent vector at $m \in |M|$ as
$$Q|_m := \mathrm{ev}_m \circ \epsilon_m \circ Q_m \in \sT_m M\,.$$
\begin{proposition}
Let $(M,Q)$ be a Q-manifold and $m \in |M|$ be a point such that $Q|_m =0$. then the $0$-cochain $\tau_m := \mathrm{ev}_m \circ \epsilon_m \circ \mathrm{res}_m$ is a derived cyclic $0$-cocycle.
\end{proposition}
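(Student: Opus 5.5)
By Lemma \ref{lem:0Cocycles}, it suffices to show that $\tau_m \circ Q = 0$, i.e.\ $\tau_m(Q(f)) = 0$ for every $f \in C^\infty(M)$. The plan is to unwind the definition of $\tau_m$ and use the fact that the three maps $\mathrm{res}_m$, $\epsilon_m$, $\mathrm{ev}_m$ intertwine $Q$ with its stalk version $Q_m$.

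\emph{Step 1: pass to the stalk.} Since $Q$ is a morphism of sheaves, it commutes with restriction to germs, $\mathrm{res}_m \circ Q = Q_m \circ \mathrm{res}_m$. Hence
$$\tau_m(Q(f)) = \mathrm{ev}_m\circ\epsilon_m\circ Q_m(\mathrm{res}_m f).$$

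\emph{Step 2: expand in coordinates.} Work in local coordinates $(x^a,\theta^\alpha)$ around $m$ and write $Q = Q^a(x,\theta)\partial_{x^a} + Q^\alpha(x,\theta)\partial_{\theta^\alpha}$, with $Q^a$ odd and $Q^\alpha$ even. Applying $\epsilon_m$, which is an algebra map killing the $\theta$'s, gives
$$\epsilon_m\big(Q_m(f)\big) = \epsilon_m(Q^a)\,\epsilon_m(\partial_{x^a}f) + \epsilon_m(Q^\alpha)\,\epsilon_m(\partial_{\theta^\alpha}f).$$
The coefficients $Q^a$ are odd, so their bodies vanish: $\epsilon_m(Q^a)=0$. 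The first sum therefore drops out, and after evaluation at $m$ only $\sum_\alpha Q^\alpha|_{\theta=0}(m)\,(\partial_{\theta^\alpha}f)|_{\theta=0}(m)$ survives.

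\emph{Step 3: use the hypothesis.} Next, compute $Q|_m$ on coordinate functions. By definition, $Q|_m(\theta^\alpha) = \mathrm{ev}_m\epsilon_m(Q^\alpha)$ and $Q|_m(x^a) = \mathrm{ev}_m\epsilon_m(Q^a) = 0$ automatically. So the condition $Q|_m = 0$ says exactly that $Q^\alpha|_{\theta=0}(m) = 0$ for all $\alpha$. This kills the remaining terms, giving $\tau_m(Q(f)) = 0$, and Lemma \ref{lem:0Cocycles} then concludes.

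\emph{Main obstacle.} The only delicate point is justifying that $\mathrm{ev}_m\circ\epsilon_m\circ Q_m$ is determined on coordinates, i.e.\ that it behaves as a point derivation (a tangent vector in $\sT_m M$) relative to $\mathrm{ev}_m\circ\epsilon_m$. This holds because $\epsilon_m$ and $\mathrm{ev}_m$ are algebra homomorphisms and $Q_m$ is a derivation. It is precisely what makes the hypothesis $Q|_m = 0$, stated intrinsically, equivalent to the vanishing of $Q^\alpha|_{\theta=0}(m)$, and hence to $\tau_m(Q(f)) = 0$ for every $f$, not merely for coordinate functions. If a coordinate-free argument is preferred, note that $Q|_m = 0$ directly gives $\mathrm{ev}_m\circ\epsilon_m\circ Q_m = 0$ on all germs, so Step 1 already finishes the proof.
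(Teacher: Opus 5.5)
Your proof is correct and is essentially the paper's argument: reduce to $\tau_m\circ Q=0$ via Lemma \ref{lem:0Cocycles}, commute $Q$ past $\mathrm{res}_m$ to land on $\mathrm{ev}_m\circ\epsilon_m\circ Q_m$, and note that this composite is by definition $Q|_m$, which vanishes on all germs by hypothesis (your closing coordinate-free remark). The coordinate computation in Steps 2--3 and the ``main obstacle'' discussion are therefore not needed, though they usefully show that the hypothesis amounts to $Q^\alpha|_{\theta=0}(m)=0$ for all $\alpha$.
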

\begin{proof}
Let $f\in C^\infty(M)$ be arbitrary. Then, using the standard properties of sheaves and restrictions, etc., we see that 
\begin{align*}
(\tau_m \circ Q)(f) &= \tau_m(Q(f)) = (\mathrm{ev}_m \circ \epsilon_m \circ \mathrm{res}_m) (Q(f))\\
&= (\mathrm{ev}_m \circ \epsilon_m)(Q_m(f_m)) = (\mathrm{ev}_m \circ \epsilon_m \circ Q_m)(f_m)\\
& = Q|_m f_m =0\,,
\end{align*} 
where the last equality is by hypothesis.  As $f$ is arbitrary, we have $\tau_m \circ Q =0$, and via Lemma \eqref{lem:0Cocycles}, $\tau_m$ is a derived cyclic $0$-cocycle.  
\end{proof}
\begin{example}
Consider a supermanifold $M$ considered as a trivial Q-manifold, i.e., the homological vector field is the zero vector field. Thus, $H_{\mathrm{std}}^\bullet(M, 0)= C^\infty(M)$. Any $0$-cochain satisfies $\tau \circ Q=0$ trivially. Thus, $HC^0_\lambda(M,0) \cong \InHom_{\catname{NFSVec}}\big( C^\infty(M), \R\big) = C^0_\lambda(M,Q)$.
\end{example}
\begin{example}\label{exa:OddLine}
Consider the odd line $M = \R^{0|1}$, which we equip with a global coordinate $\zx$, and a canonical homological vector field given by $Q = \partial_\zx$. Functions on $\R^{0|1}$ are of the form $f = a + b\, \zx$, where $a,b \in \R$. Observe that $\mathcal{Z}(\R^{0|1})\cong \R $ and $\mathcal{B}(\R^{0|1})\cong \R $, and so $H^\bullet_\textnormal{std}(\R^{0|1},\partial_\zx) = 0$.  That is, the standard cohomology is trivial.\par
As $(\R^{0|1}, \partial_\zx)$ is unimodular, superoriented (trivially) and  compact $(|\R^{0|1}| = \mathrm{pt})$,  Proposition \ref{prop:IntShiCo} immediately tells us that 
the standard Berezin integral on $\R^{0|1}$ is a derived cyclic $0$-cocycle.  Observe that
$$\tau(f) = \int_{\R^{0|1}}D[\zx] \, (a + b \, \zx) = b \in \R\,.$$
Thus, $[\tau] \in HC^0_\lambda(\R^{0|1},\partial_\zx)$ is non-trivial.  \par 
Any derived cyclic $0$-cocycle $\tau'$ is a linear functional and so is determined by its outputs on $\{ 1, \zx\}$. We set $\tau'(1) = c_0$ and $\tau'(\zx) = c_1$. Thus, $\tau'(a + b \, \zx) = ac_0 + bc_1$. We know from Lemma \ref{lem:0Cocycles} that $\tau' \circ Q =0$.  In the current setting, we have 
$$\tau' \big(\partial_\zx (a + b \, \zx) \big) = \tau'(b) = b c_0 =0\,.$$
This forces $c_0 =0$: meaning that $\tau'(1)=0$, but $\tau'(\zx) = c_1 \in \R$ is not constrained. Thus, $\tau'(a + b\, \zx) = bc_1 = c_1 \tau(a + b\, \zx)$. That is, any other derived cyclic $0$-cocycle is a scalar multiple of the standard Berezin integral on $\R^{0|1}$.  This implies that the space of cyclic $0$-cocycles is isomorphic to $\R$. As there are no  cyclic $-1$-cochains, we conclude that 
$$HC^0_\lambda(\R^{0|1},\partial_\zx)_0 =0\,, \qquad HC^0_\lambda(\R^{0|1},\partial_\zx)_1 \cong \R\,.$$  
\end{example}
\begin{example}
Consider $(\Pi \sT S, \rmd)$ which employ coordinates $(\vartheta, \rmd\vartheta)$, with $\vartheta \in [ 0, 2 \pi)$. The de Rham differential in these coordinates is $\rmd = \rmd \vartheta \partial_\vartheta$. The standard cohomology, once form degree has been included, corresponds to the de Rham cohomology. Recall, $H^0_{\mathrm{dR}}(S) \cong \R$ and $H^1_{\mathrm{dR}}(S) \cong \R$, while all higher de Rham cohomology groups are trivial.  Note $\Pi \sT S$ is unimodular, superorientable and compact. Functions on $\Pi \sT S$ are of the form $\alpha = \alpha_0(\vartheta) + \rmd \vartheta \, \alpha_1(\vartheta)$. Any cyclic $0$-cochain decomposes into even and odd parts (here defined with respect to the original grading) as
 $$\tau(\alpha_0 + \rmd \vartheta \, \alpha_1) = \tau_0(\alpha_0) + \tau_1(\rmd \vartheta \, \alpha_1)\,.$$
Enforcing the condition $\tau \circ \rmd =0$ implies that $\tau(\rmd \alpha) = \tau_1(\rmd \vartheta \, \partial_\vartheta\alpha_0)=0$. Thus, $\tau_1$ must annihilate  exact forms, implying  $\tau_1(\rmd \vartheta \,\alpha_1) = c \, \int_{\Pi \sT S} D[\vartheta , \rmd \vartheta] \,\rmd \vartheta \, \alpha_1(\vartheta)$, for some $c \in \R$.  There is no constraint on $\tau_0$. Then we have 
$$\tau(\alpha_0 + \rmd \vartheta\, \alpha_1) = \tau_0(\alpha_0(\vartheta)) +  c \, \int_{\Pi \sT S} D[\vartheta , \rmd \vartheta] \, \rmd \vartheta \,\alpha_1(\vartheta) \,,$$
as the general form of a cyclic $0$-cocycle.  We then conclude that 
$$HC^0_\lambda(\Pi \sT S,\rmd) \cong C^0_\lambda(\Pi \sT S, d)_0 \oplus \R\,.$$ 
\end{example}
\begin{example}
Continuing Example \ref{exa:HigherPoisson}, let $(M, \mathcal{P})$ be a higher Poisson manifold. Then a cyclic $0$-cocycle satisfies $\tau \big( (\mathcal{P},  \mathcal{X})\big)=0$, for all $\mathcal{X} \in C^\infty(\Pi \sT^* M)$. Any odd Poisson trace, i.e, a functional $\tau : C^\infty(\Pi \sT^* M) \rightarrow \R$ that satisfies $\tau \big( (\mathcal{X}, \mathcal{Y})\big) =0$ for all $\mathcal{X}, \mathcal{Y}\in C^\infty(\Pi \sT^* M)$, is a cyclic $0$-cocycle. Moreover, defining the Hamiltonian vector field $X_{\mathcal{X}}:= (-1)^{\widetilde{\mathcal{X}}}\, (\mathcal{X}, -)$, we observe that any odd Poisson trace satisfies $\tau \circ X_{\mathcal{X}} =0$.
\end{example}
\begin{example}
Consider $(S^{1|1}, Q)$, where $Q$ is a non-singular homological vector field. Then via Shander--Vaintrob \cite{Shander:1980,Vaintrob:1996} we know there is a Q-diffeomorphism $S^{1|1} \rightarrow S^1 \times \R^{0|1}$ rendering $Q= \partial_\zx$, where $\zx$ is the global coordinate on $\R^{0|1}$.
Any function $C^\infty(S^{1|1})$ is of the form $f = f_0 + f_1 \, \zx$.  We know that   any derived cyclic $0$-cocycle must satisfy $\tau \circ Q =0$. Thus, $\tau(Qf) = \tau(f_1) =0$, and $\tau$ therefore must annihilate functions in $C^\infty(S^1)$. This implies that $\tau$ is determined by its action on the odd sector $C^\infty(S^1)\cdot \zx$. As there are no further conditions, we conclude that 
$$HC^0_\lambda(S^{1|1}, Q)_0 =0\, , \qquad HC^0_\lambda(S^{1|1}, Q)_1 \cong \big( C^\infty(S^1)\big)^*\,.$$
\end{example}
\begin{example}
Consider $(\R^{n|m}, Q)$ where $Q$ is taken to be non-singular. Then via Shander--Vaintrob  \cite{Shander:1980,Vaintrob:1996}, we may equip the supermanifold with (global) coordinates $(x^a, \zx, \theta^\alpha)$, with $a= 1, \cdots, n$ and $\alpha = 1, \cdots m-1$, and $Q = \partial_\zx$. We write $f = f_0(x, \theta) + \zx\, f_1(x, \theta) \in C^\infty(\R^{n|m})$. Then $Qf = f_1(x, \theta)$.  We know that   any derived cyclic $0$-cocycle must satisfy $\tau \circ Q =0$. Thus, $\tau(f_1(x, \theta))=0$.  We then define $\tau_1(g(x, \theta)) := \tau (\zx \, g(x,\theta))$ and thus 
\begin{align*}
& HC^0_\lambda(\R^{n|m}, Q)_0 \cong \InHom_{\catname{NFSVec}}\big( C^\infty(\R^{n| m-1}), \R\big)_1  \, , \\
& HC^0_\lambda(\R^{n|m}, Q)_1 \cong \InHom_{\catname{NFSVec}}\big( C^\infty(\R^{n|m-1}), \R\big)_0\,.
\end{align*}
\end{example}
The above examples, while modest in scope, clearly demonstrate that derived cyclic cohomology is quite independent from the standard cohomology of a Q-manifold. \par 
We expect, in general, calculating the derived cyclic cohomology and explicitly presenting cyclic $n$-cocycles for  Q-manifolds without extra structure to be very difficult.   It is important to note that the derived associative algebra is non-unital and so many of the standard methods of calculating cyclic cohomology, for example Connes' SBI sequence, will not directly apply to derived cyclic cohomology. We will comment on this a little further in our concluding remarks. In Subsection \ref{subsec:OddLine} we explore the derived cyclic cohomology of the odd line extending Example \ref{exa:OddLine}.  We will present derived cyclic $n$-cocycles for regular foliations via their foliation Lie algebroid in Subsection \ref{subsec:RegFoli}. Further examples are left open for future work. 
%
%
\subsection{Derived Cyclic Cohomology of the Odd Line}\label{subsec:OddLine}
Consider the odd line $M = \R^{0|1}$, which we equip with a global coordinate $\zx$, and a canonical homological vector field given by $Q = \partial_\zx$. The derived product of two functions $f = a + b\, \zx$ and $g = c + d \, \zx$ ($a,b,c$ and $d \in \R$) is  $f\star g = - bc - bd \, \zx$. Defining the linear map $\eta(a + b \, \zx) = b$, we can write the derived product as $f\star g =  - \eta(f)g$.  As a super vector space, $C^\infty(\R^{0|1}) = \R[\zx]$ has a canonical basis $\{1, \zx \}$. In turn, this implies we can write the Hochschild coboundary operator, setting each $f \in \{ 1 , \zx\}$ as 
\begin{align}\label{eqn:HochCobOddLine}
(b \phi^n)(f_0, \cdots , f_{n+1}) &= \sum_{i=0}^n (-1)^{i+1}\, \eta(f_i)\, \phi^n(f_0, \cdots, \hat{f}_i, \cdots , f_{n+1})\\ \nonumber 
&+ (-1)^{n +\overline{f}_{n+1}(\overline{f}_0 + \cdots + \overline{f}_n)}\, \eta(f_{n+1})\, \phi^n(f_0, \cdots , f_n)\\ \nonumber
& = \sum_{i=0}^{n+1} (-1)^{i+1}\, \eta(f_i)\, \phi^n(f_0, \cdots, \hat{f}_i, \cdots , f_{n+1})\,.
\end{align}  
\begin{theorem}\label{thm:DerCycCoOddLine}
Let the odd line $M = \R^{0|1}$ be equipped with a global coordinate $\zx$, and the canonical homological vector field given by $Q = \partial_\zx$. Then 
$$
HC^n_\lambda(\R^{0|1}, \partial_\zx) = \begin{cases} 
\R^{0|1} = 0 \oplus \R \,, & \text{if~} n \text{~is even} \\ 
0 = 0\oplus 0 \,, & \text{if~} n \text{~is odd} 
\end{cases}\,.
$$
\end{theorem}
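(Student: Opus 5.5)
The plan is to make the complex explicit in the basis $\{1,\zx\}$ and split it into pieces small enough to handle by hand. Under the shifted grading, $\overline{1}=1$ and $\overline{\zx}=0$. The product is $f\star g=-\eta(f)g$, so $e:=-\zx$ is a shifted-even left unit and $1\star(-)=0$.

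\textbf{Step 1: use formula \eqref{eqn:HochCobOddLine} and grade by the number of ones.} By \eqref{eqn:HochCobOddLine}, $b\phi^n$ evaluated on a word $(f_0,\dots,f_{n+1})$ of basis elements only sees words obtained by deleting a letter $\zx$, since $\eta(1)=0$. So if $k$ denotes the number of entries equal to $1$ in a basis word, $b$ preserves $k$. The cyclic operator $\lambda$ also preserves $k$, because it only permutes entries. Hence the derived cyclic complex splits as a direct sum $\bigoplus_{k\ge 0}C^\bullet_{\lambda,(k)}$, and it suffices to compute the cohomology of each summand.

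\textbf{Step 2: the $k=0$ summand.} For each $n$, $C^n_{(0)}$ is one-dimensional, spanned by $\omega_n:=\eta^{\otimes(n+1)}$, i.e.\ $\omega_n(\zx,\dots,\zx)=1$ and $\omega_n$ vanishes on words containing a $1$. Since $\overline{\zx}=0$, we have $\lambda\omega_n=(-1)^n\omega_n$. Thus $\omega_n$ is cyclic exactly when $n$ is even, and $C^n_{\lambda,(0)}=0$ for $n$ odd. Between consecutive even degrees there is an odd degree with zero cochains, so the differential on this summand vanishes trivially. (As a check, \eqref{eqn:HochCobOddLine} gives $(b\omega_n)(\zx,\dots,\zx)=\sum_{i=0}^{n+1}(-1)^{i+1}$, an alternating sum of $n+2$ terms, which is $0$ for $n$ even.) In standard Grassmann parity $\omega_n$ is odd, since it is nonzero only on $n+1$ (an odd number) copies of $\zx$. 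So this summand contributes $0\oplus\R$ in even degrees and $0$ in odd degrees. For $n=0$ this recovers Example~\ref{exa:OddLine}, with $\omega_0=\tau$.

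\textbf{Step 3: the summands with $k\ge1$ are acyclic.} This is the main obstacle. My first attempt is a contracting homotopy that inserts the left unit, $(s\phi)(f_0,\dots,f_n):=\phi(\zx,f_0,\dots,f_n)$, possibly with a sign. On the Hochschild level, \eqref{eqn:HochCobOddLine} should give $bs+sb=\pm\mathrm{id}$ up to a correction term coming from $\eta(f_i)\phi(\zx,\dots)$. I expect this correction to vanish, or be controlled, precisely on words containing at least one $1$. Two points need care. First, the correction term is exactly what makes $\omega_n$ survive, so the $k\ge1$ restriction has to be used. Second, $s$ does not preserve cyclicity.

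To deal with cyclicity I would pass from the Hochschild to the cyclic complex in one of two ways. The first is to average: conjugate $s$ by the cyclic symmetriser $N=\sum_j\lambda^j$, which is legitimate over $\R$, using the standard identities $bN=N b'$ (and the analogous identity relating $(1-\lambda)$, $b'$ and $b$) as in Kastler. The second is to work directly with cyclic words, i.e.\ necklaces in $1$ and $\zx$ with at least one $1$. In the necklace picture, cutting the necklace at a $1$ gives a canonical linear ordering that $b$ respects. An explicit homotopy should then be available by deleting or inserting a $\zx$ immediately after the distinguished $1$.

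If both routes become messy, I would fall back on a dimension count. Compute $\dim C^n_{\lambda,(k)}$ by Burnside-type counting of signed necklaces, then show that $b$ has maximal rank between consecutive degrees by checking it on the lexicographically leading words. Combining Steps 2 and 3 gives the stated result.
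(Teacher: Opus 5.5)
Your Steps 1 and 2 are correct and coincide with the paper's proof. The paper also splits the cyclic complex by the number $m$ of entries equal to $1$, and it handles the $m=0$ sector exactly as you do. The gap is Step 3, which carries all the remaining content and is left as a list of possible strategies.

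Your left-unit insertion $s\phi=\phi(e,\cdot)$, with $e=-\zx$, does give $bs+sb=\mathrm{id}$ exactly on every sector with $k\ge 1$; the correction term cancels completely there. But this only shows that the \emph{Hochschild} sectors are acyclic, and Hochschild exactness of a cyclic cocycle does not imply cyclic exactness. Your own $k=0$ sector is a counterexample: $b\omega_1=-\omega_2$ in the Hochschild complex, yet $\omega_1$ is not cyclic and $[\omega_2]\neq 0$ in $HC^2_\lambda$. Your three remedies do not close this:
\begin{enumerate}[(a)]
\item The necklace route is not well defined for $k\ge 2$: a nonzero cyclic necklace such as $(1,1,\zx)$ has no canonical $1$ at which to cut.
\item The dimension count proves nothing without an actual rank argument in every degree of an infinite complex.
\item The averaging route is never carried out.
\end{enumerate}

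The missing piece is what your averaging idea becomes once executed, and it is precisely the paper's argument. Define
\[
(\sigma\phi)(f_0,\dots,f_{n-1}) := \sum_{i=0}^{n-1}(-1)^{i+1}\,\phi(f_0,\dots,f_{i-1},\zx,f_i,\dots,f_{n-1}),
\]
which inserts $\zx$ once into each of the $n$ cyclic slots. For cyclic $\phi$ one checks that $\sigma\phi=N(s\phi)$, so $\sigma$ maps cyclic cochains to cyclic cochains. The identity to prove is $b\sigma+\sigma b=m\cdot\mathrm{id}$ on the sector with $m$ ones. The $n+1$ terms of $\sigma b\phi$ in which $b$ deletes the inserted $\zx$ contribute $(n+1)\phi$. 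All other terms cancel except for a net $-\phi$ for each original $\zx$, and $(n+1)-\#\zx=m$. Hence $\sigma/m$ is a contraction of the cyclic complex on every sector with $m\ge 1$. Note that the normalisation is constant on each sector, not the $1/n$ of the projector. Note also that the factor $m$ vanishing at $m=0$ is exactly why the classes $[\omega_{2k}]$ survive.

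There is an alternative completion. Your $s$ also contracts the $b'$-complex, so the algebra is H-unital, and the Connes bicomplex together with $HH^\bullet_{(k)}=0$ would give $HC^\bullet_{(k)}=0$ by induction. However, that route imports the SBI machinery for a non-unital superalgebra, which the proposal does not set up.
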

\begin{proof}
Let us set $f \in \{ 1, \zx \}$. We decompose derived $n$-cochains $C^n_\lambda := C^n_\lambda(\R^{0|1},\partial_\zx)$, via the number of $1$'s that appear in the input tuple and write
$$C^n_\lambda = \bigoplus_{m=0}^{n+1} C^{n,m}_\lambda\,.$$
As $\eta(f) =1$ only when $f = \zx$, contracting an argument via $b$ (see \eqref{eqn:HochCobOddLine}) preserves $m$. Thus, we may restrict the Hochschild coboundary operator $b : C^{n,m}_\lambda \rightarrow C^{n+1,m}_\lambda$.
\begin{description}
\item[$\boldsymbol{m\geq 1}$] We will show that this sector does not contribute to the derived cyclic cohomology.  For any and all $\phi \in C^{n,m}_\lambda$ ($m \geq 1$) and $b \phi =0$, we claim that the operator 
\begin{equation}\label{eqn:ContHomOp}
(s_m \phi)(f_0, \cdots , f_{n-1}) =  \frac{1}{m} \sum_{i=0}^{n-1} (-1)^{i +1}\, \phi(f_0, \cdots , f_{i-1}, \zx, f_i, \cdots , f_{n-1})\,,
\end{equation}
satisfies the following:\\
\begin{enumerate}[i)]
\setlength{\itemsep}{6pt}
\item $\lambda (s_m \phi) = s_m \phi\,$;
\item $\phi = b(s_m \phi)\,$.
\end{enumerate}
Together, establishing these will complete this part of the proof. \\
\begin{enumerate}[i)]
\setlength{\itemsep}{6pt}
\item We need to demonstrate that $(-1)^\epsilon \, (s_m \phi)(f_{n-1}, f_0, f_{n-2}) = (s_m\phi) (f_0, \cdots , f_{n-1})$, where the sign factor is $\epsilon =  n-1 + \overline{f}_{n-1}(\overline{f}_0 + \cdots + \overline{f}_{n-2})$. 
\begin{align*}
(-1)^\epsilon \, (s_m \phi)(f_{n-1}, f_0,\cdots, f_{n-2}) & = (-1)^\epsilon \, \frac{1}{m} \left(\vphantom{\sum_{j=0}^{n-2}}- \phi(\zx, f_{n-1}, f_0, \cdots f_{n-2}) \right.\\ & \left.+ \sum_{j=0}^{n-2} (-1)^{j+2} \, \phi(f_{n-1}, f_0 , \cdots , f_{j-1}, \zx, f_j, \cdots , f_{n-2}) \right)\\
\intertext{Using the shifted cyclic property of $\phi$, we can move $f_0$ to the first entry. Taking care with the sign factors, we arrive at}
&=  \frac{1}{m} \left(\vphantom{\sum_{j=0}^{n-2}} (-1)^n \, \phi( f_0, \cdots,  f_{n-2},\zx, f_{n-1} ) \right.\\ & \left.+ \sum_{j=0}^{n-2} (-1)^{j+2} \, \phi(f_0 , \cdots , f_{j-1}, \zx, f_j, \cdots , f_{n-1}) \right)\\
&= \frac{1}{m} \sum_{j=0}^{n-1} (-1)^{j+1} \phi(f_0, \cdots , f_{j-1}, \zx, f_j \cdots , f_{n-1})\\
&= (s_m)\phi(f_0, \cdots , f_{n-1})\,.
\end{align*}
\item This is equivalent to $s_m$ being derived cyclic cochain contraction operator; that is, $b \circ s_m +  s_m \circ b = \textrm{id}$. Directly from \eqref{eqn:HochCobOddLine} and \eqref{eqn:ContHomOp}, decomposing the expression to if the insertion of $\zx$ is before of after the removal of $f_i$, we have
\begin{align} \label{eqn:bsm}
b(s_m\phi)(f_0, \dots, f_n) &= \frac{1}{m} \sum_{i=0}^n \sum_{k=0}^{i-1} (-1)^{i+k} \eta(f_i) \phi(f_0, \dots, \underset{\text{pos~} k}{\zx}, \dots, \hat{f}_i, \dots, f_n)\\ \nonumber
&+ \frac{1}{m} \sum_{i=0}^n \sum_{k=i+1}^n (-1)^{i+k+1} \eta(f_i) \phi(f_0, \dots, \hat{f}_i, \dots, \underset{\text{pos~} k}{\zx}, \dots, f_n)\,.
\end{align}
In the above  $\zx$ in in the $k^\text{th}$ position in the sums.\\ 
Next we evaluate
$$s_m(b\phi)(f_0, \dots, f_n) = \frac{1}{m} \sum_{p=0}^n (-1)^{p+1} (b\phi)(f_0, \dots, f_{p-1}, \zx, f_p, \dots, f_n)\,.$$
To do this, we set $(g_0, \dots, g_{n+1}) := (f_0, \dots, f_{p-1}, \zx, f_p, \dots, f_n)$ and understand $\zx$ sitting on the $p^\text{th}$ position with respect to the new labelling. We now expand $b\phi$ as before, taking care with which element $g_i$ is removed. There are three distinct cases:
\begin{enumerate}[1.]
\item $j < p$: The removed element is $f_j$. The inserted element $\zx$ shifts to index $k = p-1$ in the arguments of $\phi$. The overall coefficient then becomes $\frac{1}{m}(-1)^{p+1}(-1)^{j+1} = \frac{1}{m}(-1)^{j+k+1}$.
\item $j = p$: The removed element is the inserted $\zx$. This leaves the sequence $(f_0, \dots, f_n)$ unchanged. Given $\eta(\zx)=1$, the coefficient is $\frac{1}{m}(-1)^{p+1}(-1)^{p+1} = \frac{1}{m}$.
\item $j > p$: The removed element is $f_i$, where $i = j-1 \geq p$. The inserted element $\zx$ remains at index $k = p$. The overall coefficient becomes $\frac{1}{m}(-1)^{p+1}(-1)^{i+2} = \frac{1}{m}(-1)^{i+k+1}$.
\end{enumerate}
Combining these three cases, we arrive at 
\begin{align}\label{eqn:smb}
s_m(b\phi)(f_0, \dots, f_n) &= \frac{1}{m} \sum_{k=0}^{n-1} \sum_{i=0}^k (-1)^{i+k+1} \eta(f_i) \phi(f_0, \dots, \hat{f}_i, \dots, \underset{\text{pos~} k}{\zx}, \dots, f_n)\\ \nonumber 
& + \frac{n+1}{m} \phi(f_0, \dots, f_n)\\ \nonumber
& + \frac{1}{m} \sum_{k=0}^n \sum_{i=k}^n (-1)^{i+k+1} \eta(f_i) \phi(f_0, \dots, \underset{\text{pos~} k}{\zx}, \dots, \hat{f}_i, \dots, f_n)\,.
\end{align}
Comparing \eqref{eqn:bsm} and \eqref{eqn:smb}, observe the relative minus signs between many (but not all) of the similar terms.   Examining $b\circ m_s + m_s \circ b$, we see that 
\begin{itemize}
\item The first sum of $s_m(b\phi)$ (where $i \leq k$)  cancels the second sum of $b(s_m\phi)$ (where $i < k$, bounded differently but referencing identical terms with opposite signs).
\item The $i > k$ components of the third sum of $s_m(b\phi)$  cancel the first sum of $b(s_m\phi)$.
\end{itemize}
The remaining terms produce
\begin{align}\label{eqn:bsm+smb}
(b s_m + s_m b)\phi(f_0, \dots, f_n) &= \frac{n+1}{m} \phi(f_0, \dots, f_n)\\ \nonumber
 &- \frac{1}{m} \sum_{k=0}^n \eta(f_k) \phi(f_0, \dots, f_{k-1}, \zx, f_{k+1}, \dots, f_n)\,.
\end{align}
Note that each term of the sum in \eqref{eqn:bsm+smb} is only non-zero when $f_k = \zx$. As $\eta(\zx)=1$, the summation counts the number of occurrences of $\zx$ in the argument sequence $(f_0, \dots, f_n)$ and scales $\phi(f_0, \dots, f_n)$ by that integer. By the definition of $C^{n,m}_\lambda$,  cochains act on arguments with  $m\geq 1$ occurrences of $1$. Since the sequence contains $n+1$ total arguments, it must contain exactly $(n+1-m)$ instances of $\zx$.\par 
Using this deduction in \eqref{eqn:bsm+smb}, we have
$$(b s_m + s_m b)\phi = \frac{n+1}{m}\phi - \frac{n+1-m}{m}\phi = \frac{m}{m}\phi = \phi\,.$$ 
In other words, $b s_m + s_m b = \mathrm{id}$ and so a derived cyclic cochain contraction operator.  
\end{enumerate}
%
%
\item [$\boldsymbol{m =0}$] The shifted cyclic condition means $\phi^n(\zx, \cdots, \zx)= (-1)^n \, \phi^n(\zx, \cdots ,\zx)$. Thus, if $n$ is odd, then $\phi^n(\zx, \cdots, \zx)=0$.  For $n$ even, applying the Hochschild coboundary operator, we obtain 
$$(b\phi^n)(\zx, \cdots, \zx) = \left(\sum_{i=0}^n (-1)^{i+1} + (-1)^n  \right)\phi^n(\zx, \cdots, \zx) =0\,.$$
\end{description}
Consequently, only the $m=0$ $n$-cochains with $n$ even contribute to the cohomology. Moreover, every such cochain is a cocycle and cannot have a $b$-primitive.  Taking care with the Grassmann parity, we have shown that 
$$
HC^n_\lambda(\R^{0|1}, \partial_\zx) = \begin{cases} 
0\oplus\R, & \text{if~} n \text{~is even} \\ 
0\oplus 0, & \text{if~} n \text{~is odd} 
\end{cases}\,.
$$
\end{proof}
\begin{corollary}\label{cor:CycSandiso}
Let $HC^n_\lambda(\R)$ be the (standard) cyclic cohomology of the real numbers. There are isomorphisms  $HC^n_\lambda(\R^{0|1}, \partial_\zx)_1 \cong HC^n_\lambda(\R)$ for $n=0, 1,2, \cdots$.
\end{corollary}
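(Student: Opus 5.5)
We will compare both sides dimension by dimension, using Theorem \ref{thm:DerCycCoOddLine} for the left-hand side and a standard computation for the right-hand side. By Theorem \ref{thm:DerCycCoOddLine}, the odd part $HC^n_\lambda(\R^{0|1}, \partial_\zx)_1$ is $\R$ when $n$ is even and $0$ when $n$ is odd. We must therefore establish that the ordinary cyclic cohomology of the ground field $\R$ (as a unital algebra over itself) is $\R$ in even degrees and $0$ in odd degrees. Once both sides are identified as finite-dimensional real vector spaces of equal dimension in each degree, an isomorphism follows.

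\emph{Cyclic cochains on $\R$.} A cyclic $n$-cochain $\psi$ is determined by the single number $\psi(1,\dots,1)$. Cyclicity reads $\psi(1,\dots,1) = (-1)^n \psi(1,\dots,1)$, so $C^n_\lambda(\R)=0$ for $n$ odd and $C^n_\lambda(\R)\cong\R$ for $n$ even. The odd-degree groups therefore vanish. In even degree the differential lands in an odd-degree cyclic space, which is zero, so every even cochain is a cocycle. The only possible coboundaries come from odd-degree cochains, which vanish. Hence $HC^{2k}_\lambda(\R)\cong\R$ and $HC^{2k+1}_\lambda(\R)=0$. One can also quote the classical result $HC^\bullet_\lambda(k)=k[u]$ with $\deg u=2$, for example from \cite{Connes:1994} or \cite{Weibel:1994}.

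\emph{A canonical isomorphism.} To make the isomorphism canonical rather than merely dimensional, we will use the proof of Theorem \ref{thm:DerCycCoOddLine}. There the cohomology is concentrated in the sector $m=0$, consisting of cochains evaluated on $(\zx,\dots,\zx)$. Moreover, on the inputs $\zx$ we have $\zx\star\zx=-\eta(\zx)\zx=-\zx$, so $-\zx$ is an idempotent (see the earlier remark) and behaves like the unit $1\in\R$. We will therefore define the map $\phi^n \mapsto \psi^n$ with
$$\psi^n(1,\dots,1) := \phi^n(-\zx,\dots,-\zx) = (-1)^{n+1}\phi^n(\zx,\dots,\zx).$$
Since $\overline{\zx}=0$, the shifted cyclic signs reduce to $(-1)^n$, which matches the cyclic signs for $\R$. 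The map $\R\to(C^\infty(\R^{0|1}),\star)$, $1\mapsto -\zx$, is an algebra homomorphism, so the induced map on cochains commutes with $b$. It is nonzero on the generator in even degrees, and is therefore an isomorphism $\R\to\R$ there. The odd parity stated in the corollary arises because $\phi^n(\zx,\dots,\zx)$ is nonzero on $n+1$ arguments, each of Grassmann parity $1$ in the original grading, when $n$ is even.

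\emph{Main obstacle.} The only delicate point is the bookkeeping of signs and parity. We must check that the pulled-back cochain on $\R$ is genuinely cyclic, which holds because the shifted signs coincide with the unshifted ones for $\overline{\zx}=0$. We must also confirm that the derived classes in question land in the odd part with respect to the original grading, which is the parity convention used for cochains in the paper. Everything else follows from results already established above.
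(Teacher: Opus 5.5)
Your argument is correct and is essentially the paper's: the corollary is stated without a separate proof and is read off directly from Theorem~\ref{thm:DerCycCoOddLine} (odd part $\R$ in even degrees, $0$ in odd degrees), combined with the classical values $HC^{2k}_\lambda(\R)\cong\R$ and $HC^{2k+1}_\lambda(\R)=0$, which you verify correctly from the cyclic sign $(-1)^n$ on $\R^{\otimes(n+1)}$. Your additional step is a valid refinement that the paper does not spell out: you realise the isomorphism as the pullback along the algebra map $\R\to(C^\infty(\R^{0|1}),\star)$, $1\mapsto-\zx$, which works because $-\zx$ is idempotent with $\overline{\zx}=0$, so the cyclic signs and $b$ match. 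This refinement also explains why the surviving $m=0$ sector in the theorem's proof reproduces the cyclic complex of $\R$.
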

We stress that the results in this subsection do not depend on the choice of odd coordinate. Specifically, any coordinate change is of the form $\zx' = a\, \zx$, with $a \in \R_*$. Then $\varphi^* \zx' = \zx$  (which is just the rescaling $\varphi^*\zx = a^{-1}\, \zx$) is  a Q-diffeomorphism and via  Theorem \ref{thm:DerCycCoFunct} we know the derived cyclic cohomology is unchanged.  \par 
Recall that the ($\Z_2$-graded) cyclic cohomology of $C^\infty(\R^{0|1}) = \R[\zx]$ is 
$$
HC^n_\lambda(\R[\zx]) = \begin{cases} 
\R\oplus \R, & \text{if~} n \text{~is even} \\ 
0\oplus \R, & \text{if~} n \text{~is odd} 
\end{cases}\,.
$$
Thus, while clearly related, the derived and standard cyclic cohomologies are not isomorphic. Moreover, $H^\bullet_{\textrm{std}}(\R^{0|1}, \partial_\zx) =0$, i.e., completely trivial.  
%
%
\subsection{Derived Cyclic Cocycles on Double Q-manifolds}\label{subsec:DoubleShiftSycCoh}
Given a double Q-manifold $(M, Q_1, Q_2)$, we define a pencil of homological vector fields as $Q_t := (1-t)\, Q_1 + t \, Q_2$, with $t \in (0,1)$. This pencil interpolates between the two initial homological vector fields.  We then have a pencil of derived associative algebras with the derived product being defined as 
\begin{equation}
f \star_t g  := (-1)^{\widetilde{f}} \, Q_t(f) \, g\,
\end{equation}
with $f,g \in C^\infty(M)$.  We then define the obvious pencil of Hochschild coboundary operators $b_t$ and derived cyclic cohomology groups.  Then, the derived cyclic $n$-cocycles, from Proposition \ref{prop:CochainsInv}, satisfies
$$\phi^n \circ (Q_t \otimes Q_t \otimes \cdots \otimes Q_t)= 0\,,$$
for all $t \in (0,1)$. To analyse this condition, we define 
\begin{equation} 
S_k := \sum_{\sigma \in \mathcal{S}_{n+1,k}} Q_{\sigma(1)} \otimes Q_{\sigma(2)}\otimes \cdots \otimes Q_{\sigma(n+1)}\,,
\end{equation}
where $\mathcal{S}_{n+1,k}$ is the set of sequences of length $n+1$ consisting of $k$ copies of the index $2$ and $n+1-k$ copies of the index $1$. Using the binomial expansion and relabelling, we obtain 
\begin{align} \label{eqn:DoubleInvCoh}
\phi^n \circ (Q_t \otimes Q_t \otimes \cdots \otimes Q_t) & = \phi^n \circ \left( \sum_{k=0}^{n+1}(1-t)^{n+1-k}\, t^k \, S_k\right)\\ \nonumber 
& = \phi^n \circ \left(\sum_{m=0}^{n+1} t^m \, \left( \sum_{k=0}^m  (-1)^{m-k} \, \begin{pmatrix}
n+1 -k \\
n+1 -m 
\end{pmatrix}\, S_k \right) \right) =0\,.
\end{align}
\begin{proposition}\label{prop:DoubleCochainsInv}
Let $(M, Q_1, Q_2)$ be a double Q-manifold, then all $b_t$-closed cochains $\phi^n \in C^n(M,Q_t)$  satisfy
$$\phi^n \circ \left (\sum_{k=0}^m  (-1)^{m-k} \, \begin{pmatrix}
n+1 -k \\
n+1 -m 
\end{pmatrix} \, S_k\right) =0\,,$$
for each $m \in \{ 0,1,2, \cdots , n+1\}$.
\end{proposition}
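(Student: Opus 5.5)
The plan is to read the statement as the coefficient extraction of a polynomial identity in $t$. The analysis just before the proposition has set this up, and I would follow it.

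The first step is invariance for each fixed $t$. For each $t\in(0,1)$, $Q_t=(1-t)Q_1+tQ_2$ is a homological vector field, since $[Q_t,Q_t]=(1-t)^2[Q_1,Q_1]+2t(1-t)[Q_1,Q_2]+t^2[Q_2,Q_2]=0$ (this is the content of Proposition \ref{prop:ComParAlgs}). So $(M,Q_t)$ is a Q-manifold, and Proposition \ref{prop:CochainsInv} applied to a $b_t$-closed cochain gives $\phi^n\circ(Q_t\otimes\cdots\otimes Q_t)=0$.

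The hypothesis has to be read as: $\phi^n$ is $b_t$-closed for all $t\in(0,1)$, or at least for $n+2$ distinct values of $t$. This is the pencil setting described before \eqref{eqn:DoubleInvCoh}. With $t$ fixed the statement would fail; for instance, take $Q_2=0$ with $t$ fixed.

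The second step is the expansion. Expand the tensor power multilinearly:
\[
Q_t^{\otimes(n+1)}=\sum_{k=0}^{n+1}(1-t)^{n+1-k}t^kS_k ,
\]
grouping the words in $Q_1,Q_2$ by the number $k$ of factors $Q_2$. No Koszul signs arise, because $t$ is an even real scalar and we are only distributing scalar coefficients across tensor factors. Next expand $(1-t)^{n+1-k}=\sum_j\binom{n+1-k}{j}(-1)^jt^j$ and set $m=k+j$. This gives coefficient $(-1)^{m-k}\binom{n+1-k}{m-k}S_k$ for $t^m$. Using $\binom{n+1-k}{m-k}=\binom{n+1-k}{n+1-m}$, this matches \eqref{eqn:DoubleInvCoh}.

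The third step is coefficient extraction. For arbitrary fixed homogeneous $f_0,\dots,f_n$, the quantity
\[
p(t):=\phi^n\big(Q_t^{\otimes(n+1)}(f_0\otimes\cdots\otimes f_n)\big)
\]
is a real polynomial in $t$ of degree at most $n+1$, by linearity of $\phi^n$. It vanishes for all $t\in(0,1)$, or at $n+2$ distinct points, so every coefficient vanishes. The $t^m$ coefficient is exactly
\[
\phi^n\Big(\sum_{k=0}^m(-1)^{m-k}\tbinom{n+1-k}{n+1-m}S_k\Big)(f_0,\dots,f_n).
\]
Since the $f_i$ are arbitrary, this gives the claim for each $m=0,\dots,n+1$.

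The main subtlety is not computational but the quantifier over $t$. A single $b_t$-closed cochain for one $t$ only yields the single combination $\sum_m t^m(\cdots)=0$. I would state explicitly that $\phi^n$ is taken to be $b_t$-closed throughout the pencil. A secondary care point is that linearity of $\phi^n$ lets us pull the polynomial coefficients out; this is immediate since they are real scalars.
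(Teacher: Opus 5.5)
Your proposal is correct and takes the same route as the paper, which expands $Q_t^{\otimes(n+1)}$ binomially as in \eqref{eqn:DoubleInvCoh} and reads off the identity order by order in $t$. You supply the details the paper leaves implicit: that $Q_t$ is homological, the argument that a polynomial of degree at most $n+1$ vanishing on $(0,1)$ is identically zero, and the requirement that $\phi^n$ be $b_t$-closed throughout the pencil (or for at least $n+2$ values of $t$).

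One small correction: your illustration with $Q_2=0$ does not break the fixed-$t$ version. In that case $S_k=0$ for $k\geq 1$, so every condition is a multiple of $\phi^n\circ Q_1^{\otimes(n+1)}$, and this already follows from $b_t$-closedness at a single $t\in(0,1)$. A genuine counterexample is $Q_2=-Q_1$ at $t=1/2$: there $b_{1/2}=0$, so every cochain is closed, yet the $m=0$ condition $\phi^n\circ Q_1^{\otimes(n+1)}=0$ fails in general.
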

\begin{proof}
The equation \eqref{eqn:DoubleInvCoh} holds order by order in $t$. Thus, isolating these terms establishes the proposition. 
\end{proof}
For instance, one can quickly deduce that for a derived cyclic $0$-cocycle $\tau$
$$\tau \circ Q_1 =0\,, \qquad \tau \circ Q_2 =0\,.$$
For derived cyclic $1$-cocycles we deduce that 
$$\phi^1 \circ(Q_1 \otimes Q_1)=0\,, \qquad \phi^1 \circ(Q_2 \otimes Q_2)=0\,, \qquad \phi^1 \circ (Q_1 \otimes Q_2)+ \phi^1 \circ (Q_2 \otimes Q_1)=0\,.$$
Similarly, for derived cyclic $2$-cocycles we deduce that  
\begin{align*}
&\phi^2 \circ (Q_1 \otimes Q_1 \otimes Q_1) =0\,,
 \qquad \phi^2 \circ (Q_2 \otimes Q_2 \otimes Q_2) =0\,,\\
&\phi^2 \circ (Q_2 \otimes Q_1 \otimes Q_1) + \phi^2 \circ (Q_1 \otimes Q_2 \otimes Q_1) +\phi^2 \circ (Q_1 \otimes Q_1 \otimes Q_2) =0\,,\\
&\phi^2 \circ (Q_1 \otimes Q_2 \otimes Q_2) + \phi^2 \circ (Q_2 \otimes Q_1 \otimes Q_2) +\phi^2 \circ (Q_2 \otimes Q_2 \otimes Q_1) =0\,.
\end{align*}
We will denote the derived cyclic cocycles as $C^n_\lambda(M, Q_1, Q_2)$.
\begin{example} 
Let $(\mathfrak{g}, [-,-])$ be a pure even finite-dimensional Lie algebra.  The supermanifold we consider is $M := \Pi \mathfrak{g}\oplus \Pi \mathfrak{g} \oplus \mathfrak{g} \cong \R^{r|2r}$, where $\dim \mathfrak{g} = r$.  We equip $M$ with (global) coordinates $(\eta^a, \bar{\eta}^b, b^c)$, which in physical language are Faddeev--Popov ghosts, Faddeev--Popov anti-ghosts, and Nakanishi--Lautrup auxiliary fields, respectively; recall we have suppressed ghost number. We can then identify $C^\infty(M) \cong \Lambda^\bullet \mathfrak{g}^* \otimes \Lambda^\bullet \mathfrak{g}^* \otimes C^\infty( \mathfrak{g}^*) $. The BRST differential and the anti-BRST differential 
\begin{align*}
 &s = Q_1 :=  \frac{1}{2}\eta^a \eta^b Q_{ba}^c \frac{\partial}{\partial \eta^c} + b^a \frac{\partial}{\partial \bar{\eta}^a}\,,  \\ 
 & \bar{s} = Q_2 := \frac{1}{2} \bar{\eta}^a \bar{\eta}^b Q_{ba}^c \frac{\partial}{\partial \bar{\eta}^c} - (b^c - \eta^a \bar{\eta}^b Q_{ba}^c)\frac{\partial}{\partial \eta^c} - b^a \bar{\eta}^bQ_{ba}^c \frac{\partial}{\partial b^c}\,,
\end{align*} 
define a double Q-manifold $(M, s, \bar{s})$. Here $Q_{ba}^c$ are the structure constants of the Lie algebra.  Note that the first terms of $s$ and $\bar{s}$ are just the Chevalley--Eilenberg differential of the Lie algebra $\mathfrak{g}$. Then via Proposition \ref{prop:ComParAlgs}, we have a pair of compatible derived algebras on $C^\infty(M)$.  
\begin{align*}
& \eta^a \star_s \eta^b = -\frac{1}{2}  \eta^c \eta^d Q^a_{dc}\eta^b\,, && \eta^a \star_s \bar{\eta}^b = -\frac{1}{2}  \eta^c \eta^d Q^a_{dc}\bar{\eta}^b\,, && \eta^a \star_s b^b = -\frac{1}{2}  \eta^c \eta^d Q^a_{dc}b^b\,,\\
&\bar{\eta}^a \star_s \eta^b = -b^a \eta^b\,, && \bar{\eta}^a \star_s \bar{\eta}^b = -b^a \bar{\eta}^b\,, && \bar{\eta}^a \star_s b^b = -b^a b^b\,,\\
& \eta^a \star_{\bar{s}} \eta^b = \left(b^a - \eta^c \bar{\eta}^d Q^a_{dc}\right) \eta^b\,, && \eta^a \star_{\bar{s}} \bar{\eta}^b = \left(b^a - \eta^c \bar{\eta}^d Q^a_{dc}\right) \bar{\eta}^b\,, && \eta^a \star_{\bar{s}} b^b = \left(b^a - \eta^c \bar{\eta}^d Q^a_{dc}\right) b^b\,,\\
& \bar{\eta}^a \star_{\bar{s}} \eta^b = -\frac{1}{2} \bar{\eta}^c \bar{\eta}^d Q^a_{dc} \eta^b\,, && \bar{\eta}^a \star_{\bar{s}} \bar{\eta}^b = -\frac{1}{2} \bar{\eta}^c \bar{\eta}^d Q^a_{dc} \bar{\eta}^b\,, && \bar{\eta}^a \star_{\bar{s}} b^b = -\frac{1}{2} \bar{\eta}^c \bar{\eta}^d Q^a_{dc} b^b\,,\\
& b^a \star_{\bar{s}} \eta^b = -b^c \bar{\eta}^d Q^a_{dc} \eta^b\,, && b^a \star_{\bar{s}} \bar{\eta}^b = -b^c \bar{\eta}^d Q^a_{dc} \bar{\eta}^b\,, && b^a \star_{\bar{s}} b^b = -b^c \bar{\eta}^d Q^a_{dc} b^b\,.
\end{align*}
All derived cyclic $0$-cocycles $\tau : C^\infty(M) \rightarrow \R$ are both BRST and anti-BRST invariant, i.e., $\tau \circ s =0$ and $\tau \circ \bar{s}=0$.    Any derived cyclic $1$-cocycle satisfies
$$\phi^1(s f, s g) =0\,, \qquad \phi^1(\bar{s}f , \bar{s}g)=0\, , \qquad \phi^1(sf , \bar{s}g) + \phi^1(\bar{s}f, s g)=0\,,$$ 
for all $f,g \in \Lambda^\bullet \mathfrak{g}^* \otimes \Lambda^\bullet \mathfrak{g}^* \otimes C^\infty( \mathfrak{g}^*)$.\par 
As a concrete case, consider $\mathfrak{g} = \mathfrak{u}(1) \cong \R$.  In this case we have $s = b \partial_{\bar{\eta}}$ and $\bar{s} = - b \partial_\eta$. Expanding a function as $f = f_0(b) + f_1(b)\eta  + f_2(b) \bar{\eta} + f_3(b) \eta \bar{\eta}$.  We then have 
$$sf  = b(f_2(b) - f_3(b)\eta )\,, \qquad \bar{s} f  = -b(f_1(b)+ f_3(b)\bar{\eta})\,.$$ 
Then, as all derived cyclic $0$-cocycles are both BRST and anti-BRST invariant, we have a series of conditions 
\begin{multicols}{2}
\begin{enumerate}[i)]
\item $\tau(b f_2(b)) = 0$,
\item $\tau(b f_1(b)) = 0$,
\item $\tau(b f_3(b)\eta) = 0$,
\item $\tau(b f_3(b)\bar{\eta}) = 0$.
\end{enumerate}
\end{multicols}
These conditions place constraints on the first three terms of the evaluation of $\tau$. Specifically, they are determined by the value of $f$ at $b=0$.  Thus, we have 
$$\tau(f_0(b)) = a \, f_0(0)\,, \qquad \tau(f_1(b)\eta) = b \, f_1(0)\,, \qquad \tau(f_2(b)\bar{\eta}) =c\, f_2(0)\,,$$
with $a,b$ and $c \in \R$. Finally, $\tau(f_3(b) \eta \bar{\eta})$ is unconstrained, as such a term can never be generated by $s$ or $\bar{s}$. \par 
Thus, we have deduced that 
$$C^0_\lambda(\Pi \R \oplus \Pi \R \oplus \R, s, \bar{s}) \cong \big(\R \oplus \Hom_{\catname{Vec}}(C^\infty(\R), \R) \big)\oplus \R^2\,.$$
\end{example}
%
%
\subsection{Applications to Lie Algebras: The Relation with Lie Algebra Homology}
In this subsection, we will consider  a pure even finite dimensional Lie algebra $(\mathfrak{g}, [-,-])$.  For details of the homology and cohomology of Lie algebras, the reader may consult Weibel \cite[Chapter 7]{Weibel:1994}. Let us choose a basis $\{e_a \}$, such that $[e_a, e_b] = Q_{ab}^{c}e_c$. Shifting the parity of the vector space (now considered as a linear supermanifold), we obtain a Q-manifold $(\Pi \mathfrak{g}, \rmd_{CE})$, where, in coordinates,
\begin{equation}
\rmd_{CE} = \frac{1}{2} \eta^b \eta^a Q_{ab}^c \frac{\partial}{\partial \eta^c}\,,
\end{equation}
is recognised as the Chevalley--Eilenberg differential.  Note $C^\infty(\Pi \mathfrak{g}) \cong \Lambda^\bullet \mathfrak{g}^*$.  We remark that $\rmd_{CE}^2 =0$ is equivalent to the Jacobi identity for the Lie bracket. Borrowing language from physics, the coordinates $\eta^a$ are Faddeev--Popov ghosts. \par 
Clearly, we have the derived associative algebra $(C^\infty(\Pi \mathfrak{g}), \star)$ associated with any Lie algebra.  The derived $0$-cochains are
$$C^0(\Pi \mathfrak{g},\rmd_{CE}):= \InLin\big(\Lambda^\bullet\mathfrak{g}^*, \R\big) = \big(\Lambda^\bullet \mathfrak{g}^* \big)^* \cong \Lambda^\bullet \mathfrak{g}\,.$$
Note that the algebra is finite dimensional, and so the completed tensor product coincides with the algebraic tensor product, meaning we lose no generality considering linear maps  between the finite-dimensional super vector spaces. Via  Lemma \ref{lem:0Cocycles}, all derived cyclic $0$-cocycles satisfy $\tau \circ \rmd_{CE} =0$. The canonical pairing $\langle -,- \rangle$ between $\Lambda^\bullet \mathfrak{g}$ and $\Lambda^\bullet \mathfrak{g}^*$ allows us to define the Chevalley--Eilenberg boundary operator $\partial_{CE}$ via
\begin{equation}
\langle \tau, \rmd_{CE} \alpha \rangle = \langle \partial_{CE}\tau, \alpha \rangle\,.
\end{equation}
Then using the canonical isomorphism $\big( \Lambda^\bullet \mathfrak{g}^*\big)^* \cong \Lambda^\bullet \mathfrak{g}$, we identify derived cyclic $0$-cocycles  with Lie algebra (homology) cycles.  More formally, we have established the following.
\begin{proposition}\label{prop:LieAlgH0}
Let $(\mathfrak{g}, [-,-])$ be a (pure even) finite dimensional Lie algebra of dimension $n$. Then 
$$HC^0_\lambda(\Pi \mathfrak{g}, \rmd_{CE}) \cong \mathcal{Z}_\bullet(\mathfrak{g})= \bigoplus_{k=0}^n \mathcal{Z}_k(\mathfrak{g})\,,$$
where $\mathcal{Z}_k(\mathfrak{g})$ are the Lie algebra cycle groups, and we assign Grassmann parity of $k$ (mod $2$) to elements of the Lie algebra $k$-cycle group. 
\end{proposition}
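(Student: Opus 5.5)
The plan is to compute $HC^0_\lambda$ directly from the definitions, exploiting the fact that in degree zero there are no coboundaries. Since there are no $(-1)$-cochains, $HC^0_\lambda(\Pi\mathfrak{g},\rmd_{CE})$ is exactly the space of derived cyclic $0$-cocycles. The shifted cyclic operator $\lambda$ acts trivially on $0$-cochains: for $n=0$ the sign in its definition is $(-1)^{0+\overline{f}_0\cdot 0}=1$. Hence $C^0_\lambda(\Pi\mathfrak{g},\rmd_{CE})=C^0(\Pi\mathfrak{g},\rmd_{CE})$. Because $\Lambda^\bullet\mathfrak{g}^*$ is finite dimensional, every linear functional is automatically continuous, and so $C^0_\lambda\cong(\Lambda^\bullet\mathfrak{g}^*)^*$.

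Next I apply Lemma \ref{lem:0Cocycles}, which says that $\tau$ is a derived cyclic $0$-cocycle if and only if $\tau\circ\rmd_{CE}=0$. Written with the canonical pairing, this condition reads $\langle\tau,\rmd_{CE}\alpha\rangle=0$ for all $\alpha\in\Lambda^\bullet\mathfrak{g}^*$. By the defining relation of $\partial_{CE}$, this is equivalent to $\langle\partial_{CE}\tau,\alpha\rangle=0$ for all $\alpha$. Non-degeneracy of the pairing $\Lambda^\bullet\mathfrak{g}\times\Lambda^\bullet\mathfrak{g}^*\to\R$ then gives $\partial_{CE}\tau=0$. So the isomorphism $(\Lambda^\bullet\mathfrak{g}^*)^*\cong\Lambda^\bullet\mathfrak{g}$ carries the space of cocycles onto $\ker\partial_{CE}$.

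It remains to check that this kernel is the Lie algebra cycle space and that the gradings match. Two points need attention here.
\begin{enumerate}[i)]
\item The transpose of $\rmd_{CE}$ is the Chevalley--Eilenberg boundary $\partial_{CE}:\Lambda^k\mathfrak{g}\to\Lambda^{k-1}\mathfrak{g}$, up to an overall normalisation and sign. I would verify this on monomials using $\rmd_{CE}\eta^c=\tfrac12\eta^b\eta^a Q^c_{ab}$, for instance $\partial_{CE}(e_a\wedge e_b)=\pm[e_a,e_b]$. Since rescaling by nonzero constants does not change kernels, $\ker\partial_{CE}=\mathcal{Z}_\bullet(\mathfrak{g})$.
\item The map $\rmd_{CE}$ has form-degree $+1$, so $\partial_{CE}$ lowers degree by one and preserves the decomposition $\Lambda^\bullet\mathfrak{g}=\bigoplus_{k=0}^n\Lambda^k\mathfrak{g}$. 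The kernel therefore splits as $\bigoplus_{k}\mathcal{Z}_k(\mathfrak{g})$.
\end{enumerate}

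For the Grassmann parity, a functional supported on $\Lambda^k\mathfrak{g}^*$ (whose elements have standard parity $k$ mod $2$) has parity $k$ mod $2$ in the standard convention the paper uses for cochains. This matches the parity assignment in the statement.

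The main obstacle I expect is bookkeeping rather than substance. It lies in matching signs and factors of $\tfrac12$ so that the transpose of $\rmd_{CE}$ really is the standard $\partial_{CE}$, and in ensuring that the parity convention (standard versus shifted) is consistent with the stated grading. Since only kernels matter, I would simply note that any nonzero rescaling of each homogeneous component leaves the cycle spaces unchanged.
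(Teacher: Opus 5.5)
Your proposal is correct and follows the same route as the paper. Since there are no $(-1)$-cochains, $HC^0_\lambda$ is the space of $0$-cochains with $\tau\circ\rmd_{CE}=0$ (Lemma \ref{lem:0Cocycles}), and dualising through the canonical pairing $(\Lambda^\bullet\mathfrak{g}^*)^*\cong\Lambda^\bullet\mathfrak{g}$ identifies these with $\ker\partial_{CE}=\mathcal{Z}_\bullet(\mathfrak{g})$. Your extra checks make explicit what the paper leaves implicit, since it simply defines $\partial_{CE}$ as the transpose of $\rmd_{CE}$: namely that $\lambda$ acts trivially in degree zero, that the transpose agrees with the standard boundary up to nonzero scalars in each degree, and that the parity bookkeeping matches.
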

\begin{example}
For any abelian Lie algebra $(\mathfrak{a}, [-,-])$ of dimension $n$, as the Chevalley--Eilenberg boundary operator is the zero map
$$HC^0_\lambda(\Pi \mathfrak{a}, \rmd_{CE})\cong \mathcal{Z}_\bullet(\mathfrak{a}) = \bigoplus_{k=0}^n \Lambda^k \mathfrak{a} \cong \R^{2^{n-1} | 2^{n-1}}\,.$$
\end{example}
\begin{example}
Consider the $3$-dimensional Heisenberg algebra $\mathfrak{h}$ given by generators $\{ x,y,z \}$ and non-vanishing Lie bracket $[x, y] = z$. Note $z$ is central.  Then $\mathcal{Z}_\bullet(\mathfrak{h})= \mathcal{Z}_0(\mathfrak{h}) \bigoplus \mathcal{Z}_1(\mathfrak{h})\bigoplus \mathcal{Z}_2(\mathfrak{h}) \bigoplus \mathcal{Z}_3(\mathfrak{h})$. Observe that $\mathcal{Z}_0(\mathfrak{h}) = \R$, and so $\mathcal{Z}_1(\mathfrak{h}) = \mathfrak{h}$.  From the algebra we see that $\partial_{CE}(x \wedge y) = - z$ and $\partial_{CE}(x \wedge z) = \partial_{CE}(y \wedge z) =0$. Thus, $\mathcal{Z}_2(\mathfrak{h}) = \Span(x \wedge z, y \wedge z)$. Finally, $\partial_{CE}(x \wedge y \wedge z) = - z \wedge z =0$, thus $\mathcal{Z}_3(\mathfrak{h}) = \Span(x \wedge y \wedge z)$.  In total we have 
$$HC^0_\lambda(\Pi \mathfrak{h}, \rmd_{CE})\cong  \R \bigoplus \Span(x,y,z) \bigoplus \Span(x \wedge z,y \wedge z ) \bigoplus \Span(x \wedge y \wedge z)  \cong \R^{3|4}\,. $$
\end{example}
While $C^n_\lambda(\Pi \mathfrak{g}, \rmd_{CE}) \subset \big(\Lambda^\bullet \mathfrak{g} \big)^{\otimes^{n+1}} \cong \InLin\big((\Lambda^\bullet\mathfrak{g}^*)^{\otimes^{n+1}}, \R\big)$, we do not have a simple generalisation of Proposition \ref{prop:LieAlgH0} for $n>0$ and non-abelian Lie algebras.  In particular, for the $n>0$ case, we have $b$-exact cochains to contend with when calculating the derived cyclic cohomology, and this prevents a simple decomposition relating it to the Lie algebra homology. 
%
%
\subsection{Applications to Regular Foliations: Generalised Currents}\label{subsec:RegFoli}
Let $N$ be a pure even manifold and let $A \subset \sT N$ be an integrable subbundle defining a regular foliation $\mathcal{F} = \{L_i \}_{i \in \mathcal{I}}$ of $N$. This defines the foliation Lie algebroid of the regular foliation. Sections of $A$ are vector fields tangent to $\mathcal{F}$. The Lie bracket is the restriction of the standard Lie bracket of vector fields, and the anchor is the inclusion map (and so it is injective). Moreover, the orbit foliation of the anchor is precisely the foliation $\mathcal{F}$. Conversely, any Lie algebroid with an injective anchor map is isomorphic to a foliation Lie algebroid. \par 
The corresponding Q-manifold picture is as follows. The supermanifold is $\Pi A$ (the parity reversed vector bundle), and so $C^\infty(\Pi A)$ is identified with leaf-wise (pseudo-)differential forms $\Omega^\bullet(\mathcal{F})$. The homological vector field is the restriction of the de Rham differential to the leaves of the foliation, i.e., $Q = \rmd_{\mathcal{F}}$.  The derived product is then $\alpha \star \beta := (-1)^{\widetilde{\alpha}} \, \rmd_{\mathcal{F}} \alpha \wedge \beta$, with $\alpha, \beta \in \Omega^\bullet(\mathcal{F})$. The modular class of a foliation Lie algebroid (see Subsection \ref{subsec:ModClass}) is the obstruction to the existence of a global, strictly positive, nowhere-vanishing transverse invariant volume. In classical language, the modular class of a foliation Lie algebroid is known as the Reeb class.\par 
Moving to derived cyclic cohomology of the foliation, the derived $n$-cochains are elements of 
$$C^n(\Pi A, \rmd_{\mathcal{F}}):= \InHom_{\catname{NFSVec}}\big( (\Omega^\bullet(\mathcal{F}))^{\widehat{\otimes}^{n+1}}, \R\big)\,.$$
Some comments:
\begin{enumerate}[i)]
\item Derived $0$-cochains are generalised foliation currents; we recover classical foliation currents when restricting to leaf-wise top-forms with compact support.  
\item Derived cyclic $0$-cocycles, via Lemma \ref{lem:0Cocycles} (or Proposition \ref{prop:CochainsInv}), satisfy $\tau \circ \rmd_{\mathcal{F}}=0$. A foliation current that vanishes on leaf-wise exact forms is a closed foliation current (or a holonomy-invariant transverse current), see  Ruelle \& Sullivan \cite{Ruelle:1975}. Thus, derived cyclic $0$-cocycles are generalised closed foliation currents. 
\item Derived cyclic $n$-cocycles ($n \geq 1$) are interpreted as generalised closed foliation multiform currents. Their precise interpretation in terms of classical foliation theory is currently unclear. As far as the author is aware, multilinear functionals of leaf-wise differential forms have not been considered before. 
\end{enumerate}
\begin{example}\label{exa:T2}
Consider the torus $T^2 = S^1 \times S^1$ which we equip with angular coordinates $(\vartheta, \varphi)\in [0, 2\, \pi) \times [0, 2 \, \pi)$. The foliation we consider is the horizontal foliation given by $\varphi = \textrm{constant}$. The associated foliation Lie algebroid $\Pi A$ has adapted coordinates $(\vartheta, \varphi, \rmd \vartheta)$.  The homological vector field is given by  $\rmd_\mathcal{F} = \rmd \vartheta \partial_\vartheta$. As any leaf-wise (pseudo-)differential form has the structure $\alpha= \alpha_0(\vartheta, \varphi) + \rmd \vartheta \alpha_1(\vartheta, \varphi)$, we observe that  $\alpha \star \beta = \rmd \vartheta \partial_\vartheta \alpha_0 (\vartheta, \varphi)\, \beta_0(\vartheta, \varphi)$ as there are no non-zero leaf-wise $2$-forms.  We remark this also implies that $\alpha \star \beta \star \gamma =0$.  We will fix an orientation.\par 
Let $\tau$ be a derived cyclic $0$-cocycle. We have the decomposition $\tau(\alpha) = \tau_0(\alpha_0)+ \tau_1(\rmd \vartheta \alpha_1)$. Furthermore, we know that $\tau \circ \rmd_\mathcal{F} =0$.
\begin{itemize}
\item $\tau_0$ is unconstrained as there are no leaf-wise $-1$-forms, i.e., functions are never $\rmd_\mathcal{F}$-exact.
\item $\tau_1$ must annihilate $\vartheta$-derivatives. Via the standard theory of distributions on $S^1$ (which is obviously compact), we know that any functional that vanishes on exact one-forms must be a multiple of the integral over $S^1$. Thus, 
$$\tau_1(\rmd \vartheta \alpha_1) = \int_{T^2} \rmd \vartheta \, \rmd \varphi \, \mu(\varphi) \, \alpha_1(\vartheta, \varphi)\,,$$
where $\mu(\varphi)$ is a transverse invariant measure (via \cite{Ruelle:1975}). That is, $\tau_1$ is a classical  Ruelle--Sullivan current. 
\end{itemize}
To construct a derived cyclic $1$-cocycle, we set $\tau_0 =0$ and fix some transverse invariant measure $\mu(\varphi)$. That is, we will set $\tau$ to be a classical  Ruelle--Sullivan current.  We then select a vector field tangent to the leaves $X \in \Vect_{\mathcal{F}}(T^2)$. The leaf-wise Lie derivative is defined as $\mathcal{L}^{\mathcal{F}}_X :=  [\rmd_\mathcal{F}, i_X]$. We write $X = f(\vartheta, \varphi)\partial_\vartheta$.  Note that for $\gamma := \rmd \vartheta \, \gamma_1(\vartheta, \varphi)$, we have $\mathcal{L}^\mathcal{F}_X \gamma = \rmd \vartheta \partial_\vartheta (f \gamma_1)$. Then, for $\alpha = \alpha_0 + \rmd \vartheta \alpha_1$ we see that 
$$\tau \big( \mathcal{L}^\mathcal{F}_X \alpha \big)= \int_{T^2} \rmd \vartheta \rmd \varphi\, \mu(\varphi) \, \partial_\vartheta(f \alpha_1)\,.$$
Via the Fundamental Theorem of Calculus applied to the closed loop $S^1$, $\tau \big( \mathcal{L}^\mathcal{F}_X \alpha \big)=0$. Thus, $\tau$ is invariant under the leaf-wise Lie derivative.  We then define 
$$\phi^1_X(\alpha, \beta) := \tau\big(\alpha \star \mathcal{L}^{\mathcal{F}}_X \beta\big)= \tau\big(\alpha_0 \star X(\beta_0)\big) =  \int_{T^2} \rmd \vartheta \rmd \varphi\, \mu(\varphi) \,f(\vartheta, \varphi) \big(\partial_\vartheta(\alpha_0)\big) \big(\partial_\vartheta(\beta_0)\big) \,.$$
As the only non-zero components of the functional are related to $\alpha_0$ and $\beta_0$, we will consider $\phi^1_X(\alpha_0, \beta_0)$. Note that using the shifted grading $\overline{\alpha}_0 = \overline{\beta}_0 =1$. Via inspection, we see that $\phi^1_X(\alpha_0, \beta_0) = \phi^1_X(\beta_0, \alpha_0)$, and thus $\phi^1_X$ is shifted cyclic.\par 
Checking the action of the Hochschild coboundary operator, we have by definition
$$(b \phi^1_X)(\alpha, \beta, \gamma ) := \phi^1_X(\alpha \star \beta, \gamma) - \phi^1_X(\alpha, \beta \star \gamma)+(-1)^{\overline{\gamma}\, (\overline{\alpha} + \overline{\beta})}\, \phi^1_X(\gamma \star \alpha, \beta)\,.$$
By construction, $\phi^1_X$  is only non-zero if both entries contain zero-forms (functions). However, $\alpha \star \beta$, $\beta \star \gamma$  and $\gamma \star \alpha$ are one-forms. Thus, $b\phi^1_X =0$, and  so $\phi^1_X$ is a derived cyclic $1$-cocycle.
\end{example}
Example \ref{exa:T2} suggests a method of constructing derived cyclic $n$-cocycles for more general regular foliations provided the modular class of the foliation Lie algebroid vanishes. Suppose $\dim N = d$ and the leaves of the foliation are of dimension $p$. We set $k:= d-p$. We will take $N$ compact to avoid having to employ differential forms with compact support. We assume that $N$ is orientable and an orientation has been chosen.  For clarity, let us employ adapted coordinates $(x^a, y^\alpha, \rmd x^b)$ on $\Pi A \subset \Pi \sT N$. We select a classical Ruelle--Sullivan current, i.e., we define (locally)
\begin{equation}
\tau(\alpha) := \int_N \rmd^px\, \rmd^ky ~ \mu(y)\, \alpha_p(x,y)\,.
\end{equation}
As we take $N$ to be compact, $\tau\big(\mathcal{L}^\mathcal{F}_X \alpha \big)=0$, where $\mathcal{L}^\mathcal{F}_X := [\rmd_\mathcal{F}, i_X]$, where $X \in \Vect_\mathcal{F}(N)$. Suppose that for any $n \geq 1$ we have a collection of vector fields tangent to the leaves $\boldsymbol{X}:= \{X_1, X_2, \cdots , X_n  \}$ that pairwise commute, i.e. $[X_i, X_j]=0$. We then  define 
\begin{equation}\label{eqn:DerCycNCoc}
\phi^n_{\boldsymbol{X}}(\alpha_0, \alpha_1, \cdots , \alpha_n) := \sum_{\sigma\in S_n} \mathrm{sgn}(\sigma)\, \tau \big( \alpha_0 \star \mathcal{L}^\mathcal{F}_{X_{\sigma(1)}}\alpha_1 \star \cdots \star \mathcal{L}^\mathcal{F}_{X_{\sigma(n)}}\alpha_n \big)\,.
\end{equation}
Note that the Lie derivatives (as vector fields on $\Pi A$) are (shifted) even.  We observe that the above is an application of Connes' construction of a cyclic $n$-cocycle from an invariant trace and commuting derivations (see \cite{Connes:1985}). Thus, $\phi^n_{\boldsymbol{X}}$ defines a derived cyclic $n$-cocycle. \par   
We remark that Connes applied cyclic cohomology to the noncommutative convolution algebra constructed from the holonomy groupoid of a foliation (see \cite{Connes:1994}).  The derived cyclic cohomology is quite different and built from functionals of leaf-wise pseudo-differential forms.
\begin{example}\label{exa:TorusBund}
Let $N$ be a compact manifold with a free and proper action of the Lie group $T^p$. By the Quotient Manifold Theorem, the orbit space $N\slash T^p$ is a smooth manifold and $N \rightarrow N \slash T^p$ is a smooth submersion; we of course obtain a principal bundle. In particular, we have a regular foliation whose leaves are diffeomorphic to $T^p$.  It is a standard result that the Reeb class of such foliations vanishes - the foliation can be equipped with an auxiliary invariant metric rendering the foliation a Riemannian foliation. The Lie algebra of $T^p$ is  $\mathfrak{t} \cong\R^p$. By choosing an ordered basis compatible with the integral lattice $\{e_1, \cdots , e_p \}$, we construct a global frame of fundamental vector fields   $\boldsymbol{X} := \{ X_1, \cdots , X_p\}$ on $N$ using the standard method. These vector fields are tangent to the foliation and all pairwise commute, i.e, $\mathfrak{t}$ is abelian. For $1\leq n < p$, we define all the possible subsets $\boldsymbol{X}_{n, i} \subset \boldsymbol{X}$ consisting of $n$ elements.  We then define $\phi^n_{\boldsymbol{X}_{n,i}}$ via \eqref{eqn:DerCycNCoc}. For $n=p$, we define $\phi^p_{\boldsymbol{X}}$. Any change of ordered basis for $\mathfrak{t}$ is implemented on the fundamental vector fields via $g \in \mathsf{GL}(p, \Z)$. Note $\det(g) = \pm 1$. Thus, under change of the ordered basis we have $\phi^p_{\boldsymbol{X}} =  \det(g)\,\phi^p_{\boldsymbol{X}'} = \pm \phi^p_{\boldsymbol{X}} $. Thus, the derived cyclic $p$-cocycle is canonical up to an overall sign (the orientation of $T^p$).  From a starting basis, we have chosen an orientation of $T^p$, and we may consider changes of bases that respect the orientation and so $\det(g) = +1$. With this restriction, the derived cyclic $p$-cocycle is canonical (up to the choice of classical current).  
\end{example}
Example \ref{exa:TorusBund} generalises as follows. Suppose we have a free and proper action of a compact Lie group $G$ on a compact manifold $N$. As standard, we have a principal $G$-bundle. The modular class of the associated foliation Lie algebroid vanishes; one can equip the foliation with an auxiliary invariant metric which can be used to construct a transverse invariant volume form and, in turn, a  classical Ruelle--Sullivan current. The Cartan subalgebra $\mathfrak{t}\subset \mathfrak{g}$ (the largest abelian Lie algebra within $\mathfrak{g}$) generates a set of $r$ pairwise commuting vector fields via a choice of basis. Here $r$ is the rank of $G$.  We can then build derived cyclic $n$-cocycles up to degree $n = r$.  The $n = r$ derived cyclic cocycle is canonical up to a non-zero scalar multiple determined by the choice of ordered basis of the Cartan subalgebra, i.e., up to the determinant of the change of ordered basis, and a choice of the classical current.
%
%
\subsection{Applications to Deformations: Infinitesimal Deformations of Q-manifolds}
Let $(M,Q)$ be a Q-manifold and let $X \in \Vect(M)$ be an odd (infinitesimal) symmetry, i.e., an odd vector field that satisfies $\mathcal{L}_X Q = \mathcal{L}_Q X= 0$.  It is well known that the deformation problem for homological vector fields is controlled by the differential Lie algebra $(\Vect(M), [-,-], \mathcal{L}_Q)$. That is, we can infinitesimally deform $Q$ using an odd (infinitesimal) symmetry, i.e., using an odd vector field that is $\mathcal{L}_Q$-closed. We will return to this shortly. \par 
Suppose we select a (non-zero) $\tau \in HC^0_\lambda(M,Q)$. From this specified pair, we construct a derived cyclic $1$-cochain
\begin{equation}\label{eqn:1CoChain}
\phi(f_0, f_1) :=  \tau \big( (-1)^{\widetilde{f}_0} \, X(f_0 f_1)\big)\in C^1_\lambda(M,Q)\,.
\end{equation}
A quick calculation, which we omit, shows that $\phi(f_0, f_1) = - (-1)^{\overline{f}_0 \overline{f}_1}\, \phi(f_1, f_0)$, and so $\phi$ is indeed shifted cyclic. 
\begin{lemma}\label{lem:1CoCycle}
Let $(M,Q)$ be a Q-manifold and $X \in \Vect(M)$ be an odd infinitesimal symmetry. The derived cyclic $1$-cochain $\phi(f_0, f_1) :=  \tau \big( (-1)^{\widetilde{f}_0} \, X(f_0 f_1)\big)$ is a derived cyclic $1$-cocycle.
\end{lemma}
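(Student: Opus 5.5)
The plan is to compute $b\phi$ directly from \eqref{eqn:HochCobOp} with $n=1$ and show that it collapses to a single term of the form $\pm\,\tau\big(X Q(f_0f_1f_2)\big)$. The hypotheses then kill that term. We use only three facts. First, $\tau\circ Q=0$, by Lemma \ref{lem:0Cocycles}, since $\tau$ is a derived cyclic $0$-cocycle. Second, for odd $X$ the condition $[X,Q]=0$ reads $X\circ Q=-Q\circ X$. Third, $C^\infty(M)$ is supercommutative, so the derived product is ordinary multiplication by $Q(f)$ with a sign. Shifted cyclicity of $\phi$ is already noted before the statement, so only $b\phi=0$ remains.

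Step one: for homogeneous $f_0,f_1,f_2$, write out
$$(b\phi)(f_0,f_1,f_2)=\phi(f_0\star f_1,f_2)-\phi(f_0,f_1\star f_2)+(-1)^{\overline{f}_2(\overline{f}_0+\overline{f}_1)}\,\phi(f_2\star f_0,f_1).$$
Each argument product inside $\phi$ becomes a product of three functions in which exactly one factor carries a $Q$. Concretely these are $Q(f_0)f_1f_2$, $f_0Q(f_1)f_2$, and $Q(f_2)f_0f_1$, each with a sign coming from Definition \ref{def:ShiftedMult} and from the prefactor $(-1)^{\widetilde{g}}$ in $\phi(g,h)$.

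Step two: in the third term, move $Q(f_2)$ past $f_0f_1$. Supercommutativity gives the factor $(-1)^{(\widetilde{f}_2+1)(\widetilde{f}_0+\widetilde{f}_1)}$. I expect this to combine with the shifted cyclic sign $(-1)^{\overline{f}_2(\overline{f}_0+\overline{f}_1)}$ and the remaining parity signs to leave exactly the Leibniz sign $(-1)^{\widetilde{f}_0+\widetilde{f}_1}$. This is the sign with which $f_0f_1Q(f_2)$ appears in $Q(f_0f_1f_2)$.

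Step three: check similarly that the first two terms carry the Leibniz signs $1$ and $(-1)^{\widetilde{f}_0}$, up to one common overall sign $\epsilon$ (which I expect to be $(-1)^{\widetilde{f}_1}$ or similar). Linearity of $X$ and $\tau$ then gives
$$(b\phi)(f_0,f_1,f_2)=\epsilon\,\tau\big(X Q(f_0f_1f_2)\big)=-\epsilon\,\tau\big(Q X(f_0f_1f_2)\big)=0.$$

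The main obstacle is entirely the sign bookkeeping in steps two and three. In particular, one must confirm that the relative signs between the three terms are exactly those of the graded Leibniz expansion of $Q(f_0f_1f_2)$. If they are not, the leftover combination is not $Q$-exact and the argument fails.

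To manage this I would first do the case where all $f_i$ are even. There every sign is easy to check: $f\star g=Q(f)g$, and the cyclic sign is $(-1)^{1\cdot 2}=1$. I would then track how each sign changes when the parity of a single $f_i$ flips. As a fallback, one can avoid signs entirely by first reducing to $f_2=1$ and general $f_0,f_1$, and then using multilinearity with the $C^\infty(M)$-module structure. I would attempt the direct sign check first.
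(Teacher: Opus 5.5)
Your plan is the paper's proof: expand $b\phi$, reorder $Q(f_2)f_0f_1$ (the reordering sign cancels the cyclic sign $(-1)^{\overline{f}_2(\overline{f}_0+\overline{f}_1)}$ exactly), recognise $(-1)^{\widetilde{f}_1+1}\,\tau\big(XQ(f_0f_1f_2)\big)$, and finish with $XQ=-QX$ and $\tau\circ Q=0$. The sign check you deferred does close: the three terms carry $(-1)^{\widetilde{f}_1+1}$, $(-1)^{\widetilde{f}_0+\widetilde{f}_1+1}$ and $(-1)^{\widetilde{f}_0+1}$, i.e.\ $\epsilon=-(-1)^{\widetilde{f}_1}$ times the Leibniz signs, so your fallback reduction to $f_2=1$ is unnecessary (and would in any case need a $C^\infty(M)$-linearity that neither $\tau$ nor $\phi$ has).
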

\begin{proof}
From the definition of the Hochschild coboundry operator \eqref{eqn:HochCobOp}, we have  
\begin{align*}
 b \phi(f_0,f_1, f_2 ) &= \phi(f_0 \star f_1, f_2) - \phi(f_0, f_1 \star f_2) + (-1)^{(\widetilde{f}_2 +1)(\widetilde{f}_0 + \widetilde{f}_1)}\, \phi(f_2 \star f_0, f_1)\\
& = \tau \big( (-1)^{\widetilde{f}_1 + 1} \, X(Q(f_0)f_1 f_2) + (-1)^{\widetilde{f_0} + \widetilde{f}_1+1}\,X(f_0 Q(f_1)f_2) \\ 
& + (-1)^{(\widetilde{f}_2 +1)(\widetilde{f}_0 + \widetilde{f}_1) + \widetilde{f}_0 + 1 }\,  X (Q(f_2) f_0 f_1) \big)\\
& = \tau \big( (-1)^{\widetilde{f}_1 + 1} \, X(Q(f_0)f_1 f_2) + (-1)^{\widetilde{f_0} + \widetilde{f}_1+1}\,X(f_0 Q(f_1)f_2) \\ 
& + (-1)^{\widetilde{f}_0 + 1 }\,  X ( f_0 f_1 Q(f_2)) \big)\\
&= \tau \big( (-1)^{\widetilde{f}_1+1}\, X (Q(f_0 f_1 f_2))\big)\\
\intertext{Then using $[Q, X] =0$ and Lemma \ref{lem:0Cocycles}, i.e., $\tau \circ Q =0$, we obtain}
& = \tau \big( Q((-1)^{\widetilde{f}_1}\, X(f_0 f_1 f_2) )\big) =0.
\end{align*} 
\end{proof}
Using an odd infinitesimal symmetry, we can infinitesimally deform $Q$ as
\begin{equation}
Q_{(t)} :=  Q + t \, X\,,
\end{equation}
where $\widetilde{t} =0$ and we drop terms $\cO(t^2)$ and higher. Directly we see that
$$Q_{(t)}^2  = (Q + t \, X)^2 =  Q^2 + t \big( Q X + X Q \big) + \cO(t^2) = t\, [Q, X] + \cO(t^2) = t\, \mathcal{L}_Q X +\cO(t^2)\,.$$
Thus, up to first-order in $t$,  $Q^2_{(t)}=0$. The derived product is similarly deformed 
$$f\star_{(t)} g = f \star g + (-1)^{\widetilde{f}}\, t \, X(f)g\,,$$
and leads to a deformed Hochschild coboundary operator which we denote as $b_{(t)}$ (which is well-defined dropping terms $\cO(t^2)$ and higher). \par   
Suppose we wish to look for an infinitesimally deformed derived cyclic $0$-cocycle $\tau_{(t)} := \tau + t \, \sigma$, with $\sigma \in C^0_\lambda(M,Q)$. By definition, we have that $b_{(t)}\tau_{(t)}=0$ (to first-order).
\begin{lemma}\label{lem:bClosed}
Let $(M,Q)$ be a Q-manifold, and $X \in \Vect(M)$ be an odd infinitesimal symmetry. Assuming $\tau_{(t)} := \tau + t \, \sigma$ is an infinitesimally deformed derived cyclic $0$-cocycle, then $\phi$ is $b$-exact.
\end{lemma}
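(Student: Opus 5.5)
The plan is to expand $b_{(t)}\tau_{(t)} = 0$ to first order in $t$ and read off the coefficient of $t$. Recall that for a $0$-cochain $\psi$ and any derived product $\star'$ we have
$$(b'\psi)(f_0,f_1) = \psi(f_0\star' f_1) - (-1)^{\overline{f}_0\,\overline{f}_1}\,\psi(f_1\star' f_0).$$
Since $b'$ is linear in the product, write $\star_{(t)} = \star + t\,\star_X$ with $f\star_X g := (-1)^{\widetilde f}X(f)g$. This gives $b_{(t)} = b + t\, b_X$, where $b_X$ is the analogous coboundary-type expression built from $\star_X$. Then
$$b_{(t)}\tau_{(t)} = b\tau + t\,\big(b\sigma + b_X\tau\big) + \cO(t^2).$$
The zeroth-order term vanishes because $\tau$ is a derived cyclic $0$-cocycle. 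So the hypothesis $b_{(t)}\tau_{(t)}=0$ to first order is equivalent to $b\sigma = -\,b_X\tau$.

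The key step is to identify $b_X\tau$ with $\phi$, up to a universal sign. Compute
$$(b_X\tau)(f_0,f_1) = \tau\big((-1)^{\widetilde f_0}X(f_0)f_1\big) - (-1)^{\overline f_0\overline f_1}\,\tau\big((-1)^{\widetilde f_1}X(f_1)f_0\big).$$
Move $f_0$ past $X(f_1)$ using supercommutativity; since $X$ is odd, this costs $(-1)^{\widetilde f_0(\widetilde f_1+1)}$. Expanding $\overline f_0\overline f_1=(\widetilde f_0+1)(\widetilde f_1+1)$, the total sign on the second term should collapse to $-(-1)^{\widetilde f_0+\widetilde f_0}$. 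The second term is then $(-1)^{\widetilde f_0}(-1)^{\widetilde f_0}f_0X(f_1)$, i.e.\ it recombines with the first via the graded Leibniz rule
$$X(f_0f_1)=X(f_0)f_1+(-1)^{\widetilde f_0}f_0X(f_1).$$
The expected outcome is $(b_X\tau)(f_0,f_1) = \tau\big((-1)^{\widetilde f_0}X(f_0f_1)\big) = \phi(f_0,f_1)$, possibly up to an overall sign fixed by the conventions. This sign bookkeeping is the main obstacle. I would verify it on homogeneous $f_0,f_1$ of each parity combination, and if a residual sign remains I would absorb it into the statement as $\phi = \pm b\sigma$, which does not affect exactness.

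With $b_X\tau = \phi$ established, the first-order condition reads $\phi = b(-\sigma)$. Since $\sigma\in C^0_\lambda(M,Q)$ is a derived cyclic $0$-cochain, $-\sigma$ is as well, so $\phi$ is $b$-exact within the derived cyclic complex. Lemma \ref{lem:1CoCycle} guarantees that $\phi$ is a cocycle, so this is a consistent statement about its class: $[\phi]=0$ in $HC^1_\lambda(M,Q)$. This is exactly the obstruction interpretation announced in Theorem \ref{the:ObsDeform}.
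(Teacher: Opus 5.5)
Your proposal is correct and follows the paper's own argument: expand $b_{(t)}\tau_{(t)}$ to first order in $t$, use $b\tau=0$, identify the remaining first-order term with $\phi$, and conclude $\phi=-b\sigma=b(-\sigma)$. Your explicit sign check goes further than the paper, which simply asserts that identification, and the sign comes out exactly: the second term of $b_X\tau$ contributes $+\tau\big(f_0X(f_1)\big)$, so $b_X\tau=\phi$ on the nose and the hedge ``$\phi=\pm b\sigma$'' is unnecessary.
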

\begin{proof}
Expanding $b_{(t)}\tau_{(t)}=0$ and retaining only the zeroth and first-order in $t$ terms we have
\begin{align*}
b_{(t)}\tau_{(t)}(f_0, f_1) &= b\tau(f_0, f_1) + t \, \big( b\sigma(f_0, f_1) + \tau((-1)^{\widetilde{f}_0} \, X(f_0 f_1)  )  \big)\\
\intertext{Then using $b \tau =0$ and the definition of $\phi$ \eqref{eqn:1CoChain}, we have}
&= t \, \big( b\sigma(f_0, f_1) + \phi(f_0 ,f_1)  )  \big) =0\,.
\end{align*} 
Thus, as $f_0$ and $f_1$ are arbitrary, we have $\phi = - b \sigma$, and the lemma is established.
\end{proof}
\begin{theorem}\label{the:ObsDeform}
Let $(M,Q)$ be a Q-manifold, $X \in \Vect(M)$ be an odd infinitesimal symmetry, and $\tau$ be a derived cyclic $0$-cocycle. Then an infinitesimally deformed derived cyclic $0$-cocycle $\tau_{(t)} = \tau + t\, \sigma $  exists if and only if the derived cyclic class $[\phi]$ is zero.
\end{theorem}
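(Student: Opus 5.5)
The plan is to prove the two implications separately. The computation in Lemma \ref{lem:bClosed} does most of the work in both directions; Lemma \ref{lem:1CoCycle} guarantees that $[\phi]\in HC^1_\lambda(M,Q)$ makes sense.

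\emph{Necessity.} Suppose an infinitesimally deformed cocycle $\tau_{(t)} = \tau + t\,\sigma$ with $\sigma \in C^0_\lambda(M,Q)$ exists. The first step is to expand $b_{(t)}\tau_{(t)}$ to first order in $t$. Since $f\star_{(t)} g = f\star g + t(-1)^{\widetilde f}X(f)g$, the term of order $t$ coming from $\tau$ is
$$\tau\big((-1)^{\widetilde f_0}X(f_0)f_1\big) - (-1)^{\overline f_0\overline f_1}\,\tau\big((-1)^{\widetilde f_1}X(f_1)f_0\big).$$
Because $X$ is odd, the Leibniz rule combined with the shifted sign should reassemble this into $\tau\big((-1)^{\widetilde f_0}X(f_0f_1)\big) = \phi(f_0,f_1)$. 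This is exactly the identity used in Lemma \ref{lem:bClosed}, and I would verify the sign bookkeeping once here. Lemma \ref{lem:bClosed} then gives $\phi = -b\sigma = b(-\sigma)$. Since $-\sigma$ is a derived cyclic $0$-cochain, $\phi$ is a coboundary in the cyclic complex, and so $[\phi]=0$.

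\emph{Sufficiency.} Suppose $[\phi]=0$ in $HC^1_\lambda(M,Q)$. By definition there is some $\psi\in C^0_\lambda(M,Q)$ with $\phi = b\psi$. I will set $\sigma := -\psi$ and $\tau_{(t)} := \tau + t\sigma$. Running the expansion above again gives
$$b_{(t)}\tau_{(t)} = b\tau + t\,(b\sigma + \phi) + \mathcal{O}(t^2).$$
Here $b\tau = 0$ because $\tau$ is a derived cyclic $0$-cocycle, and $b\sigma + \phi = -b\psi + b\psi = 0$. Hence $b_{(t)}\tau_{(t)} = 0$ to first order, so $\tau_{(t)}$ is an infinitesimally deformed derived cyclic $0$-cocycle. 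Cyclicity of $\tau_{(t)}$ is automatic, since every $0$-cochain is $\lambda$-invariant.

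The main obstacle is the first-order expansion, namely checking that the cross term from the deformed product really equals $\phi$ with the correct sign. I would do this by expanding $X(f_0f_1) = X(f_0)f_1 + (-1)^{\widetilde f_0}f_0X(f_1)$ and using supercommutativity to move $f_0$ past $X(f_1)$. The sign $(-1)^{\overline f_0\overline f_1}$ from $b$ should then cancel the resulting Koszul signs. Everything else is formal, and I will work throughout modulo $\mathcal{O}(t^2)$.
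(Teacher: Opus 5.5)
Your proposal is correct and follows the paper's proof essentially verbatim. Necessity comes from the first-order expansion of Lemma~\ref{lem:bClosed}, which gives $\phi=-b\sigma$, and sufficiency comes from taking $\sigma=-\psi$ for a primitive $\psi$ of $\phi$. The sign check you flag does go through: $-(-1)^{\overline{f}_0\overline{f}_1+\widetilde{f}_1}X(f_1)f_0 = f_0X(f_1)$, so the cross term is exactly $\tau\big((-1)^{\widetilde{f}_0}X(f_0f_1)\big)=\phi(f_0,f_1)$.
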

\begin{proof}
Lemma \ref{lem:1CoCycle} established that $\phi$ is a derived cyclic $1$-cocycle, and so defines a derived cyclic class $[\phi] \in HC^1_\lambda(M,Q)$. Lemma \ref{lem:bClosed} establishes that if $\tau_{(t)}$ exists, then $\phi = - b \sigma$, which directly implies that the relevant derived cyclic class is zero. Conversely,  if  $\phi = - b \sigma$ for some $\sigma \in C^0_\lambda(M,Q)$, i.e, $[\phi]$ is zero, then we may define   $\tau_{(t)} = \tau + t\, \sigma$ such that $b_{(t)}\tau_{(t)}=0$. 
\end{proof}
\begin{definition}
The derived cyclic class  $[\phi] \in HC^1_\lambda(M,Q)$ defined by the derived cyclic $1$-cocycle $\phi$ (see \eqref{eqn:1CoChain}) is referred to as the \emph{deformation class} of $\tau$ along $X$.
\end{definition}
%
%
\section{Concluding Remarks}\label{sec:ConRem}
We examined the notion of a derived product in the setting of Q-manifolds.  As this multiplication is Grassmann odd, we require a shift in the grading to obtain what we referred to as a derived associative algebra.  This derived product is inherently noncommutative.  Alongside other results and constructions, we applied cyclic cohomology to the derived associative algebra.  Note that this cohomology theory is distinct from the standard cohomology of a Q-manifold. \par 
In standard cyclic cohomology applied to a noncommutative algebra, cyclic $n$-cocycles are analogues of $n$-dimensional de Rham currents, i.e., linear functionals on differential forms. Derived cyclic $0$-cocycles are understood as BRST-invariant (shifted) traces. Derived higher cyclic $n$-cocycles are interpreted as BRST-invariant currents, i,e., multi-linear functionals on functions that vanish on BRST-exact functions.  It was shown that the infinitesimal deformation problem for derived cyclic $0$-cocycles is controlled by the derived cyclic cohomology.  A little more carefully, classes in $HC^1_\lambda(M,Q)$ classify the obstruction to finding infinitesimally deformed cyclic $0$-cocycles. \par 
\smallskip 
\noindent \textbf{Future Directions:}\
\begin{enumerate}[i)]
\item Explicit examples of derived cyclic $n$-cocycles ($n\geq 1$) and derived cyclic cohomology groups require construction/calculation. Including (multi-)weights, for example ghost and antifield number, etc., may simplify the situation and give clearer links with physics and gauge mechanics. Moreover, the application of derived cyclic cohomology in the theory of (regular) foliations awaits proper exploration. However, as the derived associative algebra is non-unital, we realise that the standard methods/tools of calculating Connes' cyclic cohomology cannot be directly applied (once the grading and shifted cyclic properties are taken into account). For example, the typical methods of Connes' SBI sequence, mixed complexes using Connes' periodicity operator, Morita theory, and application of the K\"unneth theorems, require the algebras to be unital. Thus, we either have to work with less standard  frameworks that can cope with non-unital algebras, or adjoin a unit element (see Subsection \ref{subsec:AdjUniEle}). It must still be stressed that (standard) Hochschild cohomology and cyclic cohomology for noncommutative algebras are, in general, very difficult to calculate. In particular, it is common to have infinite-dimensional cochain complexes that grow exponentially in size, which make direct computations largely intractable. 
\item We cannot directly use K\"unneth-type decompositions to simplify the calculation of the derived cyclic cohomology of a Q-manifold; for the case of cyclic cohomology of superalgebras see Kassel \cite{Kassel:1986}. For one, the standard decompositions require the algebras to be unital, though this can be circumvented by the unitalisation as stated above.  However, there is a more fundamental reason.  Let us set $A = \big ( C^\infty(M_1), \star _1\big)$ and  $B = \big ( C^\infty(M_2), \star _2\big)$ simply considered associative algebras.  Then in the standard  K\"unneth-type formula we encounter  $A \otimes B$ (we ignore completions of the tensor product for this discussion), where the algebra is defined component-wise as
$$(a_1 \otimes b_1) \star (a_2 \otimes b_2) = \pm (a_1 \star_1 a_2)\otimes (b_1 \star_2 b_2)\,. $$
The issue is that this component-wise multiplication is structurally distinct from the derived product on $C^\infty(M_1 \times M_2)$ generated by $Q= Q_1 + Q_2$.  In particular, 
$$(f_1 \otimes f_2) \star (g_1 \otimes g_2) =  (f_1 \star_1 g_1) \otimes (f_2 g_2) \pm (f_1 g_1)\otimes (f_2 \star_2 g_2)\,,$$
as $Q$ acts as a derivation over the tensor product. This implies that ``off-the-shelf'' decompositions of the cohomology cannot be directly applied to derived cyclic cohomology. Thus, developing a modified K\"unneth-type isomorphism  for derived cyclic cohomology is highly desirable to aid computations, assuming such an isomorphism exists.  Developing the formalism in the context of differential graded algebras would allow the subtle issues with functional analysis to be avoided.
\item Formally, the notions and constructions of this work generalise to infinite-dimensional Q-manifolds such as the Fr\'{e}chet or convenient supermanifolds of fields and antifields encountered in the BV--BRST framework. A rigorous treatment of the derived cyclic cohomology will require a restriction to local functionals to be applied to field theory. Developing  `local derived cyclic cohomology' may yield new algebraic invariants for classical gauge field theories and their space of observables.
\end{enumerate}
In conclusion, derived cyclic cohomology of Q-manifolds ties to central ideas in differential geometry and mathematical physics. Given how prevalent Q-manifolds are, it is expected that further applications than those started here will be uncovered.
%
%
\section*{Acknowledgements}  
The author cordially thanks Janusz Grabowski and Steven Duplij for comments on earlier drafts of this work.

%
%

\end{document}